\title{
	Optimal Record and Replay under Causal Consistency
	\thanks{
		This research is supported in part by National Science Foundation award 1409416, and Toyota InfoTechnology Center. Any opinions, findings, and conclusions or recommendations expressed here are those of the authors and do not necessarily reflect the views of the funding agencies or the U.S. government.
	}
}
\author[1]{
	Russell L. Jones
}
\author[2]{
	Muhammad S. Khan
}
\author[3]{
	Nitin H. Vaidya
}
\affil[1]{
	Department of Electrical and Computer Engineering
}
\affil[2]{
	Department of Computer Science
}
\affil[ ]{
	University of Illinois at Urbana-Champaign \protect\linebreak
	\texttt{\{rjones27,mskhan6\}@illinois.edu} \protect\linebreak
}
\affil[3]{
	Department of Computer Science \protect\linebreak
	Georgetown University \protect\linebreak
	\texttt{nv198@georgetown.edu}
}
\newtheorem{theorem}{Theorem}[section]
\newtheorem{lemma}[theorem]{Lemma}
\newtheorem{definition}[theorem]{Definition}
\newtheorem{claim}[theorem]{Claim}
\newtheorem{observation}[theorem]{Observation}
\renewenvironment{proof}{\noindent{\bf Proof:} \hspace*{1mm}}{
	\hspace*{\fill} $\Box$ }
\newenvironment{proof_of}[1]{\noindent {\bf Proof of #1:}
	\hspace*{1mm}}{\hspace*{\fill} $\Box$ }
\newenvironment{proof_claim}{\begin{quotation} \noindent{\bf Proof:}}{
		\hspace*{\fill} $\diamond$ \end{quotation}}
\newcommand{ \DRO }[1]{ <_{ DRO(#1) } }
\newcommand{ \SCO }[1]{ \SCOi{}{#1} }
\newcommand{ \SCOi }[2]{ <_{ SCO_{#1}(#2) } }
\newcommand{ \SWO }[1]{ <_{ SWO(#1) } }
\newcommand{ \PO }{ <_{ PO } }
\newcommand{ \WO }{ <_{WO} }
\newcommand{ \V }{ \mathcal{V} }
\newcommand{ \pV }{ \mathcal{U} }
\newcommand{ \R }{ \mathcal{ R } }
\newcommand{ \A }{ \mathcal{ A } }
\newcommand{ \op }[2]{ {o}^{#1}_{#2}}
\newcommand{ \opr }[2]{ {r}^{#1}_{#2}}
\newcommand{ \opw }[2]{ {w}^{#1}_{#2}}
\newcommand{ \edge }[1]{ \del[0]{ #1 } }
\newcommand{ \trans }[1]{ \widehat{ #1 } }
\newcommand{ \cupdot }{ 
	\ensuremath{ \, \mathaccent \cdot \cup \, } 
}
\begin{document}
	\maketitle

	\begin{abstract}
		\normalsize
We investigate the minimum record needed to replay executions of processes
that share causally consistent memory.
For a version of causal consistency,
we identify optimal records under both offline and online recording setting.
Under the offline setting, a central authority has information about every process' view of the execution and can decide what information to record for each process.
Under the online setting, each process has to decide on the record at runtime as the operations are observed.


	\end{abstract}

	\section{Introduction} \label{section introduction}
In this paper, we explore {\em Record and Replay} (RnR) of multi-process applications where
processes communicate via shared memory.
Record and replay (RnR) mechanisms aim to allow parallel program debugging to proceed as follows.
The programmer runs the program, and potentially observes incorrect behavior.
The programmer then re-runs the program, while more closely watching the program state, and
attempts to discover where a program bug may have occurred.
However, even when a parallel program is re-executed with the same input, different
executions of the program may proceed differently, due to 
non-determinism introduced by the uncertainty in the delays incurred in performing various operations.
Thus, the observed bug may not re-occur during re-run, making it quite difficult to discover the cause of the original problem.
Record and Replay (RnR) aims to solve this problem by creating a {\em record} during the original execution, and using it during replay to guarantee that the re-run produces the same outcomes as the original execution.
In other words, while the original execution may be non-deterministic, the replay using the record
eliminates the non-determinism as desired.

There can be many sources of non-determinism in parallel programs.
For example user inputs, readings from sensors, random coin flips, etc.
However, in this paper we focus specifically on the non-determinism allowed by the shared memory consistency models in the read-write memory model.
For a given program, the shared memory consistency model defines a space of allowed executions possible when the program is run.
By creating a record during an execution and enforcing it in the replay, this space is further restricted hence reducing the inherent non-determinism.
The goal is to record enough from the original execution so as to reproduce the same outcomes in the replay.

The work in this paper is motivated by the trade-off between
the consistency model for shared memory and the amount of information
that must be recorded to facilitate a replay. 
A stronger consistency model imposes more constraints
on the execution, resulting in a smaller space of allowed executions. Intuitively, a stronger
consistency model should require a smaller record to resolve the non-determinism during replay.
In Section \ref{section causal consistency} we present an example execution to illustrate that
this intuition is indeed correct.
In prior work, Netzer \cite{Netzer1993Optimal} identified the minimum record necessary for RnR under
the \emph{sequential consistency} model \cite{LamportSequential}. The 
computer architecture research 
community has also
investigated RnR systems under various consistency models, for example \cite{Dunlap}, \cite{Torrellas}, and \cite{Lee:2010:REO:1736020.1736031}.
See also a survey by Chen et. al. \cite{Chen:2015:DRS:2830539.2790077}.
However, to the best of our knowledge, only Netzer's work \cite{Netzer1993Optimal} has addressed identification of \emph{minimum} record for RnR under read-write memory model.

This paper builds on Netzer's work to address the minimum record for correct replay under
{\em causal consistency}. Whether a certain record is necessary and sufficient for replay depends
on several factors, as discussed next. Lee et. al. \cite{Lee:2010:REO:1736020.1736031} have also
discussed a classification of RnR strategies 
and Chen el. al. \cite{Chen:2015:DRS:2830539.2790077} provide a taxonomy of deterministic replay schemes
.
\begin{enumerate}[label=\arabic*), wide, labelwidth=!, labelindent=0pt]
	\item \emph{How faithful should the replay be to the original execution?}
To understand the different scenarios that are plausible, let us consider an implementation
of shared memory. Suppose that each process maintains a local replica of the shared variables.
When a process writes to a shared variable, the new value is propagated to other processes
via update messages. The new value is eventually written at each replica, while ensuring
that the consistency model is obeyed. Figure \ref{figure how faithful replay}(a) illustrates an execution of two processes
that implement sequential consistency. In this case, in the original execution, $x$ is updated
to equal 1 due to the write operation $\opw{}{1}(x=1)$ by process $1$, and then $y$ is updated
to 2 due to the write operation $\opw{}{2}(y=2)$ by process $2$. Subsequently, process $1$ reads $y$ as 2 with the read operation $\opr{}{1}(y=2)$.
Figures \ref{figure how faithful replay}(b) and (c) show two possible replays of the execution in Figure \ref{figure how faithful replay}(a). Observe that, while the read returns the same value in both replays, the order in which the variables are updated
is different in the replay in Figure \ref{figure how faithful replay}(b) than the original execution. On the other hand, the
replay in Figure \ref{figure how faithful replay}(c) performs the updates in an identical order as in the original execution.

\begin{figure*}
	\centering
	{
		\begin{tikzpicture}
		\node at (-2, 2) {Process 1:};
		\node at (-2, 1) {Process 2:};
		\node (wx1) at (0, 2) {$\opw{}{1}(x=1)$};
		\node (wy2) at (3, 1) {$\opw{}{2}(y=2)$};
		\node (ry2) at (6, 2) {$\opr{}{1}(y=2)$};
		
		\path[->]
		(wx1) edge (wy2)
		(wy2) edge (ry2)
		;
		\end{tikzpicture}
		\subcaption{Original Execution}
		\hrule
		\vspace{\floatsep}
		\begin{tikzpicture}
		\node at (-2, 2) {Process 1:};
		\node at (-2, 1) {Process 2:};
		\node (wx1) at (3, 2) {$\opw{}{1}(x=1)$};
		\node (wy2) at (0, 1) {$\opw{}{2}(y=2)$};
		\node (ry2) at (6, 2) {$\opr{}{1}(y=2)$};
		
		\path[->]
		(wy2) edge (wx1)
		(wx1) edge (ry2)
		;
		\end{tikzpicture}
		\subcaption{Replay 1}
		\hrule
		\vspace{\floatsep}
		\begin{tikzpicture}
		\node at (-2, 2) {Process 1:};
		\node at (-2, 1) {Process 2:};
		\node (wx1) at (0, 2) {$\opw{}{1}(x=1)$};
		\node (wy2) at (3, 1) {$\opw{}{2}(y=2)$};
		\node (ry2) at (6, 2) {$\opr{}{1}(y=2)$};
		
		\path[->]
		(wx1) edge (wy2)
		(wy2) edge (ry2)
		;
		\end{tikzpicture}
		\subcaption{Replay 2}
		\hrule
		\vspace{\floatsep}
	}

	\caption{An example illustrating how different replays may reproduce the same read values. The arrows show the order in which the updates are propagated and read.}
	\label{figure how faithful replay}
\end{figure*}

Depending on whether we must reproduce the replay as in Figure \ref{figure how faithful replay}(b), or allow a replay as
in Figure \ref{figure how faithful replay}(c), the minimum record necessary will be different.
As one may expect, the record required for replay in Figure \ref{figure how faithful replay}(b) is smaller, since
the replay is not as faithful as that in Figure \ref{figure how faithful replay}(c).
Netzer's minimum record \cite{Netzer1993Optimal} for sequential consistency
allows the replay in Figure \ref{figure how faithful replay}(b), which
ensures that all the reads and writes to the same variable occur in the same order during replay as in the original execution.
However, the updates to different variables may not necessarily occur in the same order
during replay as in the original execution.

At a minimum, the read operations in the replay must return the same values as the corresponding read operations in the original execution.
This ensures that the program state for each process, and so the output, in the replay is the same as the one in the original execution (i.e., the same branches are taken in both the executions as the next step to be performed by a process depends on the current program state and the values read from shared memory) and so the replay is indistinguishable to the high-level user from the original execution.
We discuss the exact formal model for this work in Section \ref{section RnR model}.


	\item \emph{At what level of abstraction is the RnR system implemented?}
	The abstraction level where the RnR system operates influences what can and needs to be recorded.
	For instance, if the shared memory is implemented via message passing, then,
	for the purpose of RnR, we may treat this as a message-passing system and record
	messages rather than shared memory operations.
	In this case, the RnR system can be viewed
	as residing below the shared memory implementation.

	Alternatively, the RnR system may operate at the library level where the low level details, including interactions with the shared memory, are abstracted via the provided libraries.
	The RnR system is only allowed to record interactions with the APIs of the given libraries.
	We refer the reader to \cite{Chen:2015:DRS:2830539.2790077} Section 4.3 for a more detailed explanation of different abstract levels.

	In this paper, our focus is on RnR for the shared memory.
	In our model, the RnR system resides on top of the shared memory layer so that the inner workings of
	the shared memory are abstracted while the interactions with the shared memory, via the read and write operations on shared variables, is exposed.
	In this case, we assume that
	the RnR module may observe, at each process, the reads of that process
	and the writes of all the processes.

	\item \emph{Offline versus online recording.}
	In the offline setting, the RnR module is provided with a completed execution in its entirety, and can use this information to obtain a record that suffices for a correct replay.
	In the online setting, each process has its own RnR module that observes
	the execution incrementally, and must decide incrementally what information 
	must be recorded.
	The online record can be useful when, for example, the replay proceeds in 
	tandem with the original execution for redundancy purposes.
	Netzer's result \cite{Netzer1993Optimal} applies to both the offline and online setting for sequential consistency.
	In this paper, we consider both offline and online settings in the context of 
	causal consistency.
\end{enumerate}

A summary of our contributions is presented in Table
\ref{table strong causal summary}.
In this work, we present the optimal record for a version of causal consistency 
which we call \emph{strong causal consistency}.
This is formally defined in Section \ref{section consistency} and is followed 
by many practical implementations of causal consistency.
We consider both the RnR model for replay as in Figure
\ref{figure how faithful replay}(b) and as in Figure
\ref{figure how faithful replay}(c).
These are defined formally in Section \ref{section RnR model}.
Sequential consistency was considered by Netzer \cite{Netzer1993Optimal}.
We consider the first RnR model in Section \ref{section optimal record}.
In Sections \ref{section strong causal consistency} and 
\ref{section strong causal consistency online} we present the optimal records 
for strong causal consistency for the offline and online scenarios respectively.
The question of optimal record for causal consistency is still open and we 
discuss this in Section \ref{section causal consistency}.
We consider the second RnR model in Section \ref{section optimal record DRO} 
with optimal record for the offline case of strong causal consistency given in 
Section \ref{section strong causal consistency DRO} and the one for causal 
consistency discussed in Section \ref{section causal consistency DRO}.
We finish the paper with a discussion in Section \ref{section discussion}, 
along with some open problems.

\newcommand\T{\rule{0pt}{2.6ex}}       
\newcommand\B{\rule[-1.2ex]{0pt}{0pt}} 

\begin{table*}
\small
\centering
\begin{tabular}{ |c|c|c| }
	\hline
	&	\multicolumn{2}{c|}{\textbf{Replay as in}}\T\B\\
	\hline
	&
	\textbf{Figure \ref{figure how faithful replay}(b) (resolves }	&
	\textbf{Figure \ref{figure how faithful replay}(c) (resolves }
	\T\B\\
	\textbf{Model Setting}	&
	\textbf{entire views identically)}	&
	\textbf{data races identically)}
	\T\B\\
	\hline
	\begin{tabular}{ c }
		Sequential Consistency \cite{LamportSequential} - Offline Recording
	\end{tabular}
	&	Similar to Netzer \cite{Netzer1993Optimal}	&	Netzer 
	\cite{Netzer1993Optimal}\T\B\\
	\hline
	\begin{tabular}{ c }
		Sequential Consistency \cite{LamportSequential} - Online Recording
	\end{tabular}
	&	Similar to Netzer \cite{Netzer1993Optimal}	&	Netzer 
	\cite{Netzer1993Optimal}\T\B\\
	\hline
	\begin{tabular}{ c }
		Strong Causal Consistency - Offline Recording
	\end{tabular}
	&	\textbf{This work}	&	\textbf{This work}	\T\B\\
	\hline
	\begin{tabular}{ c }
		Strong Causal Consistency - Online Recording
	\end{tabular}
	&	\textbf{This work}	&	Future work	\T\B\\
	\hline
	\begin{tabular}{ c }
		Causal Consistency \cite{Ahamad1995} - Offline Recording
	\end{tabular}
	&	Open	&	Open	\T\B\\
	\hline
	\begin{tabular}{ c }
		Causal Consistency \cite{Ahamad1995} - Online Recording
	\end{tabular}
	&	Open	&	Open	\T\B\\
	\hline
\end{tabular}
\caption{A summary of RnR results.}
\label{table strong causal summary}
\end{table*}

	\section{Preliminaries} \label{section preliminaries}
A \emph{relation} $R$ on a set $O$ is a set of tuples $\edge{ a, b }$ such that $a, b \in O$.
We use the notation $a <_{R} b$ if $\edge{ a, b } \in R$.
We denote $a \le_{R} b$ if either $a <_{R} b$ or $a = b$.
An irreflexive, antisymmetric, and transitive relation is called a \emph{partial order}.
A partial order $R$ on a set $O$ is a \emph{total order} if for any $a, b \in O$, either $a <_{R} b$ or $b <_{R} a$.
A partial order can be represented by a directed acyclic graph which is closed under transitivity.
For two relations $A$ and $B$ on a set $O$, we say that $A$ \emph{respects} $B$ if $B \subseteq A$.
We use the notation $A \mid O'$ to restrict the relation $A$ on set $O$ to a subset $O' \subseteq O$.
$\trans{A}$ denotes the (unique) transitive reduction of the partial order $A$ and $a \lessdot_A b$ denotes $\edge{ a, b } \in \trans{A}$.
We use $A \cup B$ to denote the union, with the transitive closure, of relations $A$ and $B$, and $A \cupdot B$ to denote the disjoint union of $A$ and $B$.
For example, consider two partial orders $A$ and $B$ on the set $\set{a, b}$, given by $A = \set{ \edge{ a, b } }$ and $B = \set{ \edge{ b, a } }$.
Then, $A \cup B = \set{ \edge{ a, b }, \edge{ b, a }, \edge{ a, a }, \edge{ b, b } }$ while $A \cupdot B = \set{ \edge{ a, b }, \edge{ b, a } }$.
Observe that union and disjoint union of two partial orders may not be a partial order, as the previous example shows.

We borrow some notation by Steinke and Nutt \cite{Steinke2004Unified} for shared memory formalism.
The shared memory consists of a set of variables $X$ and supports two operations, read and write.
We use $\opw{}{}$ for writes, $\opr{}{}$ for reads, and $\op{}{}$ when the operation can be either read or write.
We use a subscript for process identifier or leave it blank if it is unspecified.
If the variable and the corresponding value read/written is relevant, we specify it in parenthesis.
For example, $\opw{}{i}(x = 1)$ denotes a write of value $1$ to variable $x$ performed by process $i$ and $\op{}{j}(y)$ denotes an operation performed by process $j$ that can either be a read or write to variable $y$.
Formally, an \emph{operation} is a $4$-tuple $( op, i, x, id )$ where $op$ is $\operatorname{r}$ for read and $\operatorname{w}$ for write, $i$ is the unique identifier of the process that performed the operation, $x$ is the (shared) variable on which the operation was performed, and $id$ is the unique identifier of the operation.
This notation allows for wild-card entries, e.g. $( \operatorname{w}, i, *, * )$ is the set of all writes executed by process $i$.
Observe that we do not specify the values in the notation.
We assume that each write operation writes a unique value\footnote{Since the unique write values have a one-to-one correspondence with the unique identifiers of the respective write operations, therefore formally specifying the write values is redundant.}.
The values read by read operations may vary between executions, but each read operation reads a value written by some write.

All operations in $( *, i, *, * )$ are totally ordered.
We denote this total order by $PO(i)$.
The disjoint union of these is the \emph{program order} given by $PO = \cupdot_{i} PO(i)$.
This is the order on operations implied by the program text.
In figures representing total orders, we draw operations from left to right as they appear in the total order.
For example, Figure \ref{figure views causal weaker example}(a) draws the program order for two processes, $1$ and $2$.
The two total orders, $PO(1)$ and $PO(2)$, corresponding to processes $1$ and $2$ respectively, are drawn from left to right.

We model the distributed system as a network of processes that communicate with each other via reads and writes to the shared memory.
Each process comes with a program that specifies the operations to be executed 
and the order in which they should be executed.
Formally, a shared memory \emph{system} is a set of processes $P$, a set of operations $O$, a program order $PO$ on $O$, a set of shared variables $X$, and a shared memory $\Pi$.
An execution is the result of processes running their programs on a shared memory system where each read operation returns a value written by some write operation.
\begin{definition}[Writes-to]
	Given an execution, a write operation $w$ \emph{writes-to} a read operation $r$, denoted $w \mapsto r$, if $w$ and $r$ are on the same variable and $r$ returns the value written by $w$.
\end{definition}

We reason about executions as a collection of read and write operations on shared variables.
We do not distinguish any operation as special, e.g. synchronization operation, but view all operations to the shared memory uniformly.
This is the same as Netzer's model \cite{Netzer1993Optimal}.
		\subsection*{Assumptions about Programs}
In general, programs are dynamic where the next operation to be executed 
depends on the current program state.
Our model requires reproducing the execution faithfully; at the 
very least all read operations must return the same values.
Since we consider deterministic programs, that read the same values from the 
shared memory via the corresponding read operations, therefore we claim, 
without proof, that program at each process will execute 
the same operations in the same order in both the original execution and the 
replay.
A similar result is shown in \cite{Mellor-Crummey:1989:DAL:916540} for a 
different setting.
So we assume that the program order $PO$ is fixed.

One standard practice for writing concurrent programs is to ensure that they 
are properly synchronized such that they are data race free \cite{AdveHill}.
This guarantees sequential semantics for such programs under most concurrent 
languages and multiprocessors.
We do \underline{not} make any such assumptions since
\begin{enumerate}[label=\arabic*),wide,labelwidth=!,labelindent=0pt,topsep=0pt]
	\item
	we do not distinguish any operation as special, e.g. synchronization 
	operations,

	\item
	one of the aims of this work is to replay programs for debugging purposes, so 
	assuming that the programmer has written the program correctly is a dangerous 
	assumption, and

	\item
	the guarantee of sequential semantics for data race free programs is for a 
	different consistency model (cache consistency) and it does not hold for 
	causal consistency.
\end{enumerate}

	\section{Shared Memory Consistency} \label{section consistency}
For an execution, a \emph{view} $V$ on a set of operations $O' \subseteq O$ is a total order on $O'$ such that each read $r \in O'$ returns the last value written to the corresponding variable in $V$.
For a view $V$, the \emph{data-race order} is given by $DRO(V) = \bigcup_{x \in X} V \mid (*, *, x, *)$.
Reasoning about allowed executions under a shared memory consistency model relies on existence of some collection of views $\V$ that satisfy some properties, depending on the shared memory consistency model.
We say that $\V$ \emph{explains} the execution under the consistency model.
For example, causal consistency \cite{Ahamad1995} requires existence of per-process views that satisfy causality, which is the union (with the transitive closure) of the writes-to relation and the program order.
Formally, we use the definition by Steinke and Nutt \cite{Steinke2004Unified}.

\begin{definition}[{Write-read-write Order [Steinke and Nutt \cite{Steinke2004Unified}]}] \label{definition write read write order}
	Given an execution with a writes-to relation $\mapsto$, two writes, $\opw{1}{} \in (\operatorname{w}, *, *, *)$ and $\opw{2}{} \in (\operatorname{w}, *, *, *)$, are ordered by \emph{write-read-write order}, $\edge{ \opw{1}{},  \opw{2}{} } \in {WO}$, if there exists a read operation $\opr{}{} \in (\operatorname{r}, *, *, *)$ such that $\opw{1}{} \mapsto \opr{}{} <_{PO} \opw{2}{}$.
\end{definition}

\begin{definition}[{Causal Consistency [Steinke and Nutt \cite{Steinke2004Unified}]}] \label{definition causal consistency}
	An execution is \emph{causally consistent} if there exists a set of views $\V = \set{ V_i }_{i \in P}$ such that, for every process $i$,
	\begin{itemize}[topsep=0pt, itemsep=0pt]
		\item $V_i$ is a view on the set of operations $( *, i, *, * ) \cup ( \operatorname{w}, *, *, * )$, and
		\item $V_i$ respects ${WO} \cup \del{{PO} | ( *, i, *, * ) \cup ( \operatorname{w}, *, *, * )}$.
	\end{itemize}
	A shared memory $\Pi$ is \emph{causally consistent} if every execution run on $\Pi$ is causally consistent.
\end{definition}

Note that, by definition, each view $V_i$ already respects the writes-to relation restricted to $( *, i, *, * ) \cup ( \operatorname{w}, *, *, * )$ since, by definition of a view, each read returns the last value written to the corresponding variable in $V_i$.
Note also that read operations are only observed by the processes that perform them while write operations are observed by every process.
We work with a version of causal consistency which we call \emph{strong causal consistency}.
This model is motivated by an implementation of causal consistency via lazy replication \cite{Ladin:1992:PHA:138873.138877}.
Ladin et. al. \cite{Ladin:1992:PHA:138873.138877} use vector timestamps to ensure that a write operation $w_i$ from process $i$ is only committed locally when all write operations in $w_i$'s history, as summarized by $w_i$'s vector timestamp, have been observed.
Many practical systems use vector timestamps to determine order of operations and detect conflicts in systems with weak consistency guarantees (e.g. Dynamo \cite{DeCandia:2007:DAH:1323293.1294281}, COPS \cite{Lloyd:2011:DSE:2043556.2043593}, and Bayou \cite{Terry:1995:MUC:224056.224070}) although these systems have conflict resolution schemes which make their actual consistency guarantees stronger than strong causal consistency (see also Section \ref{section discussion}).
Formally, we define strong causal consistency as follows.

\begin{definition} [Strong Causal Order] \label{definition LRO strong causal order}
	Given a set of views $\V = \set{ V_i }_{i \in P}$, two writes, $\opw{1}{} \in (\operatorname{w}, *, *, *)$ and $\opw{2}{i} \in (\operatorname{w}, i, *, *)$, are ordered by \emph{strong causal order}, $\edge{ \opw{1}{}, \opw{2}{i} } \in { SCO( \V ) }$, if $\edge{ \opw{1}{}, \opw{2}{i} } \in  { V_i } $.
\end{definition}

This is stronger than the write-read-write order $WO$ since two writes $\opw{1}{} \in ( \operatorname{w}, *, *, * )$ and $\opw{2}{i} \in (\operatorname{w}, i, *, *)$ are ordered by $WO$ if and only if $\opw{1}{}$ has been read by process $i$ before it performs $\opw{2}{i}$.
However, $\opw{1}{}$ has to be merely observed by process $i$ for the two operations to be ordered by strong causal order.
Intuitively, this corresponds to causality when each write operation observed is immediately read.

\begin{definition} [Strong Causal Consistency] \label{definition LRO strong causal consistency}
	An execution is \emph{strongly causal consistent} if there exists a set of views $\V = \set{ V_i }_{i \in P}$ such that, for every process $i$,
	\begin{itemize}[topsep=0pt, itemsep=0pt]
		\item $V_i$ is a view on the set of operations $( *, i, *, * ) \cup ( \operatorname{w}, *, *, * )$, and
		\item $V_i$ respects ${SCO( \V )} \cup \del{{PO} | ( *, i, *, * ) \cup ( \operatorname{w}, *, *, * )}$.
	\end{itemize}
	A shared memory $\Pi$ is \emph{strongly causal consistent} if every execution run on $\Pi$ is strongly causal consistent.
\end{definition}

\begin{figure*}
	\centering
	{
		\begin{tikzpicture}
		\node at (0, 2.5) {Process 1:};
		\node at (0, 1) {Process 2:};
		
		\node (w1x) at (2, 2.5) {$\opw{}{1}(x=1)$};
		\node (r11x) at (5, 2.5) {$\opr{1}{1}(x=1)$};
		\node (r1y) at (8, 2.5) {$\opr{}{1}(y=4)$};
		\node (w1y) at (11, 2.5) {$\opw{}{1}(y=3)$};
		\node (r21x) at (14, 2.5) {$\opr{2}{1}(x=1)$};
		
		\node (w2x) at (2, 1) {$\opw{}{2}(x=2)$};
		\node (r12x) at (5, 1) {$\opr{1}{2}(x=2)$};
		\node (w2y) at (8, 1) {$\opw{}{2}(y=4)$};
		\node (r2y) at (11, 1) {$\opr{}{2}(y=3)$};
		\node (r22x) at (14, 1) {$\opr{2}{2}(x=2)$};
		
		\path[->]
		(w1x) edge node[above] {$PO$} (r11x)
		(r11x) edge node[above] {$PO$} (r1y)
		(r1y) edge node[above] {$PO$} (w1y)
		(w1y) edge node[above] {$PO$} (r21x)
		
		(w2x) edge node[below] {$PO$} (r12x)
		(r12x) edge node[below] {$PO$} (w2y)
		(w2y) edge node[below] {$PO$} (r2y)
		(r2y) edge node[below] {$PO$} (r22x)
		
		
		(w2y) edge[line width=1.2pt, bend right=10] node[above left] {$WO$} (w1y)
		;
		\end{tikzpicture}
		\subcaption{A two process program with read and write values for one 
		possible causally consistent execution.}
		\hrule
		\vspace{\floatsep}
		%
		\centering
		\begin{tikzpicture}
		\node at (-2, 3) {$V_1$:};
		\node (V1w2x) at (0, 3) {$\opw{}{2}(x)$};
		\node (V1w1x) at (2, 3) {$\opw{}{1}(x)$};
		\node (V1r11x) at (4, 3) {$\opr{1}{1}(x)$};
		\node (V1w2y) at (6, 3) {$\opw{}{2}(y)$};
		\node (V1r1y) at (8, 3) {$\opr{}{1}(y)$};
		\node (V1w1y) at (10, 3) {$\opw{}{1}(y)$};
		\node (V1r21x) at (12, 3) {$\opr{2}{1}(x)$};
		
		\node at (-2, 1) {$V_2$:};
		\node (V2w1x) at (0, 1) {$\opw{}{1}(x)$};
		\node (V2w2x) at (2, 1) {$\opw{}{2}(x)$};
		\node (V2r12x) at (4, 1) {$\opr{1}{2}(x)$};
		\node (V2w2y) at (6, 1) {$\opw{}{2}(y)$};
		\node (V2w1y) at (8, 1) {$\opw{}{1}(y)$};
		\node (V2r2y) at (10, 1) {$\opr{}{2}(y)$};
		\node (V2r22x) at (12, 1) {$\opr{2}{2}(x)$};
		
		\path[->]
		(V1w1x) edge[dashed] (V1r11x)
		(V1w1x) edge[dashed, bend left=10] (V1r21x)
		(V2w2x) edge[dashed] (V2r12x)
		(V2w2x) edge[dashed, bend left=15] (V2r22x)
		(V1w2y) edge[dashed] (V1r1y)
		(V2w1y) edge[dashed] (V2r2y)
		
		(V1w2y) edge[line width=1.2pt, bend right=20] node[below] {$WO$} (V1w1y)
		(V2w2y) edge[line width=1.2pt] node[above] {$WO$} (V2w1y)
		
		(V2w1x) edge[bend right=20] node[above] {$PO$} (V2w1y)
		(V2r2y) edge node[below] {$PO$} (V2r22x)
		
		(V1w1y) edge node[below] {$PO$} (V1r21x)
		(V1w2x) edge[bend right=15] node[below] {$PO$} (V1w2y)
		;
		\end{tikzpicture}
		\subcaption{A set of views that explain the execution given in (a) under 
		causal consistency. The values of read and write operations have been 
		omitted with the dotted edges giving the writes-to relation.}
		\hrule
		\vspace{\floatsep}
	}
	\caption{An execution which is causally consistent but not strongly causal 
	consistent.}
	\label{figure views causal weaker example}
\end{figure*}

Observe that strong causal consistency does not violate the write-read-write order and thus it is at least as strong as causal consistency.
In fact, it is strictly stronger than causal consistency.
Figure \ref{figure views causal weaker example} shows a causally consistent execution of a two process program.
The read and write values are given in Figure \ref{figure views causal weaker example}(a).
Figure \ref{figure views causal weaker example}(b) gives a set of views that explains this execution under causal consistency.
The values of read and write operations have been omitted with the dotted edges giving the writes-to relation.
Some obvious $PO$ edges have also been omitted to avoid clutter.
We reason that no set of views can explain the execution under strong causal consistency.
Observe that ordering $\edge{ \opw{}{2}(x), \opw{}{1}(x) } \in V_1$ implies an $SCO(\V)$ edge that must be respected by $V_2$.
Therefore, any set of views that explain the execution under strong causal consistency must have either $\edge{ \opw{}{2}(x), \opw{}{1}(x) } \in V_2$ or $\edge{ \opw{}{1}(x), \opw{}{2}(x) } \in V_1$.
We show that none of these is possible.

For the first case, note that $\opw{}{1}(x) \PO \opw{}{1}(y) \mapsto \opr{}{2}(y) \PO \opr{2}{2}(x)$.
Therefore $\opw{}{1}(x)$ can not be placed after $\opr{2}{2}(x)$ in $V_2$.
Now if $\opw{}{1}(x)$ is placed after $\opw{}{2}(x)$ in $V_2$, then $\opr{2}{2}(x)$ does not return the last value written to $x$ in $V_2$.
This violates the definition of a view.

For the second case, we have that $\opw{}{2}(x) \PO \opw{}{2}(y) \WO \opw{}{1}(y) \PO \opr{2}{1}(x)$.
Therefore $\opw{}{2}(x)$ can not be placed after $\opr{2}{1}(x)$ in $V_1$.
Now if $\opw{}{2}(x)$ is placed after $\opw{}{1}(x)$ in $V_1$, then $\opr{2}{1}(x)$ does not return the last value written to $x$ in $V_1$.
Again, this violates the definition of a view.

		\subsection*{Compiler and Hardware Optimizations}
In real world systems, many optimizations are applied to the provided program by both the compiler at compile time and the hardware at runtime.
The shared memory consistency model ensures that these optimizations are such that the guarantees provided are still maintained by these optimizations.
For example, consider a uniprocessor and a shared memory consistency model that guarantees a view consistent with the program order implied by the written program.
The compiler and hardware optimizations may result in operations being executed out of order in apparent violation of the  program order constraints.
However, the resulting execution can still be explained by the \emph{existence} of a view (or views) where the operations are executed exactly as specified by the program order.
Using view based definitions of shared memory consistency models allows us to abstract these implementation details.
Therefore we allow all optimizations to be applied to the given program as long 
as the relevant shared memory consistency guarantees are satisfied.

	\section{RnR Model} \label{section RnR model}
For replaying executions, we assume that the per-process views are provided to the RnR system.
The RnR system uses the views to determine the record.
In case of online recording, the views are provided to the RnR system incrementally, as and when new operations occur that affect the views.
Now let us illustrate how this requirement may be implemented in practice.
Consider a shared memory implementation wherein each process has a copy of the shared variable and the shared memory is implemented via message passing.
Then the shared memory adds a write operation to process $i$'s view when the local copy of the corresponding variable is updated at process $i$.
Similarly a read  by process $i$ is added to process $i$'s view when the local copy is read.

The RnR system will record some edges from each view (i.e. on each process) and the replay execution is only allowed views that enforce these records.
Note that we do not place any restriction on how the record is enforced.
We assume that any set of views can explain the replay as long as it extends the record and is consistent under the shared memory consistency model.
Formally, we define two RnR models with different fidelities.
Under the first model, the RnR system is allowed to record any edge from each view and we require that the replay reproduces the per-process views exactly as in the original execution.
Under the second model, the RnR system is only allowed to record data races from each view and we only require that the data races are resolved identically in the replay.

%
%
%
\noindent
\textbf{RnR Model 1:}
Given a set of views $\V = \set{V_i}_{i \in P}$, $\R = \set{R_i}_{i \in P}$ is a \emph{{record}} of $\V$ if each $R_i \subseteq V_i$.
An execution is a \emph{{replay}} of $\R$ if there exists a set of views $\V' = \set[0]{ V'_i }_{ i \in P }$ that explain the execution under the consistency model and each $V'_i$ respects $R_i$.
We say that $\V'$ \emph{{certifies}} the replay to be valid for $\R$.
A record $\R$ of a set of views $\V$ is \emph{{good}} if, for any replay of $\R$, under the same consistency model, any set of views $\V' = \set[0]{ V'_i }_{ i \in P }$ that certifies the replay to be valid for $\R$ must have $V'_i = V_i$ for all $i \in P$ (i.e. only $\V$ certifies the replay to be valid for $\R$).

\noindent
\textbf{RnR Model 2:}
Given a set of views $\V = \set{V_i}_{i \in P}$, $\R = \set{R_i}_{i \in P}$ is a \emph{{record}} of $\V$ if each $R_i \subseteq DRO( V_i )$.
An execution is a \emph{{replay}} of $\R$ if there exists a set of views $\V' = \set[0]{ V'_i }_{ i \in P }$ that explain the execution under the consistency model and each $V'_i$ respects $R_i$.
We say that $\V'$ \emph{{certifies}} the replay to be valid for $\R$.
A record $\R$ of a set of views $\V$ is \emph{{good}} if, for any replay of $\R$, under the same consistency model, any set of views $\V' = \set[0]{ V'_i }_{ i \in P }$ that certifies the replay to be valid for $\R$ must have $DRO(V'_i) = DRO(V_i)$ for all $i \in P$.

The second replay model is the same as the one considered by Netzer \cite{Netzer1993Optimal}.
Observe that for each record, there exists at least one replay, specifically the original execution.
Note that RnR Model 1 forces all writes to appear in the same order for a process' view as they did in the original execution, which is different than Netzer's model in \cite{Netzer1993Optimal}.
This may seem expensive since reordering writes to different variables can result in performance optimizations while still returning the same values for reads and allowing the program state in the replay to progress the same as in the original execution.
RnR Model 2 allows writes to different variables to be executed in different order, which is the same as Netzer's model in \cite{Netzer1993Optimal}.
But for RnR Model 1 we require that each process' point of view with respect to the order of events must be indistinguishable between the original execution and the replay.

In contrast to the discussion at the end of Section \ref{section consistency}, the optimizations for the replay execution may be more restrictive than those for the original execution.
Exactly what optimizations are allowed in the replay execution versus the original execution depends on the shared memory consistency model as well as the replay system implementation.
In this work, we do not discuss replay systems, their implementations, or how they may enforce the provided record.
So we do not discuss the optimizations during the replay.

	\section{Optimal Records for RnR Model 1} \label{section optimal record}
		\subsection{Offline Record for Strong Causal Consistency} \label{section strong causal consistency}
In this section we consider offline record for strong causal consistency.
In this case the entire set of per-process views $\V = \set{V_i}_{i \in P}$ is made available to the RnR system.
The RnR system determines the record that must be saved.
If the RnR system decides to record the entire views $V_i$ for every process $i$, then this would be sufficient to reproduce the original execution exactly.
However, this is wasteful since the transitive reduction $\trans{V}_i$ for each process $i$ would also achieve the same result.

We first give intuition on what edges from each $\trans{V}_i$ do not need to be recorded before formalizing it in Theorem \ref{theorem:RnRStrongCausalConsistencySufficiency}.
Fix a process $i$.
Since $PO$ is fixed and independent of executions the RnR system does not have to record these edges in $V_i$ as they are guaranteed by the consistency model.
Now consider two write operations $\opw{1}{} \in (\operatorname{w}, *, *, *)$ and $\opw{2}{j} \in (\operatorname{w}, j, *, *)$, for $j \ne i$, such that $\edge{ \opw{1}{}, \opw{2}{j} } \in SCO(\V)$.
If process $j$ correctly orders the two operations $\edge{ \opw{1}{}, \opw{2}{j} }$ in the replay, then this edge will be guaranteed by the consistency model, due to strong causal order, and process $i$ does not need to record it.
Such edges are captured by the following definition.

\begin{definition}
	Given a set of views $\V = \set{ V_i }_{i \in P}$, the relation $SCO_i( \V )$, for a process $i \in P$, is defined as follows.
	Two writes, $\opw{1}{} \in (\operatorname{w}, *, *, *)$ and $\opw{2}{j} \in (\operatorname{w}, j, *, *)$, are ordered $\edge{ \opw{1}{}, \opw{2}{j} } \in { SCO_i( \V ) }$, if $\edge{ \opw{1}{}, \opw{2}{j} } \in SCO(\V)$ and $j \ne i$.
\end{definition}

Observe that the subscript distinguishes the relation $SCO_i( \V )$ from $SCO( \V )$ (Definition \ref{definition LRO strong causal order}) which is a partial order for strongly causal executions.
We now present an example to illustrate another set of edges that do not need to be recorded, although they are not directly guaranteed by the consistency model.
Consider the following execution on three processes and a set of views that explains it under strong causal consistency (Figure \ref{figure B_i example}).
Process $1$ performs the write $\opw{}{1} \in ( \operatorname{w}, 1, *, *)$, process $2$ performs $\opw{}{2} \in ( \operatorname{w}, 2, *, * )$, and process $3$ does not perform any operations.
Now process $1$ orders $\opw{}{1} <_{V_1} \opw{}{2}$, process $2$ orders $\opw{}{2} <_{V_2} \opw{}{1}$, and process $3$ orders $\opw{}{1} <_{V_3} \opw{}{2}$.
It can be easily verified that this set of views satisfies Definition \ref{definition LRO strong causal consistency} of strong causal consistency where both $PO$ and $SCO( \V )$ are empty.
Now note that if process $3$ records $\opw{}{1} <_{R_3} \opw{}{2}$, process $1$ does not need to record its order of the two operations.
The reason is that any possible set of views $\V' = \set[0]{ V'_i }_{ i \in P }$, that certify a replay to be valid for $\R$, will have $V'_3$ order $\opw{}{1} <_{V'_3} \opw{}{2}$.
So if process $1$ orders $\opw{}{2} <_{V'_1} \opw{}{1}$, this will create an $SCO(\V')$ edge $\opw{}{2} \SCO{\V'} \opw{}{1}$.
Since $V'_3$ respects $SCO(\V')$, therefore process $3$ will order $\opw{}{2} <_{V'_3} \opw{}{1}$.
This conflicts with the recorded edge $\opw{}{1} <_{R_3} \opw{}{2}$.
Thus, such a set of views can not certify a replay execution to be valid for $\R$.
The set of such edges is captured by the following relation.

\begin{definition} \label{definition LRO Bi}
	Given a set of views $\V = \set{ V_i }_{i \in P}$, the relation $B_i( \V )$, for a process $i \in P$, is defined as follows.
	Two writes, $\opw{1}{i} \in (\operatorname{w}, i, *, *)$ and $\opw{2}{j} \in (\operatorname{w}, j, *, *)$ such that $i \ne j$, are ordered $\edge{ \opw{1}{i}, \opw{2}{j} } \in {B_i(\V)}$ if $\edge{ \opw{1}{i}, \opw{2}{j} } \in {V_i}$ and there exists a process $k \ne i, j$ such that $\edge{ \opw{1}{i}, \opw{2}{j} } \in {V_k}$.
\end{definition}

\begin{figure}
	\centering
	\begin{tikzpicture}
	\node at (-1, 4) {$V_1$:};
	\node at (-1, 3) {$V_2$:};
	\node at (-1, 2) {$V_3$:};
	\node (V1w1) at (0, 4) {$\opw{}{1}$};
	\node (V1w2) at (3, 4) {$\opw{}{2}$};
	\node (V2w1) at (3, 3) {$\opw{}{1}$};
	\node (V2w2) at (0, 3) {$\opw{}{2}$};
	\node (V3w1) at (0, 2) {$\opw{}{1}$};
	\node (V3w2) at (3, 2) {$\opw{}{2}$};
	\path[->]
	(V1w1) edge (V1w2)
	(V2w2) edge (V2w1)
	(V3w1) edge[color=red] node[above] {$R_3$} (V3w2)
	;
	\end{tikzpicture}
	
	\vspace{\floatsep}
	\vspace{\floatsep}
	
	\begin{tikzpicture}
	\node at (5, 4) {$V'_1$:};
	\node at (5, 3) {$V'_2$:};
	\node at (5, 2) {$V'_3$:};
	\node (V'1w1) at (9, 4) {$\opw{}{1}$};
	\node (V'1w2) at (6, 4) {$\opw{}{2}$};
	\node (V'2w1) at (9, 3) {$\opw{}{1}$};
	\node (V'2w2) at (6, 3) {$\opw{}{2}$};
	\node (V'3w1) at (9, 2) {$\opw{}{1}$};
	\node (V'3w2) at (6, 2) {$\opw{}{2}$};
	\path[->]
	(V'1w2) edge node[above] {$SCO(\V')$} (V'1w1)
	(V'2w2) edge node[above] {$SCO(\V')$} (V'2w1)
	(V'3w1) edge[bend left=20, color=red] node[below] {$R_3$} (V'3w2)
	(V'3w2) edge node[above] {$SCO(\V')$} (V'3w1)
	;
	\end{tikzpicture}
	\hrule
	\vspace{\floatsep}
	\caption{$\set{V_i}_{i=1}^{3}$ explains a strongly causal execution and 
		$\set[0]{V'_i}_{i=1}^{3}$ explains an invalid replay. Process $1$ orders 
		$\opw{}{1} <_{V_1} \opw{}{2}$ in the replay which would force process $3$ 
		to 
		violate the record.}
	\label{figure B_i example}
\end{figure}

Informally, in any set of views $\V'$ that explain a replay of $\R$, setting $\edge{ \opw{2}{j}, \opw{1}{i} } \in {V'_i}$ will create an $SCO(\V')$ edge $\edge{ \opw{2}{j}, \opw{1}{i} }$ which will conflict with $V'_k$.
The following theorem states that for every process $i$ it suffices to record all edges in $\trans{V}_i$, except those in $SCO_i( \V )$, $PO$, or $B_i( \V )$.

\begin{theorem} \label{theorem:RnRStrongCausalConsistencySufficiency}
	Consider a set of views $\V = \set{ V_i }_{i \in P}$ that explain a strongly causal consistent execution.
	For each process $i \in P$, let $R_i = \trans{V}_i \setminus \del{ SCO_i( \V ) \cupdot PO \cupdot B_i(\V) }$.
	Then, $\R = \set{R_i}_{i \in P}$ is a good record of $\V$.
\end{theorem}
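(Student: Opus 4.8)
Since the program order $PO$ is fixed across executions, each $V_i$ and each $V'_i$ is a total order on the \emph{same} operation set $(*,i,*,*)\cup(\operatorname{w},*,*,*)$; hence to prove the record is good (RnR Model~1) it suffices to show $V_i\subseteq V'_i$ for every $i$, and then equality follows from totality. Because $V_i$ is a total order, its transitive reduction $\trans{V}_i$ is exactly the set of consecutive pairs of $V_i$, so I only need each consecutive pair of $V_i$ to be respected by $V'_i$. The pairs in $R_i$ are respected by definition of a replay, and the pairs in $PO$ are respected since any certifying $V'_i$ respects $PO\mid((*,i,*,*)\cup(\operatorname{w},*,*,*))$. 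The only consecutive pairs removed from the record lie in $SCO_i(\V)$ or $B_i(\V)$, both of which are write--write. Moreover, a consecutive pair of $V_i$ incident to a read is either in $PO$ or (being neither write--write nor in $PO$) in $R_i$, hence is always preserved; so any \emph{reversed} consecutive pair must be write--write and must lie in $SCO_i(\V)\cup B_i(\V)$. Consequently the whole theorem reduces to the claim that $V_i$ and $V'_i$ agree on every pair of writes. Indeed, once writes agree everywhere, every consecutive pair of $V_i$ lies in $PO$, $R_i$, or is a write--write pair now known to be respected, giving $\trans{V}_i\subseteq V'_i$ and $V_i=V'_i$.

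\textbf{A preservation sub-lemma and the two recovery rules.} The engine of the proof is the observation that a consecutive pair of $V_i$ whose later endpoint is a write by process $i$ itself is never removed: such a pair is not in $SCO_i(\V)$ (whose targets are writes of processes $\neq i$) nor in $B_i(\V)$ (whose sources are writes of $i$, forcing $PO$ in this case), so it is in $PO$ or $R_i$ and is therefore preserved by $V'_i$. Thus the immediate $V_i$-predecessor of each $i$-write is preserved, and the $i$-writes keep their mutual $PO$ order in $V'_i$. On top of this I use two recovery rules. For an $SCO_i(\V)$ pair $\edge{\opw{1}{},\opw{2}{j}}$ (so $\opw{1}{}<_{V_j}\opw{2}{j}$ with $j\neq i$): if $V'_i$ reversed it, then $\opw{1}{}<_{V'_j}\opw{2}{j}$ would make $\edge{\opw{1}{},\opw{2}{j}}\in SCO(\V')$, which $V'_i$ must respect --- a contradiction; hence the \emph{same} pair is reversed in the target's writer view $V'_j$. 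For a $B_i(\V)$ pair $\edge{\opw{1}{i},\opw{2}{j}}$ with witness $k\neq i,j$ (so $\opw{1}{i}<_{V_k}\opw{2}{j}$): reversing it in $V'_i$ creates $\edge{\opw{2}{j},\opw{1}{i}}\in SCO(\V')$ (its later endpoint $\opw{1}{i}$ is an $i$-write), which $V'_k$ must respect, so the same pair is reversed in $V'_k$ --- contradicting $\opw{1}{i}<_{V_k}\opw{2}{j}$ only after descent, since it need not be consecutive in $V_k$.

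\textbf{Well-founded descent (the main obstacle).} I would argue by contradiction: assume some pair of writes is ordered differently by $V_i$ and $V'_i$, and derive an impossible infinite descent. The natural measure is $\mu(w)=$ the position of a write $w$ in the view $V_p$ of its own writer $p$. Given a reversed pair $(u,w)$ in the \emph{writer's own} view $V_p$, the sub-lemma preserves the top consecutive pair ending at $w$, so some earlier consecutive write pair with strictly smaller $\mu$ is reversed; when that pair's target is again a $p$-write this contradicts minimality directly. The $SCO_i$ rule transports a reversal into the target's writer view (where this peeling applies), and the $B_i$ rule transports it into a witness view. The hard part is to turn these transports into a genuinely well-founded descent: an $SCO_i$ edge whose target is a write of a \emph{different} process lands in a new writer view where $\mu$ is a priori unrelated, and a $B_i$ step leaves the target (hence $\mu$) unchanged while only hopping to a new witness. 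I expect to control these with a lexicographic measure --- primarily $\mu$ of the target, secondarily (for $B_i$) the source write or the number of still-available witness processes $k\neq i,j$ --- together with a careful choice at each step of \emph{which} reversed consecutive pair to descend to (e.g.\ taking $g$, the first write of the current process lying $V$-after $u$, which the preserved $PO$-order of that process's writes forces to be reversed with $u$). Establishing that this choice strictly decreases the measure across both transport rules is the crux of the argument.

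\textbf{Assembly.} Granting the agreement on all write pairs, the final step is immediate: for each $i$, every consecutive pair of $V_i$ is in $PO$ (respected), in $R_i$ (respected), or a write--write pair in $SCO_i(\V)\cup B_i(\V)$, which $V'_i$ now respects because it agrees with $V_i$ on all pairs of writes. Hence $\trans{V}_i\subseteq V'_i$, so $V_i\subseteq V'_i$, and by totality $V'_i=V_i$ for all $i\in P$. Therefore the only views certifying any replay of $\R$ are $\V$ itself, i.e.\ $\R$ is a good record.
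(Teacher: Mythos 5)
Your reduction to agreement on write pairs, your preservation sub-lemma, and your two transport rules are all sound, and they correspond closely to the case analysis the paper itself uses. But the proposal is not a proof: the well-founded descent that you yourself flag as ``the crux'' is exactly the missing content, and the measure you propose does not obviously work --- as you concede, an $SCO_i$ transport lands in a different writer's view where $\mu$ is a priori unrelated, and a $B_i$ transport leaves the target (hence $\mu$) unchanged. Without a measure that provably decreases, the argument is circular: to rule out a reversed write pair in $V'_i$ you appeal to orderings in $V'_j$ or $V'_k$, but those orderings are only known to agree with $V_j$, $V_k$ if the conclusion already holds for those processes. That is a genuine gap, not a presentational one.

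The paper closes this gap by taking $SCO(\V)$ itself --- a partial order on writes for strongly causal consistent executions --- as the well-founded induction order, and by decoupling the argument into two lemmas instead of one intertwined descent. Lemma (a): every $SCO(\V)$ edge survives into $SCO(\V')$. This is proved by picking a counterexample write $\opw{2}{1}$ that is \emph{minimal} with respect to $SCO(\V)$, walking a path in $\trans{V}_1$ from $\opw{1}{}$ to $\opw{2}{1}$, and examining the first edge absent from $V'_1$: an $R_1$ or $PO$ edge is respected outright; an $SCO_1(\V)$ edge (other than the last edge of the path) has a target that precedes the process-$1$ write $\opw{2}{1}$ in $V_1$, hence is $SCO(\V)$-below $\opw{2}{1}$ and not a counterexample by minimality; and a $B_1(\V)$ edge has a process-$1$ write as its source, which is $PO$-before $\opw{2}{1}$ and short-circuits the path. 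Lemma (b): every $B_i(\V)$ edge is respected by $V'_i$; this uses the same path-walking technique but inducts on an $SCO(\V)$-\emph{maximal} counterexample and invokes Lemma (a). With both lemmas in hand, every edge of $\trans{V}_i$ lies in $R_i \cup PO \cup SCO_i(\V) \cup B_i(\V)$, all respected by $V'_i$, and the theorem follows exactly as in your assembly step. The idea you were missing is that the descent should be organized along $SCO(\V)$ (minimality for one lemma, maximality for the other, in two separate inductions), not along positional measures inside individual views.
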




The formal proof of the theorem is given in Appendix \ref{section proofs}.
We first show that the strong causal order and the $B_i$'s are preserved in the replay (Lemma \ref{lemma replay preserves SCO and B_i}).
The proof then proceeds by arguing that, for every process $i$, each path in $\trans{V}_i$ is reproduced correctly in the replay.
We refer the reader to Appendix \ref{section proofs} for the details.
The following theorem states that, for every process $i$, each edge in $\trans{V}_i \setminus \del{ SCO_i( \V ) \cupdot PO \cupdot B_i(\V) }$ is necessary for a good record under strong causal consistency.

\begin{theorem} \label{theorem:RnRStrongCausalConsistencyNecessity}
	Consider a set of views $\V = \set{ V_i }_{i \in P}$ that explain a strongly causal consistent execution.
	For any good record $\R = \set{R_i}_{i \in P}$ of $\V$, for any process $i \in P$ and any two operations $\op{1}{}, \op{2}{} \in ( *, i, *, * ) \cup ( w, *, *, * )$, if $\edge{ \op{1}{}, \op{2}{} } \in \trans{V}_i \setminus \del{ PO \cupdot SCO_i(\V) \cupdot B_i(\V) }$, then $\edge{ \op{1}{}, \op{2}{} } \in {R_i}$.
\end{theorem}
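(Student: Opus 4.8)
The plan is to prove the contrapositive: given a good record $\R$, I will show that omitting any covering edge outside the excluded set destroys goodness. Fix a process $i$ and an edge $\edge{\op{1}{}, \op{2}{}} \in \trans{V}_i \setminus \del{ PO \cupdot SCO_i(\V) \cupdot B_i(\V) }$, and suppose for contradiction that $\edge{\op{1}{}, \op{2}{}} \notin R_i$. Since $V_i$ is a total order, a transitive-reduction edge joins two consecutive operations, so $\op{2}{}$ is the immediate successor of $\op{1}{}$ in $V_i$. I will construct an alternative set of views $\V' = \set[0]{V'_j}_{j \in P}$ that explains a strongly causal consistent replay, respects every $R_j$, yet has $V'_i \ne V_i$; this contradicts the goodness of $\R$.

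For the construction I take $V'_j = V_j$ for every $j \ne i$, and let $V'_i$ be $V_i$ with $\op{1}{}$ and $\op{2}{}$ transposed, updating the value returned by whichever of the two is a read to the last write preceding it in the new order whenever the pair touches a common variable. Because only the single pair $\edge{\op{1}{}, \op{2}{}}$ changes relative order, $V'_i$ reverses exactly that pair and agrees with $V_i$ on every other pair. Consequently $V'_i$ still respects $R_i$ (the only flipped pair is unrecorded by assumption) and still respects $PO$ restricted to $i$ (the flipped pair is not a $PO$ edge by hypothesis, and no other $PO$ edge is disturbed). Since $V'_i \ne V_i$, it remains only to verify that $\V'$ is a legal certificate, i.e. each $V'_j$ is a view and respects $SCO(\V') \cup \del{ PO \mid \cdots }$.

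I would organize the verification by whether $\op{1}{}$ and $\op{2}{}$ are reads or writes. When one of them is a read, the transposition cannot alter the write--write order, so $SCO(\V') = SCO(\V)$ and the only thing to check is that the repositioned read still returns the last write to its variable in $V'_i$, which the value update in the construction guarantees. When both operations are writes, transposing them genuinely changes $SCO(\V')$, and this is the crux. The only edge that can be created is $\edge{\op{2}{}, \op{1}{}}$, and it is an $SCO(\V')$ edge only if $\op{1}{}$ is a write of process $i$; I would then argue that, because $\edge{\op{1}{}, \op{2}{}} \notin SCO_i(\V) \cupdot B_i(\V)$, no other view $V'_k$ is forced to contain $\edge{\op{1}{}, \op{2}{}}$ and hence none is placed in conflict with the new $SCO(\V')$ edge --- precisely the complement of the reasoning behind Lemma \ref{lemma replay preserves SCO and B_i}. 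Symmetrically, deleting the old $SCO(\V)$ edge removes no constraint that another view still requires.

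The main obstacle is this write--write case: I must show that the two excluded relations capture exactly the situations in which transposing $\op{1}{}$ and $\op{2}{}$ in $V_i$ would propagate, through strong causal order, into a contradiction in another process' view. Membership in $SCO_i(\V)$ means the order $\op{1}{}$ before $\op{2}{}$ is dictated by the process that issued $\op{2}{}$ and is therefore immovable; membership in $B_i(\V)$ means some third process $k$ also orders $\op{1}{}$ before $\op{2}{}$, so reversing the pair in $V_i$ would create an $SCO$ edge clashing with $V_k$ (the scenario of Figure \ref{figure B_i example}). Ruling out both is what makes the transposition safe and the constructed $\V'$ a valid certifier. A secondary point needing care is the read-value bookkeeping when $\op{1}{}$ and $\op{2}{}$ access a common variable, to ensure the swapped view remains a genuine view.
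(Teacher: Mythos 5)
Your proposal is correct and takes essentially the same route as the paper's proof: the identical construction (transpose the adjacent pair $\op{1}{}, \op{2}{}$ in $V_i$, keep every other view unchanged), the same verification that $R_i$ and $PO$ are still respected, and the same read/write case analysis in which exclusion from $SCO_i(\V)$ rules out a conflict with the view of the process issuing $\op{2}{}$, while exclusion from $B_i(\V)$ rules out conflicts with third processes' views. Your observation that a transitive-reduction edge of a total order joins consecutive operations is just a cleaner phrasing of the paper's argument (via uniqueness of the transitive reduction) that the transposed relation remains a total order.
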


The formal proof of the theorem is presented in Appendix \ref{section proofs}.
We show that if any two operations $\op{1}{}, \op{2}{}$ are such that, for some process $i$, $\edge{ \op{1}{}, \op{2}{} } \in \trans{V}_i \setminus \del{ PO \cupdot SCO_i(\V) \cupdot B_i(\V) }$ but $\edge{ \op{1}{}, \op{2}{} }$ is not recorded, then we can swap the two operations during the replay without violating consistency or replay constraints.
This violates the definition of a good record.
Theorems \ref{theorem:RnRStrongCausalConsistencySufficiency} and \ref{theorem:RnRStrongCausalConsistencyNecessity} show that the record $\R = \set{R_i}_{i \in P}$ such that $R_i = \trans{V}_i \setminus \del{ SCO_i( \V ) \cupdot PO \cupdot B_i(\V) }$ is both sufficient and necessary for a correct replay under strong causal consistency.
		\subsection{Online Record for Strong Causal Consistency}
			\label{section strong causal consistency online}
We now look at the optimal record in an online setting.
Consider the following implementation of shared memory.
Each process keeps a copy of every shared variable in $X$.
Processes exchange messages to propagate their writes to shared variables.
Based on the received messages, each process updates the current value of its copy of the shared variables.
At any point in the execution, a read on variable $x$ at process $i$ returns the current value of $x$ stored at $i$.
We abstract this perspective of shared memory as follows.
Each process has a fixed set of read and write operations $(*, i, *, *)$ that it executes in their local order $PO(i)$ by communicating with the shared memory.
Executing an operation may take arbitrarily long and the process may spend arbitrarily long time to execute the next operation but each process only executes one operation at a time.
Via the shared memory, a process $i$ \emph{{observes}} its own 
operations and write operations from other processes one at a time.
The order in which these operations are observed give rise to the view $V_i$.
More formally, the execution proceeds in time steps.
At each time step in the execution, a unique\footnote{Uniqueness of the process makes the model simpler. If more than one process observes an operation at a given time step, we can separate this into multiple time steps ordered by the process identifiers.} process $i$ observes an operation from $(*, i, *, *) \cup (\operatorname{w}, *, *, *)$ and adds it to its view $V_i$.

The online record algorithm proceeds as follows.
Suppose process $i$ wants to record $\edge{ \op{1}{}, \op{2}{} } \in V_i$.
Then, process $i$ must record $\edge{ \op{1}{}, \op{2}{} }$ at the time when it observes $\op{2}{}$.
In the online setting, process $i$ has limited information about views of other processes at any given time in the execution.
How much does process $i$ know?
We assume that, at most, process $i$ has access to the history of other processes brought with the observed operation.
More precisely, at any time in the execution, if process $i$ is 
\emph{{aware}} that $\edge{ \op{1}{}, {\op{2}{}} } \in V_j$, for some 
process $j \ne i$, then process $i$ must have already observed $\op{3}{j} \in 
(*, j, *, *)$ such that $\op{1}{} <_{V_j} \op{2}{} \le_{V_j} \op{3}{j}$.
As discussed in Sections \ref{section introduction} and \ref{section RnR model}, the recording proceeds without information about the internal workings of the shared memory.
However, we assume that the RnR system is aware of the shared memory guarantees.
More precisely, for strong causal consistency, we assume that any process $i$ can check if $\edge{ \op{1}{}, {\op{2}{}} } \in SCO(\V)$ and also if $\edge{ \op{1}{}, {\op{2}{}} } \in PO$.
For a given execution $\V = \set{ V_i }_{i \in P}$, we say that a record $\R = 
\set{ R_i }_{ i \in P }$ is an \emph{{online record}} of $\V$ if $\R$ 
can be recorded in this manner.


Recall from Theorems $\ref{theorem:RnRStrongCausalConsistencySufficiency}$ and $\ref{theorem:RnRStrongCausalConsistencyNecessity}$ that for any process $i$, $R_i = \trans{V}_i \setminus \del{ SCO_i( \V ) \cupdot PO \cupdot B_i(\V) }$ is both sufficient and necessary in the offline setting.
Therefore, if the recording unit can detect, for an edge $\edge{ \op{1}{}, {\op{2}{}} } \in \trans{V}_i$, if it is one of $SCO_i( \V )$, $PO$, or $B_i( \V )$, then the optimal record in the online setting would match exactly the one in the offline scenario.
However, it turns out that the membership of $\edge{ \op{1}{}, {\op{2}{}} }$ in $B_i( \V )$ cannot be checked by the recording unit online.
This is formalized in Theorems \ref{theorem strong causal online sufficiency} and \ref{theorem strong causal online necessity} which state that for each process $i$, $R_i = \trans{V}_i \setminus \del{ SCO_i( \V ) \cupdot PO }$ is both sufficient and necessary in the online setting.
The formal proofs are presented in Appendix \ref{section proofs}.

\begin{theorem} \label{theorem strong causal online sufficiency}
	Consider a set of views $\V = \set{ V_i }_{i \in P}$ that explain a strongly causal consistent execution.
	For each process $i \in P$, let $R_i = \trans{V}_i \setminus \del{ SCO_i( \V ) \cupdot PO }$.
	Then, $\R = \set{R_i}_{i \in P}$ is a good \textbf{online} record of $\V$.
\end{theorem}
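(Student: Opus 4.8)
The phrase ``good online record'' packages two claims, and the plan is to prove them separately: that $\R$ can actually be produced by the online recording procedure, and that $\R$ is good in the sense of RnR Model 1.

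For online-recordability I would exploit the fact that each $V_i$ is a \emph{total} order, so its transitive reduction $\trans{V}_i$ is exactly the chain of consecutive pairs. Because operations enter $V_i$ in observation order, observing a new operation $\op{2}{}$ appends it as the current maximum of $V_i$ and changes $\trans{V}_i$ by exactly one edge, namely $\edge{ \op{1}{}, \op{2}{} }$ where $\op{1}{}$ is the element that was previously maximal; no earlier edge of $\trans{V}_i$ is affected. Hence the only recording decision triggered by observing $\op{2}{}$ concerns this single edge, and it can be made at that time: by hypothesis process $i$ can test $\edge{ \op{1}{}, \op{2}{} } \in PO$ and $\edge{ \op{1}{}, \op{2}{} } \in SCO(\V)$, and since $i$ knows locally whether $\op{2}{}$ is a write it performed or a write of some $j \ne i$, it can also decide membership in $SCO_i(\V)$. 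Recording $\edge{ \op{1}{}, \op{2}{} }$ precisely when it lies outside $SCO_i(\V) \cupdot PO$ reproduces $R_i = \trans{V}_i \setminus \del{ SCO_i(\V) \cupdot PO }$. Note that the procedure never consults $B_i(\V)$, the one predicate unavailable online; this is exactly why the online record keeps the $B_i(\V)$ edges that the offline record of Theorem \ref{theorem:RnRStrongCausalConsistencySufficiency} is allowed to drop.

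For goodness I would argue by monotonicity from the offline result. Let $R_i^{\mathrm{off}} = \trans{V}_i \setminus \del{ SCO_i(\V) \cupdot PO \cupdot B_i(\V) }$ denote the offline record, which is good by Theorem \ref{theorem:RnRStrongCausalConsistencySufficiency}. Since $SCO_i(\V) \cupdot PO \subseteq SCO_i(\V) \cupdot PO \cupdot B_i(\V)$, removing the smaller set leaves a larger relation, so $R_i \supseteq R_i^{\mathrm{off}}$ while still $R_i \subseteq \trans{V}_i \subseteq V_i$; thus $\R$ is a legitimate record. The key point is that goodness is monotone under enlarging the record: take any replay certified valid for $\R$ by a set of views $\V'$, so each $V'_i$ respects $R_i \supseteq R_i^{\mathrm{off}}$ and therefore respects $R_i^{\mathrm{off}}$ as well. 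Then the same execution is a replay certified valid for $\R^{\mathrm{off}}$ by $\V'$, and since $\R^{\mathrm{off}}$ is good we conclude $V'_i = V_i$ for every $i$. Hence $\R$ is good.

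I expect the only step requiring genuine care to be the online-recordability argument, specifically the claims that appending $\op{2}{}$ alters $\trans{V}_i$ by exactly one edge and that every predicate consulted for that edge ($PO$ and $SCO_i(\V)$, but not $B_i(\V)$) is decidable at observation time. The goodness half is essentially free once the offline sufficiency theorem is in hand: enlarging a good record can only shrink the set of certifying views, so no fresh structural reasoning about strong causal order or about $B_i(\V)$ is needed here. That burden instead falls on the companion necessity statement (Theorem \ref{theorem strong causal online necessity}), which must show the retained $B_i(\V)$ edges are genuinely unavoidable in the online setting.
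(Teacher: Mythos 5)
Your proposal is correct and follows essentially the same route as the paper's proof: goodness is obtained from Theorem \ref{theorem:RnRStrongCausalConsistencySufficiency} (you merely make the monotonicity-of-records argument explicit where the paper states it as an immediate consequence), and online recordability is argued exactly as in the paper, via the chain structure of $\trans{V}_i$ and the observation that membership in $SCO_i(\V)$ reduces to the locally checkable predicates $SCO(\V)$ and ownership of $\op{2}{}$.
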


\begin{theorem} \label{theorem strong causal online necessity}
	Consider a set of views $\V = \set{ V_i }_{i \in P}$ that explain a strongly causal consistent execution.
	For any good \textbf{online} record $\R = \set{R_i}_{i \in P}$ of $\V$, for any process $i \in P$ and any two operations $\op{1}{}, \op{2}{} \in ( *, i, *, * ) \cup ( w, *, *, * )$, if $\edge{ \op{1}{}, \op{2}{} } \in \trans{V}_i \setminus PO \cupdot SCO_i(\V)$, then $\edge{ \op{1}{}, \op{2}{} } \in {R_i}$.
\end{theorem}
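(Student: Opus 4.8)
The plan is to bootstrap from the offline necessity result (Theorem~\ref{theorem:RnRStrongCausalConsistencyNecessity}) and then close the only remaining gap---the $B_i(\V)$ edges---by an indistinguishability argument that exploits the awareness restriction of the online model. First I would note that a good online record is in particular a good record, so Theorem~\ref{theorem:RnRStrongCausalConsistencyNecessity} already forces $\edge{ \op{1}{}, \op{2}{} } \in R_i$ for every edge in $\trans{V}_i \setminus \del{ PO \cupdot SCO_i(\V) \cupdot B_i(\V) }$. Hence it suffices to treat an edge $e = \edge{ \opw{1}{i}, \opw{2}{j} } \in B_i(\V) \cap \del{ \trans{V}_i \setminus \del{ PO \cupdot SCO_i(\V) } }$, where by definition of $B_i(\V)$ we have $i \ne j$. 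Since $e \notin SCO_i(\V)$ and $\opw{2}{j}$ is a write of process $j \ne i$, it follows that $e \notin SCO(\V)$, equivalently $e \notin V_j$; I record this for the construction below.

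The core step is an \emph{unawareness lemma}: at the instant process $i$ observes $\opw{2}{j}$---the unique moment at which the online procedure may decide whether to place $e$ in $R_i$---process $i$ is not aware that $e \in V_k$ for any third process $k \ne i,j$. Indeed, such awareness would, by the awareness constraint, require process $i$ to have already observed some write $\op{3}{k} \in (\operatorname{w}, k, *, *)$ of process $k$ with $\opw{1}{i} <_{V_k} \opw{2}{j} \le_{V_k} \op{3}{k}$; as $k \ne j$ forces $\op{3}{k} \ne \opw{2}{j}$, this gives $\opw{2}{j} <_{V_k} \op{3}{k}$, so $\edge{ \opw{2}{j}, \op{3}{k} } \in SCO(\V)$. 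But $V_i$ respects $SCO(\V)$, whence $\opw{2}{j} <_{V_i} \op{3}{k}$, i.e. process $i$ observes $\opw{2}{j}$ strictly before $\op{3}{k}$---contradicting that $\op{3}{k}$ had already been observed. Thus no witness to $e \in B_i(\V)$ is available to process $i$ when it must decide on $e$.

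From here I would conclude by indistinguishability. Because the online decision on $e$ depends only on process $i$'s observation history through the moment it observes $\opw{2}{j}$, and that history is consistent with a strongly causal consistent execution $\V^\dagger$ in which no third process orders $\opw{1}{i}$ before $\opw{2}{j}$ (so $e \notin B_i(\V^\dagger)$), the procedure must make the identical decision in $\V$ and $\V^\dagger$. In $\V^\dagger$ one has $e \in \trans{V^\dagger}_i \setminus \del{ PO \cupdot SCO_i(\V^\dagger) \cupdot B_i(\V^\dagger) }$, so Theorem~\ref{theorem:RnRStrongCausalConsistencyNecessity} forces $e$ to be recorded there; the same decision applied to $\V$ must therefore also record $e$, giving $e \in R_i$.

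I expect the construction and verification of $\V^\dagger$ to be the main obstacle. One must modify only the portion of the execution that process $i$ has not yet observed---reordering or retargeting the views of the processes $k \ne i,j$ so that every third-process witness to the order of $\opw{1}{i}$ before $\opw{2}{j}$ disappears---while (i) keeping process $i$'s observation prefix through $\opw{2}{j}$ intact, (ii) preserving strong causal consistency (every view still respecting $SCO(\V^\dagger) \cup PO$ and every read returning its value), and (iii) ensuring $e$ survives as a transitive-reduction edge of $V^\dagger_i$ lying in none of $PO$, $SCO_i(\V^\dagger)$, $B_i(\V^\dagger)$, so that offline necessity genuinely applies; point (iii) is delicate because one must check that operations observed after $\opw{2}{j}$ are placed later in $V^\dagger_i$ and so cannot slip between $\opw{1}{i}$ and $\opw{2}{j}$. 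The unawareness lemma is exactly what makes such a $\V^\dagger$ attainable: every witness that would pin down the protecting order lies in process $i$'s unobserved future, leaving us free to excise it without disturbing what process $i$ has already seen.
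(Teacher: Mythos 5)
Your proposal follows essentially the same route as the paper's proof: reduce to the $B_i(\V)$ edges via the offline necessity theorem (Theorem \ref{theorem:RnRStrongCausalConsistencyNecessity}), then show by the same $SCO$-respecting argument that, at the moment process $i$ observes $\opw{2}{j}$, its awareness cannot contain any third-process witness, so the online decision cannot depend on membership in $B_i(\V)$ and the edge must be recorded. The construction of the alternative execution $\V^\dagger$ that you flag as the main remaining obstacle is precisely the step the paper itself leaves informal (it merely asserts that both orderings of $\opw{1}{i}$ and $\opw{2}{j}$ remain valid future observations for every third process), so your outline is, if anything, more explicit about what still needs verification.
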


		\subsection{Causal Consistency} \label{section causal consistency}
Causal consistency (Definition \ref{definition causal consistency}) imposes less restrictions on views that can explain an execution as compared to strong causal consistency.
As discussed in Section \ref{section introduction}, we expect a smaller record for strong causal consistency than causal consistency.
Indeed consider a simple execution on two processes and two operations where process $1$ performs $\opw{}{1}$ and process $2$ performs $\opw{}{2}$.
Consider the set of views given in Figure \ref{figure causal vs strong} that explains this execution under both causal and strong causal consistency.
Under strong causal consistency, only process $1$ has to record $\edge{ 
\opw{}{2}, \opw{}{1} }$.
However, since causal consistency imposes no restrictions in this particular 
example, a good record for causal consistency will require process $2$ to 
record $\edge{ \opw{}{2}, \opw{}{1} }$ as well.

\begin{figure}
	\centering
	
	\begin{tikzpicture}
	\node at (-1, 6) {$V_1$:};
	\node at (-1, 5) {$V_2$:};
	
	\node (V1w2) at (0, 6) {$\opw{}{2}$};
	\node (V1w1) at (3, 6) {$\opw{}{1}$};
	
	\node (V2w2) at (0, 5) {$\opw{}{2}$};
	\node (V2w1) at (3, 5) {$\opw{}{1}$};
	
	\path[->]
	(V1w2) edge[color=red] node[above] {$R_1$} (V1w1)
	
	(V2w2) edge node[above] {$SCO_2(\V)$} (V2w1)
	;
	\end{tikzpicture}
	
	\vspace{\floatsep}
	\vspace{\floatsep}
	
	\begin{tikzpicture}
	\node at (5, 6) {$V'_1$:};
	\node at (5, 5) {$V'_2$:};
	
	\node (V'1w2) at (6, 6) {$\opw{}{2}$};
	\node (V'1w1) at (9, 6) {$\opw{}{1}$};
	
	\node (V'2w1) at (6, 5) {$\opw{}{1}$};
	\node (V'2w2) at (9, 5) {$\opw{}{2}$};
	
	\path[->]
	(V'1w2) edge[color=red] node[above] {$R_1$} (V'1w1)
	
	(V'2w1) edge (V'2w2)
	;
	\end{tikzpicture}
	
	\hrule
	\vspace{\floatsep}
	\caption{A simple example where the required record is smaller for strong 
	causal consistency. $\set[0]{ V'_1, V'_2 }$ can certify a replay under causal 
	consistency but not under strong causal consistency.}
	\label{figure causal vs strong}
\end{figure}

The question of what is the optimal record for causal consistency is still open.
We give a simple counterexample that shows that the natural strategy following the scheme of strong causal consistency does not work.
More concretely, consider a set of views $\V = \set{V_i}_{i \in P}$ that explain a causally consistent execution.
For each process $i$, let $R_i = \trans{V}_i \setminus \del{ WO \cupdot PO }$.
We give a simple four process example that shows that 
$\R = \set{R_i}_{i \in P}$ is not a good record of $\V$.
The program for this example is given in Figures \ref{figure causal consistency counter example original execution} and \ref{figure causal consistency counter example replay}.
Figure \ref{figure causal consistency counter example original execution} gives the writes-to relation in bold edges, for the original execution of the program, as well as a set of views $\V$ that explains the execution.
The red edges represent the recorded edges, as specified above.
Figure \ref{figure causal consistency counter example replay} gives one 
possible replay where the reads return the default values for the variables (so 
that the writes-to relation is empty), as well as a set of views $\V'$ that 
certifies the replay to be valid for the given record.

\begin{figure*}
	\centering
	\begin{tabular}{c|c}
		\begin{tikzpicture}
		\node at (0, 3) {Process 1:};
		\node at (0, 2) {Process 2:};
		\node at (0, 1) {Process 3:};
		\node at (0, 0) {Process 4:};
		
		\node (w1) at (1.5, 3) {$\opw{}{1}(x)$};
		\node (r2) at (1.5, 2) {$\opr{}{2}(x)$};
		\node (w2) at (4, 2) {$\opw{}{2}(x)$};
		\node (w3) at (1.5, 1) {$\opw{}{3}(y)$};
		\node (r4) at (1.5, 0) {$\opr{}{4}(y)$};
		\node (w4) at (4, 0) {$\opw{}{4}(y)$};
		
		\path[->]
		(r2) edge node[above] {$PO$} (w2)
		(r4) edge node[above] {$PO$} (w4)
		(w1) edge[line width=1.4pt] (r2)
		(w3) edge[line width=1.4pt] (r4)
		;
		\end{tikzpicture}
		&
		\begin{tikzpicture}
		\node at (-1, 6) {$V_1$:};
		\node at (-1, 5) {$V_2$:};
		\node at (-1, 4) {$V_3$:};
		\node at (-1, 3) {$V_4$:};
		
		\node (V1w1) at (0, 6) {$\opw{}{1}(x)$};
		\node (V1w3) at (2.25, 6) {$\opw{}{3}(y)$};
		\node (V1w4) at (4.5, 6) {$\opw{}{4}(y)$};
		\node (V1w2) at (6.75, 6) {$\opw{}{2}(x)$};
		
		\node (V2w1) at (0, 5) {$\opw{}{1}(x)$};
		\node (V2w3) at (2.25, 5) {$\opw{}{3}(y)$};
		\node (V2w4) at (4.5, 5) {$\opw{}{4}(y)$};
		\node (V2r2) at (6.75, 5) {$\opr{}{2}(x)$};
		\node (V2w2) at (9, 5) {$\opw{}{2}(x)$};
		
		\node (V3w3) at (0, 4) {$\opw{}{3}(y)$};
		\node (V3w1) at (2.25, 4) {$\opw{}{1}(x)$};
		\node (V3w2) at (4.5, 4) {$\opw{}{2}(x)$};
		\node (V3w4) at (6.75, 4) {$\opw{}{4}(y)$};
		
		\node (V4w3) at (0, 3) {$\opw{}{3}(y)$};
		\node (V4w1) at (2.25, 3) {$\opw{}{1}(x)$};
		\node (V4w2) at (4.5, 3) {$\opw{}{2}(x)$};
		\node (V4r4) at (6.75, 3) {$\opr{}{4}(y)$};
		\node (V4w4) at (9, 3) {$\opw{}{4}(y)$};
		\path[->]
		(V1w1) edge[color=red] node[above] {$R_1$} (V1w3)
		(V1w3) edge node[above] {$WO$} (V1w4)
		(V1w4) edge[color=red] node[above] {$R_1$} (V1w2)
		
		(V2w1) edge[color=red] node[above] {$R_2$} (V2w3)
		(V2w3) edge node[above] {$WO$} (V2w4)
		(V2w4) edge[color=red] node[above] {$R_2$} (V2r2)
		(V2r2) edge node[above] {$PO$} (V2w2)
		
		(V3w3) edge[color=red] node[above] {$R_3$} (V3w1)
		(V3w1) edge node[above] {$WO$} (V3w2)
		(V3w2) edge[color=red] node[above] {$R_3$} (V3w4)
		
		(V4w3) edge[color=red] node[above] {$R_4$} (V4w1)
		(V4w1) edge node[above] {$WO$} (V4w2)
		(V4w2) edge[color=red] node[above] {$R_4$} (V4r4)
		(V4r4) edge node[above] {$PO$} (V4w4)
		;
		\end{tikzpicture}
		\\
	\end{tabular}
	\hrule
	\vspace{\floatsep}
	\caption{A 4 process program where the bold edges represent the writes-to 
	relation for a possible execution. The set of views $\set{V_i}_{i=1}^{4}$ 
	explains this execution. The recorded edges are given in red.}
	\label{figure causal consistency counter example original execution}
\end{figure*}

\begin{figure*}
	\centering
	\begin{tabular}{c|c}
		\begin{tikzpicture}
		\node at (0, 3) {Process 1:};
		\node at (0, 2) {Process 2:};
		\node at (0, 1) {Process 3:};
		\node at (0, 0) {Process 4:};
		
		\node (w1) at (1.5, 3) {$\opw{}{1}(x)$};
		\node (r2) at (1.5, 2) {$\opr{}{2}(x)$};
		\node (w2) at (4, 2) {$\opw{}{2}(x)$};
		\node (w3) at (1.5, 1) {$\opw{}{3}(y)$};
		\node (r4) at (1.5, 0) {$\opr{}{4}(y)$};
		\node (w4) at (4, 0) {$\opw{}{4}(y)$};
		
		\path[->]
		(r2) edge node[above] {$PO$} (w2)
		(r4) edge node[above] {$PO$} (w4)
		;
		\end{tikzpicture}
		&
		\begin{tikzpicture}
		\node at (-1, 6) {$V'_1$:};
		\node at (-1, 5) {$V'_2$:};
		\node at (-1, 4) {$V'_3$:};
		\node at (-1, 3) {$V'_4$:};
		
		\node (V1w1) at (4.5, 6) {$\opw{}{1}(x)$};
		\node (V1w3) at (6.75, 6) {$\opw{}{3}(y)$};
		\node (V1w4) at (0, 6) {$\opw{}{4}(y)$};
		\node (V1w2) at (2.25, 6) {$\opw{}{2}(x)$};
		
		\node (V2w1) at (6.75, 5) {$\opw{}{1}(x)$};
		\node (V2w3) at (9, 5) {$\opw{}{3}(y)$};
		\node (V2w4) at (0, 5) {$\opw{}{4}(y)$};
		\node (V2r2) at (2.25, 5) {$\opr{}{2}(x)$};
		\node (V2w2) at (4.5, 5) {$\opw{}{2}(x)$};
		
		\node (V3w3) at (4.5, 4) {$\opw{}{3}(y)$};
		\node (V3w1) at (6.75, 4) {$\opw{}{1}(x)$};
		\node (V3w2) at (0, 4) {$\opw{}{2}(x)$};
		\node (V3w4) at (2.25, 4) {$\opw{}{4}(y)$};
		
		\node (V4w3) at (6.75, 3) {$\opw{}{3}(y)$};
		\node (V4w1) at (9, 3) {$\opw{}{1}(x)$};
		\node (V4w2) at (0, 3) {$\opw{}{2}(x)$};
		\node (V4r4) at (2.25, 3) {$\opr{}{4}(y)$};
		\node (V4w4) at (4.5, 3) {$\opw{}{4}(y)$};
		\path[->]
		(V1w1) edge[color=red] node[above] {$R_1$} (V1w3)
		(V1w2) edge node[above] {} (V1w1)
		(V1w4) edge[color=red] node[above] {$R_1$} (V1w2)
		
		(V2w1) edge[color=red] node[above] {$R_2$} (V2w3)
		(V2w2) edge node[above] {} (V2w1)
		(V2w4) edge[color=red] node[above] {$R_2$} (V2r2)
		(V2r2) edge node[above] {$PO$} (V2w2)
		
		(V3w3) edge[color=red] node[above] {$R_3$} (V3w1)
		(V3w4) edge node[above] {} (V3w3)
		(V3w2) edge[color=red] node[above] {$R_3$} (V3w4)
		
		(V4w3) edge[color=red] node[above] {$R_4$} (V4w1)
		(V4w4) edge node[above] {} (V4w3)
		(V4w2) edge[color=red] node[above] {$R_4$} (V4r4)
		(V4r4) edge node[above] {$PO$} (V4w4)
		;
		\end{tikzpicture}
		\\
	\end{tabular}
	\hrule
	\vspace{\floatsep}
	\caption{A possible replay of the execution in Figure \ref{figure causal 
			consistency counter example original execution} where the reads return 
			the 
		default values. The set of views $\set[0]{V'_i}_{i=1}^{4}$ certify that 
		this 
		replay is valid for the record from Figure \ref{figure causal consistency 
			counter example original execution}.}
	\label{figure causal consistency counter example replay}
\end{figure*}

Observe that $\V' \ne \V$.
There are two $WO$ edges $\edge{ \opw{}{1}, \opw{}{2} }$ and $\edge{ \opw{}{3}, \opw{}{4} }$ in the original execution while $WO'$, the write-read-write order for the replay, is empty.
Note that, in this example, not only do the views differ, but the reads return the wrong values in the replay as well.

The example replay execution is causally consistent, but it has the strange property that processes do not commit their writes locally before informing other processes.
For example, consider $\opw{}{2}$ and $\opw{}{4}$.
We have $\edge{ \opw{}{4}, \opw{}{2} } \in V_2$ but $\edge{ \opw{}{2}, \opw{}{4} } \in V_4$.
Both process $2$ and $4$ observed the other process's write before they saw their own; one of these processes distributed it's write to the other, then observed the other process's write, then committed it's own write.
This does not violate causality because neither process had read the other process' write (note, however, that this does violate strong causality).
Consider the setting where each process keeps a copy of each variable and the shared memory is implemented via message passing.
Then either process $2$ or process $4$ sends messages for its write before writing the local copy of the corresponding variable.
Such an execution would not be possible if each process always wrote to their local copy of the variable first and then sent the relevant messages to other processes.

	\section{Optimal Records for RnR Model 2} \label{section optimal record DRO}
		\subsection{Offline Record for Strong Causal Consistency}
			\label{section strong causal consistency DRO}
In this section we consider offline record for strong causal consistency.
In this case, as in Section \ref{section strong causal consistency}, the entire set of per-process views $\V = \set{V_i}_{i \in P}$ are made available to the RnR system which then determines the record that must be saved.
We define strong write order inductively as below.
It will be important for the optimal record for this RnR model.

\begin{definition} [Strong Write Order] \label{definition:strongWriteOrder}
	Given a set of views $\V = \set{ V_i }_{i \in P}$, two writes, $\opw{1}{} \in (\operatorname{w}, *, *, *)$ and $\opw{2}{i} \in ( \operatorname{w}, i, *, * )$, are ordered
	\begin{enumerate} [labelwidth=!, labelindent=0pt, topsep=0pt, itemsep=0pt]
		\item
		$\edge{\opw{1}{}, \opw{2}{i}} \in { SWO^1( \V ) }$ if $\edge{\opw{1}{}, \opw{2}{i}} \in { DRO(V_i) \cup \del{PO | (*, i, *, *) \cup (\operatorname{w}, *, *, *)} }$,

		\item
		$\edge{\opw{1}{}, \opw{2}{i}} \in { SWO^k( \V ) }$ if $\edge{\opw{1}{}, \opw{2}{i}} \in { DRO(V_i) \cup SWO^{k-1}( \V ) \cup \del{PO | (*, i, *, *) \cup (\operatorname{w}, *, *, *)} }$.
	\end{enumerate}
	We say that $\opw{1}{}$ and $\opw{2}{i}$ are ordered by \emph{strong write order}, $\edge{\opw{1}{}, \opw{2}{i}} \in { SWO( \V ) }$, if $\edge{\opw{1}{}, \opw{2}{i}} \in { SWO^k( \V ) }$ for some $k$.
	Furthermore, if $\edge{\opw{1}{}, \opw{2}{i}} \in { SWO( \V ) }$, then for every process $j \ne i$, we say that $\edge{\opw{1}{}, \opw{2}{i}} \in { SWO_j( \V ) }$.
\end{definition}

Note that for strongly causal consistent executions the strong write order is a subset of strong causal order.
Hence strong write order is a partial order for strongly causal consistent executions.
In contrast with RnR Model 1, we are only allowed to record $DRO$ edges.
Intuitively, $SWO$ captures those $SCO$ edges that can be used to influence the views of other processes under this model.
The base case captures those edges that will be forced on every process if process $i$ reproduces $DRO(V_i)$ faithfully.
The inductive case captures those edges that would be forced on every process if the previous level is forced and if process $i$ reproduces $DRO(V_i)$ faithfully.
However note that, in contrast with the RnR Model 1, $SWO$ may influence some relations than cannot be recorded.

The following definition will be useful in presenting the optimal records.

\begin{definition} \label{definition DRO Ai}
	Given a set of views $\V = \set{ V_i }_{i \in P}$, the relation $A_i(\V)$, for a process $i \in P$, is defined as $A_i(\V) = DRO(V_i) \cup SWO_i(\V) \cup \del{PO | (*, i, *, *) \cup (\operatorname{w}, *, *, *)}$.
	Furthermore $\A(\V) = \set{ A_i(\V) }_{i \in P}$.
\end{definition}

\begin{observation} \label{observation DRO Ai same as SWO}
	Consider a set of views $\V = \set{ V_i }_{i \in P}$ that explain a strongly causal execution and two writes, $\opw{1}{} \in (\operatorname{w}, *, *, *)$ and $\opw{2}{i} \in ( \operatorname{w}, i, *, * )$.
	Then $\edge{\opw{1}{}, \opw{2}{i}} \in A_i(\V)$ if and only if $\edge{\opw{1}{}, \opw{2}{i}} \in SWO( \V )$.
\end{observation}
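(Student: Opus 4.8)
The plan is to prove the two inclusions separately by unfolding the definitions and exploiting the layered structure of $SWO(\V)$. Throughout I would write $P_j$ for the program-order restriction $PO \mid \del{(*, j, *, *) \cup (\operatorname{w}, *, *, *)}$, so that by Definition~\ref{definition DRO Ai} the relation $A_i(\V)$ is the transitive closure of $DRO(V_i) \cup SWO_i(\V) \cup P_i$, while by Definition~\ref{definition:strongWriteOrder} each $SWO^k(\V)$, restricted to edges ending at a write of some process $m$, is the transitive closure of $DRO(V_m) \cup SWO^{k-1}(\V) \cup P_m$. Two preliminary facts I would record first: the levels are nested, $SWO^{k-1}(\V) \subseteq SWO^k(\V)$, which is immediate since $SWO^{k-1}(\V)$ occurs inside the union defining $SWO^k(\V)$ for every target process; and $SWO_i(\V) \subseteq SWO(\V)$, since $SWO_i(\V)$ consists of exactly those $SWO(\V)$ edges whose second write belongs to a process other than $i$.

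For the inclusion $SWO(\V) \subseteq A_i(\V)$ on edges $\edge{\opw{1}{}, \opw{2}{i}}$ ending at a write of process $i$, I would induct on the least level $k$ with $\edge{\opw{1}{}, \opw{2}{i}} \in SWO^k(\V)$. In the base case $k = 1$ the edge lies in the transitive closure of $DRO(V_i) \cup P_i$, and since both generators are contained in $A_i(\V)$ and $A_i(\V)$ is transitively closed, the edge is in $A_i(\V)$. For $k > 1$, fix a witnessing path from $\opw{1}{}$ to $\opw{2}{i}$ whose individual edges come from $DRO(V_i)$, $SWO^{k-1}(\V)$, or $P_i$. Edges of the first and third kinds are generators of $A_i(\V)$. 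For an edge $\edge{\opw{a}{}, \opw{b}{m}} \in SWO^{k-1}(\V)$ I would split on its target process: if $m \ne i$, then the final clause of Definition~\ref{definition:strongWriteOrder} puts $\edge{\opw{a}{}, \opw{b}{m}} \in SWO_i(\V) \subseteq A_i(\V)$; if $m = i$, the induction hypothesis gives $\edge{\opw{a}{}, \opw{b}{m}} \in A_i(\V)$. Hence every edge of the path lies in $A_i(\V)$, and transitivity finishes the case.

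For the reverse inclusion $A_i(\V) \subseteq SWO(\V)$, again on edges ending at $\opw{2}{i}$, I would invoke finiteness: since $SWO_i(\V)$ is a finite set and the levels are nested, there is a single index $K$ with $SWO_i(\V) \subseteq SWO^K(\V)$. Given $\edge{\opw{1}{}, \opw{2}{i}} \in A_i(\V)$, take a witnessing path through the generators $DRO(V_i)$, $SWO_i(\V)$, and $P_i$; all three are contained in $DRO(V_i) \cup SWO^K(\V) \cup P_i$, whose transitive closure is precisely $SWO^{K+1}(\V)$ on edges ending at a write of process $i$. Therefore $\edge{\opw{1}{}, \opw{2}{i}} \in SWO^{K+1}(\V) \subseteq SWO(\V)$, and the two inclusions together give the claimed equality.

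The step I expect to need the most care is that the transitive closures defining both $A_i(\V)$ and each $SWO^k(\V)$ may route through read operations and through lower-level $SWO$ edges, so the witnessing paths live over the whole operation set rather than over writes alone. The argument sidesteps this by treating each generator edge atomically---never decomposing an $SWO^{k-1}(\V)$ edge into its internal path---so that the only real work is the case split on the target process of each such edge, which is exactly what keeps the induction inside $A_i(\V)$. The strong-causal hypothesis is needed only to ensure, via the remark preceding Definition~\ref{definition DRO Ai}, that $SWO(\V)$ and hence $A_i(\V)$ are genuine partial orders so these transitive closures are acyclic; the set equality itself is purely combinatorial.
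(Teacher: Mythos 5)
Your proof is correct and follows essentially the same route as the paper: the nontrivial inclusion $SWO(\V) \subseteq A_i(\V)$ is handled by induction on the level $k$, splitting each $SWO^{k-1}(\V)$ edge by whether its target write is on process $i$ (inductive hypothesis) or not (so it lies in $SWO_i(\V)$, a generator of $A_i(\V)$), and closing under transitivity. Your finiteness argument for the direction $A_i(\V) \subseteq SWO(\V)$ merely spells out what the paper dispatches with ``by Definition \ref{definition:strongWriteOrder},'' so the two proofs coincide in substance.
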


Note that this implies that $A_i(\V) \supseteq SWO(\V)$, for all $i \in P$, as follows.
Each edge in $SWO(\V)$ is either a $SWO_i(\V)$ edge or a ${SWO(\V) \setminus SWO_i(\V)}$ edge.
Observation \ref{observation DRO Ai same as SWO} implies $\del{SWO(\V) \setminus SWO_i(\V)} \subseteq A_i(\V)$ and $SWO_i(\V) \subseteq A_i(\V)$ by Definition \ref{definition DRO Ai}.

\begin{proof}
	\begin{itemize}[labelwidth=!, labelindent=0pt, topsep=0pt, itemsep=0pt]
		\item[$\Rightarrow$]
		Suppose $\edge{\opw{1}{}, \opw{2}{i}} \in A_i(\V)$.
		Then $\edge{\opw{1}{}, \opw{2}{i}} \in SWO(\V)$ by Definition \ref{definition:strongWriteOrder}.

		\item[$\Leftarrow$]
		Suppose $\edge{\opw{1}{}, \opw{2}{i}} \in SWO^k(\V)$ for some $k > 0$.
		We proceed by induction on $k$.
		For the base case, we have that $\edge{\opw{1}{}, \opw{2}{i}} \in { DRO(V_i) \cup \del{PO | (*, i, *, *) \cup (\operatorname{w}, *, *, *)} }$ and so $\edge{\opw{1}{}, \opw{2}{i}} \in A_i(\V)$.
		For the inductive step, we have that $\edge{\opw{1}{}, \opw{2}{i}} \in DRO(V_i) \cup SWO^{k-1}(\V) \cup \del{PO | (*, i, *, *) \cup (\operatorname{w}, *, *, *)}$.
		Now $SWO^{k-1}(\V) = SWO^{k-1}_i(\V) \cupdot \del[1]{ SWO^{k-1}(\V) \setminus SWO^{k-1}_i(\V) }$.
		Observe that $\del[1]{ SWO^{k-1}(\V) \setminus SWO^{k-1}_i(\V) } \subseteq A_i(\V)$ by the inductive hypothesis.
		Furthermore $SWO^{k-1}_i(\V) \subseteq A_i(\V)$ by Definition \ref{definition DRO Ai}.
		Since $A_i(\V)$ is closed under transitivity, the result follows.
	\end{itemize}
\end{proof}

Similar to the record for RnR Model 1 in Section \ref{section strong causal consistency}, we wish to capture the effect of reordering two operations on the $SWO$ that violates the views of some other process.
More specifically, for two operations $\op{1}{} \in (*, *, *, *)$ and $\op{2}{} \in (*, *, *, *)$ such that $\op{1}{} \DRO{V_i} \op{2}{}$ for some process $i$, reordering them as $\op{2}{} \DRO{V_i} \op{1}{}$, may introduce some $SWO(\V)$ edges that violate some other process's view.
The following two definitions capture this notion.
\begin{definition} \label{definition DRO Ci}
	Given a set of views $\V = \set{ V_i }_{i \in P}$, a process $i \in P$, and two operations $\op{1}{} \in (*, *, *, *)$ and $\opw{2}{} \in (\operatorname{w}, *, *, *)$, the relation $C_i( \V, \op{1}{}, \opw{2}{} )$ is defined inductively as follows.
	\begin{enumerate} [labelwidth=!, labelindent=0pt, topsep=0pt, itemsep=0pt]
		\item
		Two write operations $\opw{3}{} \in (\operatorname{w}, *, *, *)$ and $\opw{4}{i} \in (\operatorname{w}, i, *, *)$ are ordered $\edge{\opw{3}{}, \opw{4}{i}} \in C_i^1( \V, \op{1}{}, \opw{2}{} )$ if
			\begin{enumerate} [labelwidth=!, labelindent=0pt, topsep=0pt, itemsep=0pt]
				\item
				$\op{1}{} \le_{A_i(\V)} \opw{4}{i}$, and

				\item
				$\opw{3}{} \le_{A_i(\V)} \opw{2}{}$.
			\end{enumerate}

		\item
		Two write operations $\opw{3}{} \in (\operatorname{w}, *, *, *)$ and $\opw{4}{i'} \in (\operatorname{w}, i', *, *)$ are ordered $\edge{\opw{3}{}, \opw{4}{i'}} \in C_i^k( \V, \op{1}{}, \opw{2}{} )$ if there exist two write operations $\opw{5}{} \in (\operatorname{w}, *, *, *)$ and $\opw{6}{} \in (\operatorname{w}, *, *, *)$ such that
		\begin{enumerate} [labelwidth=!, labelindent=0pt, topsep=0pt, itemsep=0pt]
			\item
			$\edge{\opw{5}{}, \opw{6}{}} \in C_i^{k-1}( \V, \op{1}{}, \opw{2}{} )$,

			\item
			$\opw{3}{} \le_{A_{i'}(\V) \cup C_i^{k-1}( \V, \op{1}{}, \opw{2}{} )} \opw{5}{}$, and

			\item
			$\opw{6}{} \le_{A_{i'}(\V)} \opw{4}{i'}$.
		\end{enumerate}
	\end{enumerate}
	Two write operations $\opw{3}{} \in (\operatorname{w}, *, *, *)$ and $\opw{4}{} \in (\operatorname{w}, *, *, *)$ are ordered $\edge{\opw{3}{}, \opw{4}{}} \in C_i( \V, \op{1}{}, \opw{2}{} )$ if $\edge{\opw{3}{}, \opw{4}{}} \in C^k_i( \V, \op{1}{}, \opw{2}{} )$ for some $k \ge 1$.
\end{definition}

\begin{definition} \label{definition DRO Bi}
	Given a set of views $\V = \set{ V_i }_{i \in P}$, the relation $B_i( \V )$, for a process $i \in P$, is defined as follows.
	Two operations on the same variable $x$, $\op{1}{} \in (*, *, x, *)$ and $\opw{2}{} \in (\operatorname{w}, *, x, *)$, are ordered $\edge{\op{1}{}, \opw{2}{}} \in {B_i(\V)}$ if
	\begin{enumerate} [labelwidth=!, labelindent=0pt, topsep=0pt, itemsep=0pt]
		\item
		$\edge{\op{1}{}, \opw{2}{}} \in DRO(V_i)$, and

		\item
		there exists a process $m \in P$ such that either
		\begin{enumerate} [labelwidth=!, labelindent=0pt, topsep=0pt, itemsep=0pt]
			\item $m \ne i$ and ${A_m(\V)} \cupdot C_i( \V, \op{1}{}, \opw{2}{} )$ has a cycle, or
			\item $m = i$ and $\del[1]{A_m(\V) \setminus \set[0]{ \edge{\op{1}{}, \opw{2}{}} } } \cupdot C_i( \V, \op{1}{}, \opw{2}{} )$ has a cycle.
		\end{enumerate}
	\end{enumerate}
\end{definition}

Informally, in any set of views $\V'$ that explain a replay of $\R$, setting $\edge{ \opw{2}{}, \op{1}{} } \in DRO(V'_i)$ will create a $SWO(\V')$ edge which will conflict with $A_m(\V')$.
The Appendix \ref{section observations} contains some useful observations which are needed in the proofs later on.
The following theorem states that for every process $i$ it suffices to record all edges in $\trans{A}_i(\V)$, except those in $SWO_i(\V)$, $PO$, or $B_i(\V)$.

\begin{theorem} \label{theorem DRO RnRStrongCausalConsistencySufficiency}
	Consider a set of views $\V = \set{ V_i }_{i \in P}$ that explain a strongly causal consistent execution.
	For each process $i \in P$, let $R_i = \trans{A}_i(\V) \setminus \del{ SWO_i(\V) \cupdot PO \cupdot B_i(\V)}$.
	Then, $\R = \set{R_i}_{i \in P}$ is a good record of $\V$.
\end{theorem}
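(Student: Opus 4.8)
The plan is to reduce goodness to a single preservation statement and then establish that statement by a combined induction tied to the inductive level of the strong write order. First I would dispose of two bookkeeping facts. Since every covering edge of a transitive closure must already lie in one of the generating relations, every edge of $\trans{A}_i(\V)$ lies in $DRO(V_i) \cup SWO_i(\V) \cup \del{PO | (*, i, *, *) \cup (\operatorname{w}, *, *, *)}$; hence $R_i = \trans{A}_i(\V) \setminus \del{ SWO_i(\V) \cupdot PO \cupdot B_i(\V)} \subseteq DRO(V_i)$, so $\R$ is a legal record for RnR Model 2. Also, because $SWO_i(\V) \subseteq SWO(\V) \subseteq SCO(\V)$ and $V_i$ respects $SCO(\V) \cup \del{PO | (*, i, *, *) \cup (\operatorname{w}, *, *, *)}$ by Definition \ref{definition LRO strong causal consistency}, the relation $A_i(\V)$ is a partial order that $V_i$ respects. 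For the reduction itself, note that for each variable $x$ the restrictions $V_i \mid (*, *, x, *)$ and $V'_i \mid (*, *, x, *)$ are total orders on the \emph{same} fixed set of operations, so any inclusion $DRO(V_i) \subseteq DRO(V'_i)$ already forces $DRO(V'_i) = DRO(V_i)$. Thus it suffices to prove that for any $\V'$ certifying a replay of $\R$ and any process $i$, every edge of $DRO(V_i)$ is an edge of $DRO(V'_i)$.

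I would obtain this from the stronger \textbf{Preservation Lemma}: for any certifying $\V'$, (a) $SWO(\V) \subseteq SWO(\V')$, and (b) $V'_i$ respects $A_i(\V)$ for every $i$ (which contains the desired inclusion $DRO(V_i) \subseteq DRO(V'_i)$). The two parts must be proved together by induction on the level $k$ witnessing membership in $SWO^k(\V)$, because $SWO$ propagation is used to justify $DRO$ edges while $DRO$ edges are in turn used to build $SWO$. In the inductive step an $SWO^k(\V)$ edge $\edge{\opw{1}{}, \opw{2}{i}}$ decomposes (Definition \ref{definition:strongWriteOrder}) as a chain in $DRO(V_i) \cup SWO^{k-1}(\V) \cup \del{PO | (*, i, *, *) \cup (\operatorname{w}, *, *, *)}$: the $PO$ links are preserved since program order is fixed, the $SWO^{k-1}(\V)$ links are preserved by the induction hypothesis, and each $DRO(V_i)$ link is handled by part (b) at a strictly lower level, after which the chain is realized inside $V'_i$ and yields an $SWO(\V')$ edge. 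To prove (b) for a target edge $\edge{\op{1}{}, \op{2}{}} \in DRO(V_i)$ I would route through $\trans{A}_i(\V)$: it is enough to preserve every covering edge on the $\trans{A}_i(\V)$-path from $\op{1}{}$ to $\op{2}{}$, and each such covering edge is either recorded (preserved because $V'_i$ respects $R_i$), a $PO$ edge (preserved), an $SWO_i(\V)$ edge (preserved by part (a)), or a $B_i(\V)$ edge.

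The crux is the $B_i(\V)$ case, for which I would use Definitions \ref{definition DRO Ci} and \ref{definition DRO Bi} directly together with the auxiliary observations of Appendix \ref{section observations}. Suppose $\edge{\op{1}{}, \opw{2}{}} \in B_i(\V) \cap \trans{A}_i(\V)$ and, for contradiction, that $V'_i$ orders $\opw{2}{} <_{V'_i} \op{1}{}$. The relation $C_i(\V, \op{1}{}, \opw{2}{})$ is exactly the set of strong-write orderings that this reversal forces: the reversal makes $\opw{2}{} \le_{A_i(\V')} \op{1}{}$, so each base pair of $C_i^1$ is realized in $V'_i$, and each level $C_i^k$ then certifies a genuine ordering via $A_{i'}(\V')$ and the already-realized $C_i^{k-1}$ edges; by Observation \ref{observation DRO Ai same as SWO} these are actual $SWO(\V')$ edges. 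Since $\edge{\op{1}{}, \opw{2}{}} \in B_i(\V)$, some process $m$ has a cycle in $A_m(\V) \cupdot C_i(\V, \op{1}{}, \opw{2}{})$, or, when $m = i$, in $\del[1]{A_m(\V) \setminus \set[0]{ \edge{\op{1}{}, \opw{2}{}} }} \cupdot C_i(\V, \op{1}{}, \opw{2}{})$. Transporting this cycle to $\V'$, where the $C_i$ edges are realized $SWO(\V')$ edges and the $A_m$ edges are respected by $V'_m$ (part (b) of the induction hypothesis), produces a cycle among orderings that $V'_m$ must respect, which is impossible for a view. Hence the reversal cannot occur and the $B_i(\V)$ edge is preserved, closing the induction.

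With the Preservation Lemma in hand, part (b) gives $DRO(V_i) \subseteq DRO(V'_i)$ for every $i$, and the per-variable totality argument upgrades this to $DRO(V'_i) = DRO(V_i)$, so $\R$ is good. I expect the main obstacle to be making the combined induction genuinely well-founded: the measure must decrease simultaneously across the $SWO^k$ levels, the $\trans{A}_i(\V)$-path decompositions, and the nested levels $C_i^k$ inside the $B_i$ cycle argument, and I would need to verify that every $SWO(\V')$ edge invoked when transporting a $C_i$ cycle to the replay has already been established at a strictly smaller value of this measure. Pinning down that measure, and correctly handling the $m = i$ sub-case where $\edge{\op{1}{}, \opw{2}{}}$ must itself be excluded from $A_m(\V)$ before the cycle is detected, is where the real care is required.
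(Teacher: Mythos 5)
Your high-level skeleton matches the paper's: a preservation lemma ($SWO$ preservation together with preservation of the original per-process constraints), a path decomposition in $\trans{A}_i(\V)$ with the four edge types (recorded, $PO$, $SWO_i(\V)$, $B_i(\V)$), and a cycle-based contradiction for the $B_i(\V)$ case; the reduction via per-variable totality and the check that $R_i \subseteq DRO(V_i)$ are also fine. But the engine you propose --- a single combined induction on the level $k$ of $SWO^k(\V)$ --- does not work as stated, and you yourself flag it as unresolved. Concretely: to handle a $DRO(V_i)$ link by part (b) you must traverse a $\trans{A}_i(\V)$-path whose edges include $SWO_i(\V)$ edges; these are $SWO(\V)$ edges targeting writes of \emph{other} processes, and their $SWO$-level bears no relation to (and can exceed) the level $k$ being processed, so they are not ``at a strictly lower level.'' Worse, your $B_i(\V)$ step is circular: realizing the base pairs of $C_i^1(\V, \op{1}{}, \opw{2}{})$ as $SWO(\V')$ edges requires transporting the chains $\opw{3}{} \le_{A_i(\V)} \opw{2}{}$ and $\op{1}{} \le_{A_i(\V)} \opw{4}{i}$ into $A_i(\V')$, and transporting the $A_m(\V)$ edges of the witnessing cycle requires part (b) for process $m$ --- in both instances for arbitrary edges with no smaller ``measure.'' No well-founded measure of the kind you describe is in sight, and producing one is not a detail to be checked later; it is the entire difficulty of the theorem.

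The paper breaks this circularity with a different induction structure, proved sequentially rather than jointly. Part (a) of Lemma \ref{lemma replay preserves SWO and B_i} ($SWO(\V') \supseteq SWO(\V)$) is proved \emph{first and independently of} part (b), by taking a counterexample (``bad write'') minimal with respect to the partial order $SWO(\V)$: every $SWO_1(\V)$ edge on the connecting path points to a write lying below the bad write in $SWO(\V)$, so minimality applies to it, and --- crucially --- the $B_1(\V)$ case is disposed of \emph{without any cycle transport}, using Observation \ref{observation Ci minimal write} and a $PO$ dichotomy on $\opw{\operatorname{min}}{1}$: either $C^1_1 \subseteq SWO(\V)$, which by Observation \ref{observation Ci sub SWO} contradicts membership in $B_1(\V)$, or $\opw{\operatorname{min}}{1}$ precedes the bad write in $PO$, is therefore not bad, and yields the preserved $SWO(\V')$ edge directly. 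Only then is part (b) proved, using (a), via a \emph{maximal} bad-write-pair argument (maximal in $SWO(\V)$, with a secondary minimality in $V_i$), in which Observation \ref{observation Ci SWO} guarantees that every newly spawned bad pair lies above the chosen one in $SWO(\V)$, so maximality closes the argument. These extremal-counterexample arguments over the partial order $SWO(\V)$, and the two observations that make them applicable, are exactly what replace your missing measure; without them, your treatment of the $B_i$ case assumes what is being proved.
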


The formal proof of the theorem is given in Appendix \ref{section proofs 2}.
It proceeds similarly to the proof of Theorem \ref{theorem:RnRStrongCausalConsistencySufficiency} but is significantly more complicated.
The following theorem states that, for every process $i$, each edge in $\trans{A}_i(\V) \setminus \del{ SWO_i[ \V ] \cupdot PO \cupdot B_i(\V) }$ is necessary for a good record under strong causal consistency.

\begin{theorem} \label{theorem DRO RnRStrongCausalConsistencyNecessity}
	Consider a set of views $\V = \{ V_i \}_{i \in P}$ that explain a strongly causal consistent execution.
	For any good record $\R = \set{R_i}_{i \in P}$ of $\V$, for any process $i \in P$ and any two operations $\op{1}{}, \op{2}{} \in ( *, i, *, * ) \cup ( w, *, *, * )$, if $\edge{ \op{1}{}, \op{2}{} } \in \trans{A}_i(\V) \setminus PO \cupdot SWO_i( \V ) \cupdot B_i(\V)$, then $\edge{ \op{1}{}, \op{2}{} } \in R_i$.
\end{theorem}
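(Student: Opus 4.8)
The plan is to prove the contrapositive in spirit: I assume we have a good record $\R = \set{R_i}_{i \in P}$ of $\V$, fix a process $i$ and an edge $\edge{ \op{1}{}, \op{2}{} } \in \trans{A}_i(\V) \setminus \del{ PO \cupdot SWO_i(\V) \cupdot B_i(\V) }$, and suppose for contradiction that $\edge{ \op{1}{}, \op{2}{} } \notin R_i$. I would then construct a replay whose certifying views $\V' = \set[0]{ V'_i }_{i \in P}$ reverse this pair, so that $\edge{ \op{2}{}, \op{1}{} } \in DRO(V'_i)$, while still respecting every recorded edge $R_j$ and remaining strongly causal consistent. Since reversing the pair changes $DRO(V'_i)$ relative to $DRO(V_i)$, this witnesses $DRO(V'_i) \ne DRO(V_i)$, contradicting goodness in RnR Model 2. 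The central observation is that because $\edge{ \op{1}{}, \op{2}{} } \in \trans{A}_i(\V)$ it lies in $DRO(V_i)$ (the transitive reduction of $A_i(\V)$ restricted to a same-variable edge must be a genuine $DRO$ edge, not a composite $SWO$ or $PO$ edge), and because we have removed $SWO_i$, $PO$, and $B_i$, the edge is a legitimately recordable $DRO$ edge that is nonetheless not in $R_i$.

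The key steps, in order, are as follows. First I would establish that $\edge{ \op{1}{}, \op{2}{} }$ is in fact a $DRO(V_i)$ edge on a common variable $x$, using that it belongs to $\trans{A}_i(\V)$ together with the decomposition $A_i(\V) = DRO(V_i) \cup SWO_i(\V) \cup \del{PO | \ldots}$ and the exclusion of the $SWO_i$ and $PO$ parts; the transitive-reduction structure forces it to come from the $DRO$ component. Second, I would build the candidate reversed views $\V'$ by starting from $\V$ and swapping the two operations in $V'_i$ (and making the minimal compatible adjustments elsewhere), then argue that this $\V'$ still satisfies the strong causal consistency definition. Third, and most delicately, I would verify that no recorded edge in any $R_j$ is violated: the only danger is that swapping the pair in $V'_i$ propagates, via the strong write order $SWO$, to force an order change in some other process $j$'s view that conflicts with $R_j$. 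This is exactly the propagation captured by $C_i( \V, \op{1}{}, \opw{2}{} )$ in Definition \ref{definition DRO Ci}, and the hypothesis $\edge{ \op{1}{}, \op{2}{} } \notin B_i(\V)$ guarantees (by Definition \ref{definition DRO Bi}) that none of the relevant $A_m(\V) \cupdot C_i$ unions contain a cycle, so no such forced conflict arises.

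The main obstacle will be the third step: controlling the cascade of induced $SWO$ orderings when the pair is reversed. Reversing $\edge{ \op{1}{}, \op{2}{} }$ in $V'_i$ can create a new strong causal order edge $\edge{ \op{2}{}, \op{1}{} }$ that other processes must respect, and those in turn can create further $SWO$ edges, potentially rippling across many processes; the inductively defined relation $C_i( \V, \op{1}{}, \opw{2}{} )$ is precisely the machinery tracking this ripple. I expect the heart of the argument to be a lemma showing that the negation of the cyclicity condition in Definition \ref{definition DRO Bi}(2) is equivalent to the existence of a consistent set of reversed views, i.e. that $\edge{ \op{1}{}, \op{2}{} } \notin B_i(\V)$ is exactly the condition under which the swap can be carried out globally without violating any $A_m(\V)$. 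I would also need the observations collected in Appendix \ref{section observations} and Observation \ref{observation DRO Ai same as SWO} to identify $A_i(\V)$-orderings with $SWO$-orderings, so that the cycle-freeness of the $C_i$-augmented relations translates into the existence of a valid topological extension for each $V'_j$. Once the reversed views are shown to be strongly causal consistent and record-respecting, the contradiction with goodness is immediate, completing the proof.
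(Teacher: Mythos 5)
Your proposal is correct and takes essentially the same route as the paper's own proof: assume the edge $\edge{ \op{1}{}, \op{2}{} } \in \trans{A}_i(\V) \setminus \del{ PO \cupdot SWO_i(\V) \cupdot B_i(\V) }$ is unrecorded, observe it must be a genuine $DRO(V_i)$ edge, reverse it, track the induced $SWO$ cascade with $C_i(\V, \op{1}{}, \op{2}{})$, use $\edge{ \op{1}{}, \op{2}{} } \notin B_i(\V)$ to get acyclicity of the augmented relations $A_m(\V) \cupdot C_i(\V, \op{1}{}, \op{2}{})$, and extend these to a strongly causal consistent certifying set $\V'$ with $DRO(V'_i) \ne DRO(V_i)$, contradicting goodness. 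The extension lemma you anticipate is exactly the paper's Lemma \ref{lemma sufficient to reproduce DRO}; be aware it demands more than per-process topological sorting of the acyclic relations — the linearizations must be constructed jointly (the paper orders write pairs one at a time in directions provably creating no new $SCO$ edges) so that no cross-process $SCO$ conflict arises, and the paper also separately handles the case where $\op{2}{}$ is a read by setting $C_i(\V, \op{1}{}, \op{2}{}) = \emptyset$, a case your outline leaves implicit.
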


The formal proof of the theorem is presented in Appendix \ref{section proofs 2}.
We follow the same strategy as proof of Theorem \ref{theorem:RnRStrongCausalConsistencyNecessity} and show that if any two operations $\op{1}{}, \op{2}{}$ are such that, for some process $i$, $\edge{ \op{1}{}, \op{2}{} } \in \trans{A}_i \setminus PO \cupdot SWO_i(\V) \cupdot B_i(\V)$ but $\edge{ \op{1}{}, \op{2}{} }$ is not recorded, then we can swap the two operations during the replay without violating consistency or replay constraints.
This violates the definition of a good record.
Theorems \ref{theorem DRO RnRStrongCausalConsistencySufficiency} and \ref{theorem DRO RnRStrongCausalConsistencyNecessity} show that the record $\R = \set{R_i}_{i \in P}$ such that $R_i = \trans{A}_i \setminus \del{ SWO_i( \V ) \cupdot PO \cupdot B_i(\V) }$ is both sufficient and necessary for a correct replay under strong causal consistency.
		\subsection{Causal Consistency} \label{section causal consistency DRO}
The question of what is the optimal record for causal consistency is still open 
for RnR Model 2 as well.
Similar to Section \ref{section causal consistency}, we give a counterexample 
that shows that the natural strategy following 
the scheme of strong causal consistency does not work.
More concretely, consider a set of views $\V = \set{V_i}_{i \in P}$ that 
explain a causally consistent execution.
For each process $i$, let $A_i = DRO(V_i) \cup WO \cup \del{PO | (*, i, *, *) 
\cup (\operatorname{w}, *, *, *)}$ and
$R_i = \trans{A}_i \setminus \del{ WO \cupdot PO }$.
We give a simple four process example that shows that $\R = \set{R_i}_{i \in P}$ 
is not a good record of $\V$.
The program for this example is given in Figures
\ref{figure causal consistency DRO counter example original execution} and
\ref{figure causal consistency DRO counter example replay}.
Figure \ref{figure causal consistency DRO counter example original execution} also gives 
the writes-to relation in bold edges, for the original execution.
The writes-to relation is empty for the replay, as shown in Figure
\ref{figure causal consistency DRO counter example replay}.
Figure \ref{figure causal consistency DRO counter example original views} gives 
a set of views $\V$ that explains the original execution.
The red edges represent the recorded edges.
Figure \ref{figure causal consistency DRO counter example replay views} gives one 
possible replay where the reads return the default values for the variables (so 
that the writes-to relation is empty), as well as a set of views $\V'$ that 
certifies the replay to be valid for the given record.

\begin{figure*}
	\centering
	\begin{tikzpicture}
	\node at (0, 3) {Process 1:};
	\node at (0, 2) {Process 2:};
	\node at (0, 1) {Process 3:};
	\node at (0, 0) {Process 4:};
	
	\node (w1x) at (4, 3) {$\opw{}{1}(x)$};
	\node (w1y) at (6.5, 3) {$\opw{}{1}(y)$};
	
	\node (w2a) at (1.5, 2) {$\opw{}{2}(\alpha)$};
	\node (r2x) at (4, 2) {$\opr{}{2}(x)$};
	\node (w2z) at (6.5, 2) {$\opw{}{2}(z)$};
	
	\node (w3y) at (4, 1) {$\opw{}{3}(y)$};
	\node (w3x) at (6.5, 1) {$\opw{}{3}(x)$};
	
	\node (w4z) at (1.5, 0) {$\opw{}{4}(z)$};
	\node (r4y) at (4, 0) {$\opr{}{4}(y)$};
	\node (w4a) at (6.5, 0) {$\opw{}{4}(\alpha)$};
	
	\path[->]
	(w1x) edge node[above] {$PO$} (w1y)
	
	(w2a) edge node[above] {$PO$} (r2x)
	(r2x) edge node[above] {$PO$} (w2z)
	
	(w3y) edge node[above] {$PO$} (w3x)
	
	(w4z) edge node[above] {$PO$} (r4y)
	(r4y) edge node[above] {$PO$} (w4a)
	
	(w1x) edge[line width=1.4pt] (r2x)
	(w3y) edge[line width=1.4pt] (r4y)
	;
	\end{tikzpicture}
	\hrule
	\vspace{\floatsep}
	\caption{A 4 process program where the bold edges represent the writes-to relation for a 
		possible execution.}
	\label{figure causal consistency DRO counter example original execution}
\end{figure*}

\begin{figure*}
\centering
\begin{tikzpicture}
\node at (0, 3) {Process 1:};
\node at (0, 2) {Process 2:};
\node at (0, 1) {Process 3:};
\node at (0, 0) {Process 4:};

\node (w1x) at (4, 3) {$\opw{}{1}(x)$};
\node (w1y) at (6.5, 3) {$\opw{}{1}(y)$};

\node (w2a) at (1.5, 2) {$\opw{}{2}(\alpha)$};
\node (r2x) at (4, 2) {$\opr{}{2}(x)$};
\node (w2z) at (6.5, 2) {$\opw{}{2}(z)$};

\node (w3y) at (4, 1) {$\opw{}{3}(y)$};
\node (w3x) at (6.5, 1) {$\opw{}{3}(x)$};

\node (w4z) at (1.5, 0) {$\opw{}{4}(z)$};
\node (r4y) at (4, 0) {$\opr{}{4}(y)$};
\node (w4a) at (6.5, 0) {$\opw{}{4}(\alpha)$};

\path[->]
(w1x) edge node[above] {$PO$} (w1y)

(w2a) edge node[above] {$PO$} (r2x)
(r2x) edge node[above] {$PO$} (w2z)

(w3y) edge node[above] {$PO$} (w3x)

(w4z) edge node[above] {$PO$} (r4y)
(r4y) edge node[above] {$PO$} (w4a)
;
\end{tikzpicture}
\hrule
\vspace{\floatsep}
\caption{
	A possible replay of the execution in Figure \ref{figure causal consistency 
	DRO 
		counter example original execution} where the reads return the default 
		values.
}
\label{figure causal consistency DRO counter example replay}
\end{figure*}

\begin{figure*}
	\centering
	\begin{tabular}{c}
		\begin{tikzpicture}
		\node at (1.5, 4) {$V_1$:};
		\node at (3, 4) {$\opw{}{1}(x)$};
		\node at (4.5, 4) {$\opw{}{1}(y)$};
		\node at (6, 4) {$\opw{}{3}(y)$};
		\node at (7.5, 4) {$\opw{}{4}(z)$};
		\node at (9, 4) {$\opw{}{4}(\alpha)$};
		\node at (10.5, 4) {$\opw{}{2}(\alpha)$};
		\node at (12, 4) {$\opw{}{2}(z)$};
		\node at (13.5, 4) {$\opw{}{3}(x)$};
		
		\node at (1.5, 1.5) {$\trans{A}_1(\V)$:};
		\node at (3, 3) {P1:};
		\node at (3, 2) {P2:};
		\node at (3, 1) {P3:};
		\node at (3, 0) {P4:};
		
		\node (w1x) at (4, 3) {$\opw{}{1}(x)$};
		\node (w1y) at (6, 3) {$\opw{}{1}(y)$};
		
		\node (w2a) at (14, 2) {$\opw{}{2}(\alpha)$};
		\node (w2z) at (16, 2) {$\opw{}{2}(z)$};
		
		\node (w3y) at (8, 1) {$\opw{}{3}(y)$};
		\node (w3x) at (12, 1) {$\opw{}{3}(x)$};
		
		\node (w4z) at (8, 0) {$\opw{}{4}(z)$};
		\node (w4a) at (12, 0) {$\opw{}{4}(\alpha)$};
		
		\path[->]
		(w1x) edge node[above] {$PO$} (w1y)
		(w1y) edge[color=red] node[above right] {$R_1$} (w3y)
		(w3y) edge node[above right] {$WO$} (w4a)
		(w4z) edge node[below] {$PO$} (w4a)
		(w4a) edge[color=red] node[below right] {$R_1$} (w2a)
		(w3y) edge[out=15,in=165] node[above] {$PO$} (w3x)
		(w2a) edge node[above] {$PO$} (w2z)
		;
		\end{tikzpicture}
		\\
		\hline
		\begin{tikzpicture}
		\node at (1.5, 4) {$V_2$:};
		\node at (3, 4) {$\opw{}{1}(x)$};
		\node at (4.5, 4) {$\opw{}{1}(y)$};
		\node at (6, 4) {$\opw{}{3}(y)$};
		\node at (7.5, 4) {$\opw{}{4}(z)$};
		\node at (9, 4) {$\opw{}{4}(\alpha)$};
		\node at (10.5, 4) {$\opw{}{2}(\alpha)$};
		\node at (12, 4) {$\opr{}{2}(x)$};
		\node at (13.5, 4) {$\opw{}{2}(z)$};
		\node at (15, 4) {$\opw{}{3}(x)$};
		
		\node at (1.5, 1.5) {$\trans{A}_2(\V)$:};
		\node at (3, 3) {P1:};
		\node at (3, 2) {P2:};
		\node at (3, 1) {P3:};
		\node at (3, 0) {P4:};
		
		\node (w1x) at (4, 3) {$\opw{}{1}(x)$};
		\node (w1y) at (6, 3) {$\opw{}{1}(y)$};
		
		\node (w2a) at (12, 2) {$\opw{}{2}(\alpha)$};
		\node (r2x) at (14, 2) {$\opr{}{2}(x)$};
		\node (w2z) at (16, 2) {$\opw{}{2}(z)$};
		
		\node (w3y) at (8, 1) {$\opw{}{3}(y)$};
		\node (w3x) at (16, 1) {$\opw{}{3}(x)$};
		
		\node (w4z) at (8, 0) {$\opw{}{4}(z)$};
		\node (w4a) at (10, 0) {$\opw{}{4}(\alpha)$};
		
		\path[->]
		(w1x) edge node[above] {$PO$} (w1y)
		(w1y) edge[color=red] node[above right] {$R_2$} (w3y)
		(w3y) edge node[above right] {$WO$} (w4a)
		(w4z) edge node[below] {$PO$} (w4a)
		(w4a) edge[color=red] node[below right] {$R_2$} (w2a)
		(r2x) edge[color=red] node[below left] {$R_2$} (w3x)
		(w2a) edge node[above] {$PO$} (r2x)
		(r2x) edge node[above] {$PO$} (w2z)
		;
		\end{tikzpicture}
		\\
		\hline
		\begin{tikzpicture}
		\node at (1.5, 4) {$V_3$:};
		\node at (3, 4) {$\opw{}{3}(y)$};
		\node at (4.5, 4) {$\opw{}{3}(x)$};
		\node at (6, 4) {$\opw{}{1}(x)$};
		\node at (7.5, 4) {$\opw{}{2}(\alpha)$};
		\node at (9, 4) {$\opw{}{2}(z)$};
		\node at (10.5, 4) {$\opw{}{4}(z)$};
		\node at (12, 4) {$\opw{}{4}(\alpha)$};
		\node at (13.5, 4) {$\opw{}{1}(y)$};
		
		\node at (1.5, 1.5) {$\trans{A}_3(\V)$:};
		\node at (3, 3) {P1:};
		\node at (3, 2) {P2:};
		\node at (3, 1) {P3:};
		\node at (3, 0) {P4:};
		
		\node (w1x) at (8, 3) {$\opw{}{1}(x)$};
		\node (w1y) at (16, 3) {$\opw{}{1}(y)$};
		
		\node (w2a) at (8, 2) {$\opw{}{2}(\alpha)$};
		\node (w2z) at (12, 2) {$\opw{}{2}(z)$};
		
		\node (w3y) at (4, 1) {$\opw{}{3}(y)$};
		\node (w3x) at (6, 1) {$\opw{}{3}(x)$};
		
		\node (w4z) at (14, 0) {$\opw{}{4}(z)$};
		\node (w4a) at (16, 0) {$\opw{}{4}(\alpha)$};
		
		\path[->]
		(w1x) edge node[above] {$PO$} (w1y)
		(w3x) edge[color=red] node[above left] {$R_3$} (w1x)
		(w1x) edge node[below left] {$WO$} (w2z)
		(w4z) edge node[above] {$PO$} (w4a)
		(w2z) edge[color=red] node[above right] {$R_3$} (w4z)
		(w2a) edge[out=-15, in=195] node[below] {$PO$} (w2z)
		(w3y) edge node[above] {$PO$} (w3x)
		;
		\end{tikzpicture}
		\\
		\hline
		\begin{tikzpicture}
		\node at (1.5, 4) {$V_4$:};
		\node at (3, 4) {$\opw{}{3}(y)$};
		\node at (4.5, 4) {$\opw{}{3}(x)$};
		\node at (6, 4) {$\opw{}{1}(x)$};
		\node at (7.5, 4) {$\opw{}{2}(\alpha)$};
		\node at (9, 4) {$\opw{}{2}(z)$};
		\node at (10.5, 4) {$\opw{}{4}(z)$};
		\node at (12, 4) {$\opr{}{4}(y)$};
		\node at (13.5, 4) {$\opw{}{4}(\alpha)$};
		\node at (15, 4) {$\opw{}{1}(y)$};
		
		\node at (1.5, 1.5) {$\trans{A}_4(\V)$:};
		\node at (3, 3) {P1:};
		\node at (3, 2) {P2:};
		\node at (3, 1) {P3:};
		\node at (3, 0) {P4:};
		
		\node (w1x) at (8, 3) {$\opw{}{1}(x)$};
		\node (w1y) at (16, 3) {$\opw{}{1}(y)$};
		
		\node (w2a) at (8, 2) {$\opw{}{2}(\alpha)$};
		\node (w2z) at (10, 2) {$\opw{}{2}(z)$};
		
		\node (w3y) at (4, 1) {$\opw{}{3}(y)$};
		\node (w3x) at (6, 1) {$\opw{}{3}(x)$};
		
		\node (w4z) at (12, 0) {$\opw{}{4}(z)$};
		\node (r4y) at (14, 0) {$\opr{}{4}(y)$};
		\node (w4a) at (16, 0) {$\opw{}{4}(\alpha)$};
		
		\path[->]
		(w1x) edge node[above] {$PO$} (w1y)
		(w3x) edge[color=red] node[above left] {$R_4$} (w1x)
		(w1x) edge node[above right] {$WO$} (w2z)
		(w4z) edge node[above] {$PO$} (r4y)
		(r4y) edge node[above] {$PO$} (w4a)
		(w2z) edge[color=red] node[above right] {$R_4$} (w4z)
		(w2a) edge node[below] {$PO$} (w2z)
		(w3y) edge node[above] {$PO$} (w3x)
		(r4y) edge[color=red] node[below right] {$R_4$} (w1y)
		;
		\end{tikzpicture}
		\\
	\end{tabular}
	\hrule
	\vspace{\floatsep}
	\caption{The set of views $\set{V_i}_{i=1}^{4}$ explains the execution in Figure 
		\ref{figure causal consistency DRO counter example original execution}.
		$\trans{A}_i(\V)$ for $i = 1, 2, 3, 4$ are also given with the recorded edges drawn in 
		red.}
	\label{figure causal consistency DRO counter example original views}
\end{figure*}

\begin{figure*}
\centering
\begin{tabular}{c}
	\begin{tikzpicture}
	\node at (1.5, 4) {$V'_1$:};
	\node at (3, 4) {$\opw{}{4}(z)$};
	\node at (4.5, 4) {$\opw{}{4}(\alpha)$};
	\node at (6, 4) {$\opw{}{2}(\alpha)$};
	\node at (7.5, 4) {$\opw{}{2}(z)$};
	\node at (9, 4) {$\opw{}{1}(x)$};
	\node at (10.5, 4) {$\opw{}{1}(y)$};
	\node at (12, 4) {$\opw{}{3}(y)$};
	\node at (13.5, 4) {$\opw{}{3}(x)$};
	
	\node at (1.5, 1.5) {$\trans{A}_1(\V')$:};
	\node at (3, 3) {P1:};
	\node at (3, 2) {P2:};
	\node at (3, 1) {P3:};
	\node at (3, 0) {P4:};
	
	\node (w1x) at (10, 3) {$\opw{}{1}(x)$};
	\node (w1y) at (12, 3) {$\opw{}{1}(y)$};
	
	\node (w2a) at (8, 2) {$\opw{}{2}(\alpha)$};
	\node (w2z) at (10, 2) {$\opw{}{2}(z)$};
	
	\node (w3y) at (14, 1) {$\opw{}{3}(y)$};
	\node (w3x) at (16, 1) {$\opw{}{3}(x)$};
	
	\node (w4z) at (4, 0) {$\opw{}{4}(z)$};
	\node (w4a) at (6, 0) {$\opw{}{4}(\alpha)$};
	
	\path[->]
	(w1x) edge node[above] {$PO$} (w1y)
	(w1y) edge[color=red] node[above right] {$R_1$} (w3y)
	(w4z) edge node[below] {$PO$} (w4a)
	(w4a) edge[color=red] node[below right] {$R_1$} (w2a)
	(w2a) edge node[above] {$PO$} (w2z)
	(w3y) edge node[above] {$PO$} (w3x)
	;
	\end{tikzpicture}
	\\
	\hline
	\begin{tikzpicture}
	\node at (1.5, 4) {$V'_2$:};
	\node at (3, 4) {$\opw{}{4}(z)$};
	\node at (4.5, 4) {$\opw{}{4}(\alpha)$};
	\node at (6, 4) {$\opw{}{2}(\alpha)$};
	\node at (7.5, 4) {$\opr{}{2}(x)$};
	\node at (9, 4) {$\opw{}{2}(z)$};
	\node at (10.5, 4) {$\opw{}{1}(x)$};
	\node at (12, 4) {$\opw{}{1}(y)$};
	\node at (13.5, 4) {$\opw{}{3}(y)$};
	\node at (15, 4) {$\opw{}{3}(x)$};
	
	\node at (1.5, 1.5) {$\trans{A}_2(\V')$:};
	\node at (3, 3) {P1:};
	\node at (3, 2) {P2:};
	\node at (3, 1) {P3:};
	\node at (3, 0) {P4:};
	
	\node (w1x) at (10, 3) {$\opw{}{1}(x)$};
	\node (w1y) at (12, 3) {$\opw{}{1}(y)$};
	
	\node (w2a) at (8, 2) {$\opw{}{2}(\alpha)$};
	\node (r2x) at (10, 2) {$\opr{}{2}(x)$};
	\node (w2z) at (12, 2) {$\opw{}{2}(z)$};
	
	\node (w3y) at (14, 1) {$\opw{}{3}(y)$};
	\node (w3x) at (16, 1) {$\opw{}{3}(x)$};
	
	\node (w4z) at (4, 0) {$\opw{}{4}(z)$};
	\node (w4a) at (6, 0) {$\opw{}{4}(\alpha)$};
	
	\path[->]
	(w1x) edge node[above] {$PO$} (w1y)
	(w1y) edge[color=red] node[above right] {$R_2$} (w3y)
	(w4z) edge node[below] {$PO$} (w4a)
	(w4a) edge[color=red] node[below right] {$R_2$} (w2a)
	(r2x) edge[color=red, out=-45, in=200] node[below] {$R_2$} (w3x)
	(w2a) edge node[above] {$PO$} (r2x)
	(r2x) edge node[above] {$PO$} (w2z)
	(w3y) edge node[above] {$PO$} (w3x)
	;
	\end{tikzpicture}
	\\
	\hline
	\begin{tikzpicture}
	\node at (1.5, 4) {$V'_3$:};
	\node at (3, 4) {$\opw{}{2}(\alpha)$};
	\node at (4.5, 4) {$\opw{}{2}(z)$};
	\node at (6, 4) {$\opw{}{4}(z)$};
	\node at (7.5, 4) {$\opw{}{4}(\alpha)$};
	\node at (9, 4) {$\opw{}{3}(y)$};
	\node at (10.5, 4) {$\opw{}{3}(x)$};
	\node at (12, 4) {$\opw{}{1}(x)$};
	\node at (13.5, 4) {$\opw{}{1}(y)$};
	
	\node at (1.5, 1.5) {$\trans{A}_3(\V')$:};
	\node at (3, 3) {P1:};
	\node at (3, 2) {P2:};
	\node at (3, 1) {P3:};
	\node at (3, 0) {P4:};
	
	\node (w1x) at (14, 3) {$\opw{}{1}(x)$};
	\node (w1y) at (16, 3) {$\opw{}{1}(y)$};
	
	\node (w2a) at (4, 2) {$\opw{}{2}(\alpha)$};
	\node (w2z) at (6, 2) {$\opw{}{2}(z)$};
	
	\node (w3y) at (10, 1) {$\opw{}{3}(y)$};
	\node (w3x) at (12, 1) {$\opw{}{3}(x)$};
	
	\node (w4z) at (8, 0) {$\opw{}{4}(z)$};
	\node (w4a) at (10, 0) {$\opw{}{4}(\alpha)$};
	
	\path[->]
	(w1x) edge node[below] {$PO$} (w1y)
	(w3x) edge[color=red] node[above left] {$R_3$} (w1x)
	(w4z) edge node[above] {$PO$} (w4a)
	(w2z) edge[color=red] node[above right] {$R_3$} (w4z)
	(w2a) edge node[below] {$PO$} (w2z)
	(w3y) edge node[above] {$PO$} (w3x)
	;
	\end{tikzpicture}
	\\
	\hline
	\begin{tikzpicture}
	\node at (1.5, 4) {$V'_4$:};
	\node at (1.5, 4) {$V'_3$:};
	\node at (3, 4) {$\opw{}{2}(\alpha)$};
	\node at (4.5, 4) {$\opw{}{2}(z)$};
	\node at (6, 4) {$\opw{}{4}(z)$};
	\node at (7.5, 4) {$\opr{}{4}(y)$};
	\node at (9, 4) {$\opw{}{4}(\alpha)$};
	\node at (10.5, 4) {$\opw{}{3}(y)$};
	\node at (12, 4) {$\opw{}{3}(x)$};
	\node at (13.5, 4) {$\opw{}{1}(x)$};
	\node at (15, 4) {$\opw{}{1}(y)$};
	
	\node at (1.5, 1.5) {$\trans{A}_4(\V')$:};
	\node at (3, 3) {P1:};
	\node at (3, 2) {P2:};
	\node at (3, 1) {P3:};
	\node at (3, 0) {P4:};
	
	\node (w1x) at (14, 3) {$\opw{}{1}(x)$};
	\node (w1y) at (16, 3) {$\opw{}{1}(y)$};
	
	\node (w2a) at (4, 2) {$\opw{}{2}(\alpha)$};
	\node (w2z) at (6, 2) {$\opw{}{2}(z)$};
	
	\node (w3y) at (8, 1) {$\opw{}{3}(y)$};
	\node (w3x) at (10, 1) {$\opw{}{3}(x)$};
	
	\node (w4z) at (8, 0) {$\opw{}{4}(z)$};
	\node (r4y) at (10, 0) {$\opr{}{4}(y)$};
	\node (w4a) at (12, 0) {$\opw{}{4}(\alpha)$};
	
	\path[->]
	(w1x) edge node[above] {$PO$} (w1y)
	(w3x) edge[color=red] node[above left] {$R_4$} (w1x)
	(w4z) edge node[above] {$PO$} (r4y)
	(r4y) edge node[below] {$PO$} (w4a)
	(w2z) edge[color=red] node[below left] {$R_4$} (w4z)
	(w2a) edge node[below] {$PO$} (w2z)
	(w3y) edge node[above] {$PO$} (w3x)
	(r4y) edge[color=red] node[below right] {$R_4$} (w1y)
	;
	\end{tikzpicture}
	\\
\end{tabular}
\hrule
\vspace{\floatsep}
\caption{
	The set of views $\set[0]{V'_i}_{i=1}^{4}$ certifies that the replay in Figure
	\ref{figure causal consistency DRO counter example replay} is valid for the 
	record from Figure \ref{figure causal consistency DRO counter example 
	original views}.
	$\trans{A}_i(\V')$ for $i = 1, 2, 3, 4$ are also given with the recorded 
	edges drawn 
	in red.}
\label{figure causal consistency DRO counter example replay views}
\end{figure*}

There are two $WO$ edges $\edge{ \opw{}{1}, \opw{}{2} }$ and $\edge{ \opw{}{3}, 
\opw{}{4} }$ in the original execution while $WO'$, the write-read-write order 
for the replay, is empty.
Note that, in this example, not only do the views differ, but the reads return 
the wrong values in the replay as well.

	\section{Discussion and Open Problems} \label{section discussion}
In this work we have looked at the optimal record for RnR under strong causal consistency, a strengthened version of causal consistency followed by practical implementations of causally consistent shared memory \cite{DeCandia:2007:DAH:1323293.1294281}, \cite{Ladin:1992:PHA:138873.138877}, \cite{Lloyd:2011:DSE:2043556.2043593}, \cite{Terry:1995:MUC:224056.224070}.
Table \ref{table strong causal summary} provides a summary of RnR results.
The optimal record for causal consistency is still an open problem.
In Section \ref{section causal consistency} and Section
\ref{section causal consistency DRO} we showed that a simple strategy following 
the scheme of 
strong causal consistency does not work for either RnR Model 1 or RnR Model 2.

As discussed in Section \ref{section introduction}, to the best of our knowledge, only one other work by Netzer \cite{Netzer1993Optimal} looks at optimal record for RnR.
However, Netzer considered sequential consistency and his setting 
is the same as RnR Model 2 where 
the objective is to record only data races so that all data races\footnote{Two operations form a data race if they are on the same variable and at least one of them is a write.} are resolved.
Another interesting setting is if the RnR system is allowed to record any edge in the views but the objective is to resolve all data races.
We have not yet looked at this setting, which we leave open to investigate in a future work.

We have not discussed how the record is enforced during replay.
For example, a simple strategy could be to simply wait for an operation until all its dependencies in the record have been observed.
This may not work with every record since the replay may be forced to choose between a record constraint and a consistency constraint.
We leave addressing this question to a future work.

Another problem of interest is to look at optimal RnR for weaker models.
Cache consistency is defined as sequential consistency on a per variable basis.

\begin{definition}
	An execution is \emph{cache consistent} if there exists a set of views $\V = \set{ V_{x} }_{x \in X}$ such that, for every variable $x$,
	\begin{itemize}[topsep=0pt, itemsep=0pt]
		\item $V_{x}$ is a view on the set of operations $( *, *, x, * )$, and
		\item $V_{x}$ respects $\del{{PO} | ( *, *, x, * )}$.
	\end{itemize}
	A shared memory $\Pi$ is \emph{cache consistent} if every execution run on $\Pi$ is cache consistent.
\end{definition}

For this definition, the optimal record follows from Netzer's result on sequential consistency \cite{Netzer1993Optimal}.
However, this assumes that per variable views are available to be recorded.
From the per process perspective, Steinke and Nutt \cite{Steinke2004Unified} have an alternate equivalent definition which sees cache consistency as providing per process views.
We refer the reader to Theorem B.8 in \cite{Steinke2004Unified} for this alternate definition of cache consistency.
What does the optimal record look like in this setting?
Cache consistency is implemented by virtually all commercial multiprocessors.

Cache consistency is incomparable to causal consistency.
What does the optimal record look like for a system that ensures both cache and causal consistency?
With the per process view of cache consistency it is easy to define cache+causal consistency by combining the restrictions on per process views.
In causal consistency views for two different processes may diverge so that after all operations have been observed, the two processes may have different values for the same shared variable.
Real world distributed systems provide some sort of conflict resolution on top of causal consistency to alleviate this problem \cite{DeCandia:2007:DAH:1323293.1294281}, \cite{Lloyd:2011:DSE:2043556.2043593}, \cite{Terry:1995:MUC:224056.224070}.
This results in ``eventual'' consistency where the different processes are eventually in agreement on the value of the shared variables, if all updates are stopped.
When this is implemented via a simple last writer wins rule, this is equivalent to all processes agreeing on the per variable ordering of write operations \cite{johnson1975maintenance}, i.e. cache consistency.

It would be interesting to experimentally evaluate how the theoretically optimum record performs on real systems, as opposed to the naive solution.
We leave that investigation open to a future work.

	\nocite{Jones2016Thesis}
	\bibliographystyle{abbrv}
	\bibliography{bib}

	\appendix
	\section{Proofs for Section \ref{section optimal record}} \label{section proofs}
\begin{lemma} \label{lemma replay preserves SCO and B_i}
	Consider a set of views $\V = \set[0]{ V_i }_{i \in P}$ that explain a strongly causal consistent execution.
	For each process $i \in P$, let $R_i = \trans{V}_i \setminus \del{ SCO_i( \V ) \cupdot PO \cupdot B_i(\V) }$.
	Then, for any set of views $\V' = \set[0]{ V'_i }_{i \in P}$ that certify a strongly causal consistent replay to be valid under $\R = \set[0]{R_i}_{i \in P}$, we have that
	\begin{enumerate}[label=(\alph*),topsep=0pt]
		\item $SCO( \V' ) \supseteq SCO( \V )$, and
		\item $V'_i \supseteq B_i( {\V} )$ for every process $i \in P$.
	\end{enumerate}
\end{lemma}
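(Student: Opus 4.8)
The plan is to prove parts (a) and (b) simultaneously by a single well-founded induction, exploiting one structural fact throughout: since each $V_i$ is a \emph{total} order, its transitive reduction $\trans{V}_i$ is exactly the successor (cover) relation, so a chain $u_0 \lessdot_{V_i} \cdots \lessdot_{V_i} u_m$ visits precisely the operations lying between $u_0$ and $u_m$ in $V_i$. Hence, to show that an ordered pair of $V_i$ survives in $V'_i$, it suffices to preserve every consecutive pair between its endpoints, and each consecutive pair $\edge{u_\ell, u_{\ell+1}}$ is either in $R_i$ (hence in $V'_i$, since $V'_i$ respects $R_i$), a $PO$ pair (fixed across executions), an $SCO_i(\V)$ pair, or a $B_i(\V)$ pair. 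The first two cases are immediate; the last two are exactly what parts (a) and (b) are meant to recover. The primary induction measure will be the $SCO(\V)$-rank (longest $SCO(\V)$-chain) of the \emph{target} write.

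For part (a), take $\edge{\opw{1}{}, \opw{2}{i}} \in SCO(\V)$, so $\opw{1}{} <_{V_i} \opw{2}{i}$ with $\opw{2}{i}$ a write of process $i$, and walk the successor chain from $\opw{1}{}$ to $\opw{2}{i}$ in $V_i$. Any consecutive pair on this chain that is an $SCO_i(\V)$ or $B_i(\V)$ pair has two write endpoints with $u_{\ell+1} <_{V_i} \opw{2}{i}$; since $\opw{2}{i}$ is a write of $i$, this yields $\edge{u_{\ell+1}, \opw{2}{i}} \in SCO(\V)$, so $u_{\ell+1}$ has strictly smaller rank than $\opw{2}{i}$. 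Applying the induction hypothesis at this strictly smaller rank — part (a) for the $SCO_i(\V)$ pairs (which lie in $SCO(\V)$ by definition) and part (b) for the $B_i(\V)$ pairs — preserves them in $V'_i$; recorded and $PO$ pairs are preserved directly; transitivity then gives $\opw{1}{} <_{V'_i} \opw{2}{i}$, i.e. $\edge{\opw{1}{}, \opw{2}{i}} \in SCO(\V')$. The essential point is that this argument invokes the hypothesis only at \emph{strictly} smaller rank, so (a) at a given rank can be established before (b) at that same rank.

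For part (b), let $\edge{\opw{1}{i}, \opw{2}{j}} \in B_i(\V)$ with witness $k \ne i,j$, and suppose toward a contradiction that $\opw{2}{j} <_{V'_i} \opw{1}{i}$. Because $\opw{1}{i}$ is a write of $i$, this places $\edge{\opw{2}{j}, \opw{1}{i}} \in SCO(\V')$, and since $V'_k$ respects $SCO(\V')$ we obtain $\opw{2}{j} <_{V'_k} \opw{1}{i}$. It then remains to derive the opposite ordering $\opw{1}{i} <_{V'_k} \opw{2}{j}$, i.e. to show that the pair $\edge{\opw{1}{i}, \opw{2}{j}} \in V_k$ survives in $V'_k$. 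If this pair lies in $SCO(\V)$, then (its target being a write of $j \ne k$) it is an $SCO_k(\V)$ pair, and part (a) at the current rank — already available by the previous paragraph — forces it into $SCO(\V')$ and hence into $V'_k$, contradicting $\opw{2}{j} <_{V'_k} \opw{1}{i}$. Otherwise one again walks the successor chain of $\edge{\opw{1}{i}, \opw{2}{j}}$ in $V_k$ and tries to preserve each consecutive pair.

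The hard part will be this last reduction through the witness view. The consecutive pairs along the $V_k$-chain may be $SCO_k(\V)$ or $B_k(\V)$ pairs whose target write precedes $\opw{2}{j}$ in $V_k$ but \emph{not} in $V_j$, so its $SCO(\V)$-rank need not be smaller than that of $\opw{2}{j}$; the single measure ``$SCO(\V)$-rank of the target'' therefore fails to decrease along the witness-view recursion, and a naive stratified induction breaks here. I expect resolving this to be the technical core: one must refine the measure — for instance ordering obligations lexicographically by $SCO(\V)$-rank and then by span in the relevant view, or equivalently tracking a position that advances monotonically each time the argument descends from a view into a witness view — and check that every reduction strictly decreases this combined measure. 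Once such a well-founded measure validating both the (a)-descent and the (b) witness-chain descent is in place, the two arguments close together and the lemma follows.
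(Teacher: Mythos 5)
Your chain-walking framework (transitive-reduction paths, the four-way split of each consecutive pair into $R_i$, $PO$, $SCO_i(\V)$, and $B_i(\V)$ edges, and the use of ``respects $R_i$/$PO$/$SCO$'' to preserve the first three kinds) is exactly the skeleton of the paper's proof, and your handling of the $SCO_i(\V)$ pairs in part (a) is the paper's minimal-bad-write induction. But the gap you flag at the end is genuine, and the repairs you gesture at (rank-then-span lexicographic orders, a monotonically advancing ``position'') do not close it: witness processes can repeat across nested (b)-obligations and the ``span'' of a nested obligation lives in a different view, so neither quantity is comparable along the recursion. The paper escapes with two ideas your proposal is missing. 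First, part (a) never needs part (b) at all: if a pair $\edge{\op{\rho,j}{}, \op{\rho,j+1}{}}$ on the chain toward $\opw{2}{1}$ lies in $B_1(\V)$, then by Definition \ref{definition LRO Bi} its \emph{source} $\op{\rho,j}{}$ is a write of process $1$, hence $PO$-ordered before the endpoint $\opw{2}{1}$ (also a write of process $1$); so one short-circuits $\opw{1}{} \le_{V'_1} \op{\rho,j}{} <_{PO} \opw{2}{1}$ and never has to preserve the $B_1$ edge itself. This makes (a) a self-contained bottom-up induction, available in full when (b) is proved.

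Second, the well-founded descent for part (b) is not on the target's rank but on the \emph{source} of the violated $B_i$ pair, and it runs in the \emph{opposite} direction: take a violated pair $\edge{\opw{1}{i}, \opw{2}{}} \in B_i(\V)$ whose source $\opw{1}{i}$ is \emph{maximal} with respect to $SCO(\V)$. Then, walking the chain from $\opw{1}{i}$ to $\opw{2}{}$ in the witness view $\trans{V}_1$, any $B_1(\V)$ pair encountered has source a write of process $1$ that follows $\opw{1}{i}$ in $V_1$; since that source is a write of the view's owner, this gives $\edge{\opw{1}{i}, \op{\rho,j}{}} \in SCO(\V)$ strictly (here $i \ne 1$ rules out equality), and maximality says that source is not bad, so its $B_1$ edges survive in $V'_1$ --- precisely the strict decrease your target-rank measure could not supply. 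The $SCO_1(\V)$ pairs on the witness chain are handled by the already-completed part (a). So the correct structure is two separate inductions in opposite directions along $SCO(\V)$ --- minimal bad target for (a), maximal bad source for (b) --- with (a) established globally before (b) begins, rather than a single simultaneous stratified induction; without these two ideas your argument cannot be completed as sketched.
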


\begin{proof_of}{Lemma \ref{lemma replay preserves SCO and B_i}(a)}
	Consider any arbitrary set of views $\V' = \set[0]{ V'_i }_{i \in P}$ that certify a strongly causal consistent replay to be valid for $\R$.
	We will call a write operation $\opw{}{}$ \emph{bad} if there exists a write operation ${\opw{}{}}'$ such that $\edge{ {\opw{}{}}', \opw{}{} } \in SCO( \V )$ but $\edge{ {\opw{}{}}', \opw{}{} } \not \in SCO( \V' )$.
	Recall from Definitions \ref{definition LRO strong causal order}-\ref{definition LRO strong causal consistency} that $SCO( {\V} )$ orders only write operations and is a partial order for strongly causal consistent executions.
	Consider any bad write operation, WLOG executed on process $1$, $\opw{2}{1} \in ( \operatorname{w}, 1, *, * )$, which is minimal with respect to $SCO( \V )$; i.e. for every write operation ${\opw{}{}}' \SCO{ {\V} } \opw{2}{1}$, we have that ${\opw{}{}}'$ is not bad.
	We proceed via contradiction.

	Since $\opw{2}{1}$ is a bad write operation, so there exists a write operation $\opw{1}{} \in ( \operatorname{w}, *, *, * )$ such that $\edge{ \opw{1}{}, \opw{2}{1} } \in SCO( \V )$ and $\edge{ \opw{1}{}, \opw{2}{1} } \not \in SCO( \V' )$.
	Consider a path $\rho$ from $\opw{1}{}$ to $\opw{2}{1}$ in $\trans{V}_1$ (such a path must exist since $\opw{1}{} \SCO{ {\V} } \opw{2}{1} \Rightarrow \opw{1}{} <_{V_1} \opw{2}{1}$) given by $\opw{1}{} = \op{\rho, 0}{} \lessdot_{V_1} \op{\rho, 1}{} \lessdot_{V_1} \op{\rho, 2}{} \lessdot_{V_1} \dots \lessdot_{V_1} \op{\rho, k}{} = \opw{2}{1}$.
	If $\edge{ \op{\rho, j}{}, \op{\rho, j+1}{} } \in {V'_1}$ for every $j \in \sbr{0, k-1}$, then $\edge{ \opw{1}{}, \opw{2}{1} } \in V'_1$ and so $\edge{ \opw{1}{}, \opw{2}{1} } \in SCO(\V')$ by Definition \ref{definition LRO strong causal order} which is a contradiction.
	So there exists a $j \in \sbr{0, k-1}$ such that $\edge{ \op{\rho, j}{}, \op{\rho, j+1}{} } \not \in V'_1$.
	
	Consider the smallest $j \in [0, k-1]$ such that $\edge{ \op{\rho, j}{}, \op{\rho, j+1}{} } \not \in V'_1$.
	Therefore $\opw{1}{} \le_{V'_1} \op{\rho, j}{}$.
	There are 4 cases to consider.
	\begin{enumerate}[label=\underline{Case \arabic*:},wide,topsep=0pt,labelindent=0pt]
		\item $\edge{ \op{\rho, j}{}, \op{\rho, j+1}{} } \in V_1 \setminus \del{ SCO_1(\V) \cupdot PO \cupdot B_1( \V ) }$.
		Then\\ $\edge{ \op{\rho, j}{}, \op{\rho, j+1}{} } \in R_1$ and $V'_1$ respects $R_1$ since $\V'$ certifies a replay to be valid for $\R$.
		Thus $\edge{ \op{\rho, j}{}, \op{\rho, j+1}{} } \in V'_1$, a contradiction.

		\item $\edge{ \op{\rho, j}{}, \op{\rho, j+1}{} } \in PO$.
		Then $V'_1$ respects $PO$ due to consistency and $PO$ is independent of executions.
		Thus $\edge{ \op{\rho, j}{}, \op{\rho, j+1}{} } \in V'_1$, a contradiction.

		\item $\edge{ \op{\rho, j}{}, \op{\rho, j+1}{} } \in SCO_1( {\V} )$.
		Then both $\op{\rho, j}{}$ and $\op{\rho, j+1}{}$ must be write operations.
		There are now two cases to consider.
		\begin{enumerate}[label=\underline{Case \roman*:},topsep=0pt,labelwidth=\widthof{Case ii:a},leftmargin=!,itemindent=\widthof{aaaaaaa}]
			\item
			$j < k-1$.
			Then $\op{\rho, j+1}{} \ne \opw{2}{1}$.
			Observe that $\edge{ \op{\rho, j+1}{}, \opw{2}{1} } \in V_1$ and so $\edge{ \op{\rho, j+1}{}, \opw{2}{1} } \in SCO(\V)$ by Definition \ref{definition LRO strong causal order}.
			Therefore, by the minimality of $\opw{2}{1}$, we have that $\op{\rho, j+1}{}$ is not a bad write.
			Thus $\edge{ \op{\rho, j}{}, \op{\rho, j+1}{} } \in SCO( {\V'} )$.
			Since $V'_1$ respects $SCO( {\V'} )$, therefore $\edge{ \op{\rho, j}{}, \op{\rho, j+1}{} } \in V'_1$, a contradiction.
			
			\item
			$j = k - 1$.
			So $\op{\rho, j+1}{} = \opw{2}{1}$ and $\edge{ \op{\rho, j}{}, \opw{2}{1} } \in SCO_1( {\V} )$.
			From Definition \ref{definition LRO Bi} we have that $\opw{2}{1}$ is not executed on process $1$, a contradiction to the initial assumption that $\opw{2}{1} \in (\operatorname{w}, 1, *, *)$.
		\end{enumerate}

		\item $\edge{ \op{\rho, j}{}, \op{\rho, j+1}{} } \in B_1( \V )$.
		Then by Definition \ref{definition LRO Bi}, $\op{\rho, j}{} \in (\operatorname{w}, 1, *, *)$ is a write operation on process $1$.
		Therefore $\edge{ \op{\rho, j}{}, \opw{2}{1} } \in PO$ and we get that $\opw{1}{} \le_{V'_1} \op{\rho, j}{} <_{V'_1} \opw{2}{1}$, a contradiction.
	\end{enumerate}
	In all cases, we get the desired contradiction.
\end{proof_of}
\\

\begin{proof_of}{Lemma \ref{lemma replay preserves SCO and B_i}(b)}
	Consider any arbitrary set of views $\V' = \set[0]{ V'_i }_{i \in P}$ that certify a strongly causal consistent replay to be valid for $\R$.
	We will call a write operation $\opw{1}{i} \in (\operatorname{w}, i, *, *)$, executed on a process $i$, \emph{bad} if there exists a write operation $\opw{2}{} \in (\operatorname{w}, *, *, *)$ such that $\edge{ \opw{1}{i}, \opw{2}{} } \in B_i( \V )$ but $\edge{ \opw{2}{}, \opw{1}{i} } \in V'_i$ (note that $B_i( \V )$ orders only write operations from Definition \ref{definition LRO Bi}).
	Recall that $SCO( \V )$ is a partial order for strongly causal consistent executions.
	Consider any bad write operation $\opw{1}{i} \in (\operatorname{w}, i, *, *)$ which is maximal with respect to $SCO( \V )$; i.e. for every write operation ${\opw{}{}}' >_{SCO(\V)} \opw{1}{i}$, we have that ${\opw{}{}}'$ is not bad.
	We proceed via contradiction.
	
	Since $\opw{1}{i}$ is a bad write operation, so there exists a write operation $\opw{2}{} \in (\operatorname{w}, *, *, *)$ such that $\edge{ \opw{1}{i}, \opw{2}{} } \in B_i( \V )$ and $\edge{ \opw{2}{}, \opw{1}{i} } \in V'_i$.
	Therefore, $\edge{ \opw{2}{}, \opw{1}{i} } \in SCO( \V' )$.
	By Definition \ref{definition LRO Bi}, there exists a process, WLOG process $1 \ne i$, such that $\edge{ \opw{1}{i}, \opw{2}{} } \in V_1$.
	If $\edge{ \opw{1}{i}, \opw{2}{} } \in V'_1$ then $V'_1$ does not respect $SCO( \V' )$, a contradiction since $\V'$ explains a strongly causal consistent execution.
	Therefore $\edge{ \opw{1}{i}, \opw{2}{} } \not \in V'_1$.
	Consider a path $\rho$ from $\opw{1}{i}$ to $\opw{2}{}$ in $\trans{V}_1$ given by $\opw{1}{i} = \op{\rho, 0}{} \lessdot_{V_1} \op{\rho, 1}{} \lessdot_{V_1} \op{\rho, 2}{} \lessdot_{V_1} \dots \lessdot_{V_1} \op{\rho, k}{} = \opw{2}{}$.
	If $\edge{ \op{\rho, j}{}, \op{\rho, j+1}{} } \in {V'_1}$ for every $j \in \sbr{0, k-1}$, then $\edge{ \opw{1}{i}, \opw{2}{} } \in V'_1$ which is a contradiction.
	So there exists a $j \in \sbr{0, k-1}$ such that $\edge{ \op{\rho, j}{}, \op{\rho, j+1}{} } \not \in V'_1$.
	
	Consider the smallest $j \in [0, k-1]$ such that $\edge{ \op{\rho, j}{}, \op{\rho, j+1}{} } \not \in V'_1$.
	Therefore $\opw{1}{i} \le_{V'_1} \op{\rho, j}{}$.
	There are 4 cases to consider.
	\begin{enumerate}[label=\underline{Case \arabic*:},wide,topsep=0pt,labelindent=0pt]
		\item $\edge{ \op{\rho, j}{}, \op{\rho, j+1}{} } \in V_1 \setminus \del{ SCO_1(\V) \cupdot PO \cupdot B_1( \V ) }$.
		Then \\$\edge{ \op{\rho, j}{}, \op{\rho, j+1}{} } \in R_1$ and $V'_1$ respects $R_1$ since $\V'$ certifies a replay to be valid for $\R$.
		Thus $\edge{ \op{\rho, j}{}, \op{\rho, j+1}{} } \in V'_1$, a contradiction.
		
		\item $\edge{ \op{\rho, j}{}, \op{\rho, j+1}{} } \in PO$.
		Then $V'_1$ respects $PO$ due to consistency and $PO$ is independent of executions.
		Thus $\edge{ \op{\rho, j}{}, \op{\rho, j+1}{} } \in V'_1$, a contradiction.
		
		\item $\edge{ \op{\rho, j}{}, \op{\rho, j+1}{} } \in SCO_1( {\V} )$.
		Then $V'_1$ respects $SCO( \V' )$ due to consistency and $SCO( \V' ) \supseteq SCO( \V )$ by Lemma \ref{lemma replay preserves SCO and B_i}(a).
		Thus $\edge{ \op{\rho, j}{}, \op{\rho, j+1}{} } \in V'_1$, a contradiction.
		
		\item $\edge{ \op{\rho, j}{}, \op{\rho, j+1}{} } \in B_1( \V )$.
		By Definition \ref{definition LRO Bi}, $\op{\rho, j}{} \in ( \operatorname{w}, 1, *, * )$ is a write operation on process $1$.
		Recall that $\opw{1}{i} \le_{V'_1} \op{\rho, j}{}$.
		If $\opw{1}{i} = \op{ \rho, j }{}$, then $i = 1$, which contradicts the initial assumption that $i \ne 1$.
		Thus $\opw{1}{i} \ne \op{\rho, j}{}$ and $\edge{ \opw{1}{i}, \op{\rho, j}{} } \in V'_1$ so that by Definition \ref{definition LRO strong causal order} $\edge{ \opw{1}{i}, \op{\rho, j}{} } \in SCO( \V' )$.
		Therefore, by the maximality of $\opw{1}{i}$, we have that $\op{\rho, j}{}$ is not a bad write.
		Thus $\edge{ \op{\rho, j}{}, \op{\rho, j+1}{} } \in V'_1$, a contradiction.
	\end{enumerate}
	In all cases, we get the desired contradiction.
	So we have that $\edge{ \opw{1}{i}, \opw{2}{} } \in V'_1$ but $\edge{ \opw{2}{}, \opw{1}{i} } \in SCO( \V' )$, which is a contradiction since $V'_1$ respects $SCO( \V' )$.
\end{proof_of}
\\

\begin{proof_of}{Theorem \ref{theorem:RnRStrongCausalConsistencySufficiency}}
	Consider any arbitrary set of views $\V' = \set{ V'_i }_{i \in P}$ that certify a strongly causal consistent replay to be valid for $\R$.
	We show that $\V' = \V$.
	More precisely, we show that for any process $i$ and any two operations $\op{1}{}, \op{2}{} \in ( *, *, *, * )$ such that $\edge{ \op{1}{}, \op{2}{} } \in V_i$ we must have $\edge{ \op{1}{}, \op{2}{} } \in V'_i$.
	Consider any arbitrary process $i$.
	We have that
	\begin{itemize}[topsep=0pt]
		\item $V'_i$ respects ${R_i}$, since $\V'$ certifies a replay to be valid for $\R$.
		\item $V'_i$ respects $SCO_i( \V ) \cup \del{PO \mid ( *, i, *, * ) \cup ( w, *, *, * )} \cup B_i( \V )$ due to consistency and Lemma \ref{lemma replay preserves SCO and B_i}.
	\end{itemize}
	Consider a $\op{1}{} \op{2}{}$-path $\rho$ in $\trans{V}_i$ given by $\op{1}{} = \op{\rho, 0}{} \lessdot_{V_i} \op{\rho, 1}{} \lessdot_{V_i} \op{\rho, 2}{} \lessdot_{V_i} \dots \lessdot_{V_i} \op{\rho, k}{} = \op{2}{}$.
	By construction of $\trans{V}_i$, each edge is either a $R_i$ edge or a ${PO}$ edge or a $SCO_i( {\V} )$ edge or a $B_i( \V )$ edge.
	Thus $\op{1}{} = \op{\rho, 0}{} <_{V'_i} \op{\rho, 1}{} <_{V'_i} \op{\rho, 2}{} <_{V'_i} \dots <_{V'_i} \op{\rho, k}{} = \op{2}{}$ and $\edge{ \op{1}{}, \op{2}{} } \in {V'_i}$, as required.
\end{proof_of}
\\

\begin{proof_of}{Theorem \ref{theorem:RnRStrongCausalConsistencyNecessity}}
	Assume for the sake of contradiction that there exists a good record $\R$ of $\V$, a process, WLOG process $1$, and two operations $\op{1}{}, \op{2}{} \in ( *, 1, *, * ) \cup ( \operatorname{w}, *, *, * )$ such that $\edge{ \op{1}{}, \op{2}{} } \in \trans{V}_i \setminus PO \cupdot SCO_i(\V) \cupdot B_i(\V)$ and $\edge{ \op{1}{}, \op{2}{} } \not \in {R_i}$.
	%
	Then, we construct a set of views $\V'$, that differs from $\V$, but certifies a strongly causal replay to be valid for $\R$, i.e. $\V'$ explains a strongly causal execution and extends the record $\R$.
	This violates the definition of a good record (see Section \ref{section RnR model}).
	We construct $\V'$ from $\V$ as follows.
	Let $V'_1 := \del[1]{ V_1 \setminus \set[0]{ \edge{ \op{1}{}, \op{2}{} } } } \cupdot \set[0]{ \edge{ \op{2}{}, \op{1}{} } } $.
	For each $i > 1$, set $V'_i = V_i$.
	There are two things to be shown:
	\begin{enumerate} [label=\arabic*),topsep=0pt]
		\item each $V'_i$ is a total order (so that it is indeed a view), and 
		\item $\V'$ certifies a strongly causal replay to be valid for $\R$, i.e., satisfies properties for both strong causal consistency and replay.
	\end{enumerate}

	We first show that for each $i \in P$, $V'_i$ is a total order.
	Since $V'_i = V_i$ for $i > 1$, we focus on $V'_1$.
	Suppose $V'_1$ is not a total order.
	$V'_1$ orders all operations in $(\operatorname{w}, *, *, *) \cup (*, 1, *, *)$ by construction.
	So we must have introduced a cycle in $V'_1$.
	This implies that there is a $\op{1}{} \op{2}{}$-path in $V_1 \setminus \set[0]{ \edge{ \op{1}{}, \op{2}{} } }$.
	Let this $\op{1}{} \op{2}{}$-path $\rho$ be given by $\op{1}{} = \op{\rho, 0}{} <_{V_i} \op{\rho, 1}{} <_{V_i} \dots  <_{V_i} \op{\rho, k}{} = \op{2}{}$.
	Note that since $\trans{V}_1$ preserves all paths in $V_1$, so there must be a $\op{\rho, j}{} \op{\rho, j + 1}{}$-path $P_j$ in $\trans{V}_1$ for every $j \in \sbr{0, k-1}$.
	Note also that these paths do not include the edge $\edge{ \op{1}{}, \op{2}{} }$ because $V_1$ is acyclic.
	So there is a $\op{1}{} \op{2}{}$-path in $\trans{V}_1$ that does not use the edge $\edge{ \op{1}{}, \op{2}{} }$ given by $\bigcup_{j = 0}^{k-1} P_j$.
	Hence, the edge $\edge{ \op{1}{}, \op{2}{} }$ can be removed from $\trans{V}_1$ while preserving all paths in $V_1$.
	This contradicts the fact that $\trans{V}_1$ is the (unique) transitive reduction of $V_1$.

	We now show that $\V'$ certifies a strongly causal replay to be valid for $\R$.
	More precisely, we show that, for each process $i$,
	\begin{enumerate}[label=\arabic*),topsep=0pt]
		\item $V'_i$ respects $R_i$, and
		\item $V'_i$ respects ${SCO( \V' )} \cup \del{{PO} | ( *, i, *, * ) \cup ( \operatorname{w}, *, *, * )}$.
	\end{enumerate}
	Observe that for each $i > 1$, $V'_i = V_i$ and so $V'_i$ respects $R_i \subseteq V_i$ and $PO \mid (*, i, *, *) \cup (\operatorname{w}, *, *, *)$.
	For $i = 1$, recall that $\edge{ \op{1}{}, \op{2}{} } \not \in R_1$ and $\edge{ \op{1}{}, \op{2}{} } \not \in PO$, both of which are independent of $\V'$.
	Therefore $V'_1$ respect $R_1$ and $PO \mid (*, 1, *, *) \cup (\operatorname{w}, *, *, *)$ as well.
	So it is left to show that each $V'_i$ respects $SCO( \V' )$.
	There are 4 cases to consider.
	\begin{enumerate}[label=\underline{Case \arabic*:},wide,topsep=0pt,labelindent=0pt]
		\item Either $\op{1}{} \in (\operatorname{r}, 1, *, *)$ or $\op{2}{} \in (\operatorname{r}, 1, *, *)$.
		Since strong causal order only orders write operations (Definition \ref{definition LRO strong causal order}) so $SCO( \V' ) = SCO( \V )$.
		Therefore, for each $i \in P$, $V'_i$ respects $SCO( \V' )$.

		\item $\op{2}{} \in (\operatorname{w}, 1, *, *)$ and $\op{1}{} \not \in (\operatorname{w}, 1, *, *)$.
		Then $\edge{ \op{1}{}, \op{2}{} } \in SCO( \V )$ and therefore $SCO( \V' ) = SCO( \V ) \setminus \set[0]{ \edge{ \op{1}{}, \op{2}{} } }$.
		Since $SCO( \V ) \supset SCO( \V' )$, therefore, for every $i > 1$, $V'_i$ respects $SCO( \V' )$.
		$V'_1$ respects $SCO( \V ) \setminus \set[0]{ \edge{ \op{1}{}, \op{2}{} } }$ by construction.

		\item $\op{1}{} \in (\operatorname{w}, 1, *, *)$ and $\op{2}{} \not \in (\operatorname{w}, 1, *, *)$.
		Then $SCO( \V' ) = SCO( \V ) \cup \set[0]{ \edge{ \op{2}{}, \op{1}{} } }$ by Definition \ref{definition LRO strong causal order}.
		WLOG $\op{2}{} \in (\operatorname{w}, 2, *, *)$.
		Since $\edge{ \op{1}{}, \op{2}{} } \not \in SCO_1( \V )$, therefore $\edge{ \op{2}{}, \op{1}{} } \in V_2$.
		We have that $V'_2 = V_2$ respects $SCO( \V ) \cup \set[0]{ \edge{ \op{2}{}, \op{1}{} } } = SCO( \V' )$.
		Since $\edge{ \op{1}{}, \op{2}{} } \not \in B_1( \V )$, therefore for all $i > 2$, $\edge{ \op{2}{}, \op{1}{} } \in V_i$ and so $V'_i = V_i$ respects $SCO( \V ) \cup \set[0]{ \edge{ \op{2}{}, \op{1}{} } } = SCO( \V' )$.
		Now $V'_1$ respects $SCO( \V ) \cup \set[0]{ \edge{ \op{2}{}, \op{1}{} } }$ by construction.

		\item $\op{1}{}$, $\op{2}{}$ are writes and $\op{1}{}, \op{2}{} \not \in (\operatorname{w}, 1, *, *)$.
		Then $SCO(\V) = SCO(\V')$ and for each $i \in P$, $V'_i$ respects $SCO(\V')$ since $\edge{ \op{1}{}, \op{2}{} } \not \in SCO(\V)$.
	\end{enumerate}

	So we have shown that $\V'$ certifies a strongly causal replay to be valid for $\R$.
	Since $\edge{ \op{1}{}, \op{2}{} } \in V_1$ and $\edge{ \op{2}{}, \op{1}{} } \in V'_1$, thus $V'_1 \ne V_1$.
	This contradicts the initial assumption that $\R$ is a good record.
	%
\end{proof_of}
\\

\begin{proof_of}{Theorem \ref{theorem strong causal online sufficiency}}
	By Theorem \ref{theorem:RnRStrongCausalConsistencySufficiency}, it follows that $\R$ is a good record of $\V$, so we show that $\R$ can be recorded online.
	Fix a process $i$ and consider an arbitrary time step in the execution when process $i$ observes an operation say $\op{2}{}$.
	Let $\op{1}{} \in (*, *, *, *)$ be the last operation in $V_i$.
	Process $i$ can check if $\edge{ \op{1}{}, \op{2}{} } \in PO$ and also if $\edge{ \op{1}{}, {\op{2}{}} } \in SCO(\V)$.
	To check if $\edge{ \op{1}{}, \op{2}{} } \in SCO_i(\V)$, process $i$ follows the following procedure.
	If $\op{2}{}$ was executed by process $i$, then the edge cannot be in $SCO_i(\V)$.
	If $\op{2}{}$ was not executed by process $i$, then $\edge{ \op{1}{}, \op{2}{} } \in SCO_i(\V)$ if and only if $\edge{ \op{1}{}, \op{2}{} } \in SCO(\V)$.
	Process $i$ records $\edge{ \op{1}{}, \op{2}{} }$ if $\edge{ \op{1}{}, \op{2}{} } \not \in SCO_i( \V ) \cupdot PO$.
	
	Observe that $\edge{ \op{1}{}, \op{2}{} } \in \trans{V}_i$ if and only if, when $\op{2}{}$ is observed by process $i$, $\op{1}{}$ is the last operation in $V_i$.
	Therefore, the above procedure records $\trans{V}_i \setminus \del{ SCO_i( \V ) \cupdot PO }$ at process $i$.
\end{proof_of}
\\

\begin{proof_of} {Theorem \ref{theorem strong causal online necessity}}
	By Theorem \ref{theorem:RnRStrongCausalConsistencyNecessity}, it follows that for any process $i$, $\trans{V}_i \setminus \del{ SCO_i( \V ) \cupdot PO \cupdot B_i( \V ) }$ is necessary to record even in the offline setting.
	We show that an arbitrary process, WLOG process $1$, can not detect if an edge in $\trans{V}_1 \setminus PO \cupdot SCO_1(\V)$ is also in $B_1( \V )$ in an online setting.
	Recall from Definition \ref{definition LRO Bi} that $B_1( \V )$ orders only write operations.
	Suppose, at a given time step in the execution, that process $1$ observes $\opw{}{2} \in (\operatorname{w}, 2, *, *)$, and $\opw{}{1} \in (\operatorname{w}, 1, *, *)$ is the last operation in $V_1$ so that $\edge{ \opw{}{1} , \opw{}{2} } \in B_1( \V ) \cap \trans{V}_1$.
	Assume further that $\edge{ \opw{}{1} , \opw{}{2} } \not \in PO \cupdot SCO_1(\V)$.
	
	Let $\set[0]{ \tilde{V}_i }_{i \ge 2}$ be the (parts of) views of processes $i \ge 2$ that process $1$ is aware of.
	Observe that for each $i \ge 2$, the last operation in $\tilde{V}_i$ was executed by process $i$.
	Let this last operation be $\opw{}{i}$.
	Note that each $\opw{}{i}$ has already been observed by process $1$.
	For $i > 2$, $\edge{ \opw{}{2}, \opw{}{i} } \not \in \tilde{V}_i$, since otherwise $\edge{ \opw{}{2}, \opw{}{i} } \in SCO_1( \V )$, which contradicts the fact that $\opw{}{2}$ is the last operation observed by process $1$.
	Similarly, for $i > 2$, $\edge{ \opw{}{1}, \opw{}{i} } \not \in \tilde{V}_i$.
	Therefore, as far as process $1$ is aware, no process $i > 2$ has observed either $\opw{}{1}$ or $\opw{}{2}$.
	Thus, for each $i > 2$, both $\tilde{V}_i \cup \set[0]{ \edge{ \opw{}{1}, \opw{}{2} } }$ and $\tilde{V}_i \cup \set[0]{ \edge{ \opw{}{2}, \opw{}{1} } }$ are valid for future observation by process $i$.
	So process $1$ cannot decide whether $\edge{ \opw{}{1} , \opw{}{2} } \in B_1( \V )$ or not (see Definition \ref{definition LRO Bi}).
\end{proof_of}

	\section{Some Observations for Section \ref{section optimal record DRO}} \label{section observations}
\begin{observation} \label{observation Ci minimal write}
	Consider a set of views $\V = \set{ V_i }_{i \in P}$ that explain a strongly causal consistent execution, an arbitrary process $i$, and two operations $\op{1}{} \in (*, *, *, *)$ and $\opw{2}{} \in (\operatorname{w}, *, *, *)$ such that $C_i(\V, \op{1}{}, \opw{2}{})$ is non-empty.
	Let $\opw{\operatorname{min}}{i} \in (\operatorname{w}, i, *, *)$ be the minimal (with respect to $PO$) write on process $i$ such that $\op{1}{} \le_{A_i(\V)} \opw{\operatorname{min}}{i}$.
	Then $\opw{\operatorname{min}}{i}$ exists and
	\begin{enumerate}
		\item $C^k_i(\V, \op{1}{}, \opw{2}{}) = C^k_i(\V, \opw{\operatorname{min}}{i}, \opw{2}{})$ for any $k$, and
		
		\item for any two write operations $\opw{3}{} \in (\operatorname{w}, *, *, *)$ and $\opw{4}{} \in (\operatorname{w}, *, *, *)$, if $\edge{\opw{3}{}, \opw{4}{}} \in C_i^1(\V, \op{1}{}, \opw{2}{})$, then $\edge{\opw{3}{}, \opw{\operatorname{min}}{i}} \in C_i^1(\V, \op{1}{}, \opw{2}{})$.
	\end{enumerate}
\end{observation}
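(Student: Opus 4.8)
The plan is to first establish the existence of $\opw{\operatorname{min}}{i}$, then prove part~(1) by induction on $k$, and finally read off part~(2) from the base case of Definition~\ref{definition DRO Ci}. For existence, I would note that since $C_i(\V, \op{1}{}, \opw{2}{})$ is non-empty, $C_i^k(\V, \op{1}{}, \opw{2}{})$ is non-empty for some $k \ge 1$; because clause~(2) of Definition~\ref{definition DRO Ci} makes a non-empty $C_i^k$ require a non-empty $C_i^{k-1}$, a downward induction shows $C_i^1(\V, \op{1}{}, \opw{2}{})$ is non-empty. Any edge $\edge{\opw{3}{}, \opw{4}{i}} \in C_i^1(\V, \op{1}{}, \opw{2}{})$ then furnishes a write $\opw{4}{i} \in (\operatorname{w}, i, *, *)$ with $\op{1}{} \le_{A_i(\V)} \opw{4}{i}$, so the set $S = \set{ \opw{}{i} \in (\operatorname{w}, i, *, *) : \op{1}{} \le_{A_i(\V)} \opw{}{i} }$ is non-empty. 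Since all writes of process $i$ are totally ordered by $PO$ and $PO \mid ((*, i, *, *) \cup (\operatorname{w}, *, *, *)) \subseteq A_i(\V)$, the set $S$ has a unique $PO$-least element, which is $\opw{\operatorname{min}}{i}$; being a write, it is a legitimate first argument to $C_i$.

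The crux of part~(1) is the observation that the first argument enters Definition~\ref{definition DRO Ci} \emph{only} through the base-case condition $\op{1}{} \le_{A_i(\V)} \opw{4}{i}$; the inductive clause~(2) refers to the first argument solely through the recursive term $C_i^{k-1}(\V, \op{1}{}, \opw{2}{})$. Hence it suffices to prove the base-case identity $C_i^1(\V, \op{1}{}, \opw{2}{}) = C_i^1(\V, \opw{\operatorname{min}}{i}, \opw{2}{})$ and then propagate it. Because the condition $\opw{3}{} \le_{A_i(\V)} \opw{2}{}$ is common to both sides and independent of the first argument, the base-case identity reduces to showing, for every write $\opw{4}{i} \in (\operatorname{w}, i, *, *)$, that $\op{1}{} \le_{A_i(\V)} \opw{4}{i}$ if and only if $\opw{\operatorname{min}}{i} \le_{A_i(\V)} \opw{4}{i}$. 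The backward direction follows from $\op{1}{} \le_{A_i(\V)} \opw{\operatorname{min}}{i}$ and transitivity of $A_i(\V)$. The forward direction is where the real work lies: if $\op{1}{} \le_{A_i(\V)} \opw{4}{i}$ then $\opw{4}{i} \in S$, so by $PO$-minimality of $\opw{\operatorname{min}}{i}$ among the totally ordered process-$i$ writes we get $\opw{\operatorname{min}}{i} \le_{PO} \opw{4}{i}$, whence $\opw{\operatorname{min}}{i} \le_{A_i(\V)} \opw{4}{i}$ since $PO \mid ((*, i, *, *) \cup (\operatorname{w}, *, *, *)) \subseteq A_i(\V)$.

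With the base case in hand, the inductive step of part~(1) is purely structural: assuming $C_i^{k-1}(\V, \op{1}{}, \opw{2}{}) = C_i^{k-1}(\V, \opw{\operatorname{min}}{i}, \opw{2}{})$, the fact that clause~(2) mentions the first argument only via $C_i^{k-1}$ forces $C_i^k(\V, \op{1}{}, \opw{2}{}) = C_i^k(\V, \opw{\operatorname{min}}{i}, \opw{2}{})$, completing the induction. Part~(2) then follows immediately: if $\edge{\opw{3}{}, \opw{4}{}} \in C_i^1(\V, \op{1}{}, \opw{2}{})$, the base case gives $\opw{4}{} \in (\operatorname{w}, i, *, *)$ and $\opw{3}{} \le_{A_i(\V)} \opw{2}{}$; since $\opw{\operatorname{min}}{i} \in (\operatorname{w}, i, *, *)$ and $\op{1}{} \le_{A_i(\V)} \opw{\operatorname{min}}{i}$ by construction, both base-case conditions hold with $\opw{\operatorname{min}}{i}$ in place of $\opw{4}{}$, yielding $\edge{\opw{3}{}, \opw{\operatorname{min}}{i}} \in C_i^1(\V, \op{1}{}, \opw{2}{})$.

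I expect the only genuine obstacle to be the forward direction of the base-case equivalence, which hinges on $\opw{\operatorname{min}}{i}$ being $PO$-least among the process-$i$ writes above $\op{1}{}$ and on the containment $PO \mid ((*, i, *, *) \cup (\operatorname{w}, *, *, *)) \subseteq A_i(\V)$. Everything else is bookkeeping that tracks how the first argument threads through the inductive definition of $C_i$.
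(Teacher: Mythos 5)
Your proposal is correct and follows essentially the same route as the paper's proof: existence of $\opw{\operatorname{min}}{i}$ from non-emptiness of $C_i^1(\V, \op{1}{}, \opw{2}{})$, part (1) by induction on $k$ with the base case settled via $PO$-minimality (the $\subseteq$ direction) and transitivity through $\op{1}{} \le_{A_i(\V)} \opw{\operatorname{min}}{i}$ (the $\supseteq$ direction), the inductive step being purely structural since the first argument enters clause (2) of Definition \ref{definition DRO Ci} only through $C_i^{k-1}$, and part (2) read off directly from the two base-case conditions. Your reformulation of the base case as an equivalence on writes $\opw{4}{i}$ is just a repackaging of the paper's two containments, not a different argument.
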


\begin{proof}
	The existence of $\opw{\operatorname{min}}{i}$ follows from the assumption that $C_i(\V, \op{1}{}, \opw{2}{})$, and so $C^1_i(\V, \op{1}{}, \opw{2}{})$, is non-empty.
	Therefore, by Definition \ref{definition DRO Ci}, there exists at least one write $w_i \in (\operatorname{w}, i, *, *)$ on process $i$ such that $\op{1}{} \le_{A_i(\V)} w_i$.
	\begin{enumerate} 
	[label=\underline{\arabic*.},wide,topsep=0pt,labelindent=0pt]
		\item
		We proceed via induction on $k$.
		The inductive step for $k > 1$ follows from Definition \ref{definition DRO Ci} by applying the inductive hypothesis $C^{k-1}_i(\V, \op{1}{}, \opw{2}{}) = C^{k-1}_i(\V, \opw{\operatorname{min}}{i}, \opw{2}{})$.
		For the base case, we show the equality for $k = 1$.
		\begin{itemize}
			\item
			$C^1_i(\V, \op{1}{}, \opw{2}{}) \subseteq C^1_i(\V, \opw{\operatorname{min}}{i}, \opw{2}{})$.
			Consider any two operations $\opw{3}{} \in (\operatorname{w}, *, *, *)$ and $\opw{4}{i} \in (\operatorname{w}, i, *, *)$ such that $\edge{\opw{3}{}, \opw{4}{i}} \in C_i^1( \V, \op{1}{}, \opw{2}{} )$.
			Then,
			\begin{itemize}
				\item $\opw{\operatorname{min}}{i} \le_{PO} \opw{4}{i}$, by the minimality of $\opw{\operatorname{min}}{i}$, and
				\item $\opw{3}{} \le_{A_i(\V)} \opw{2}{}$, by Definition \ref{definition DRO Ci}.
			\end{itemize}
			Therefore, by Definition \ref{definition DRO Ci}, $\edge{\opw{3}{}, \opw{4}{i}} \in C_i^1( \V, \opw{\operatorname{min}}{i}, \opw{2}{} )$.
			
			\item
			$C^1_i(\V, \op{1}{}, \opw{2}{}) \supseteq C^1_i(\V, \opw{\operatorname{min}}{i}, \opw{2}{})$.
			Consider any two operations $\opw{3}{} \in (\operatorname{w}, *, *, *)$ and $\opw{4}{i} \in (\operatorname{w}, i, *, *)$ such that $\edge{\opw{3}{}, \opw{4}{}} \in C_i^1(\V, \opw{\operatorname{min}}{i}, \opw{2}{})$.
			Then,
			\begin{itemize}
				\item $\opw{3}{} \le_{A_i(\V)} \opw{2}{}$, by Definition \ref{definition DRO Ci},
				\item $\op{1}{} \le_{A_i(\V)} \opw{\operatorname{min}}{i}$, by the definition of $\opw{\operatorname{min}}{i}$, and
				\item $\opw{\operatorname{min}}{i} \le_{PO} \opw{4}{i}$, by the minimality of $\opw{\operatorname{min}}{i}$.
			\end{itemize}
			Therefore, $\op{1}{} \le_{A_i(\V)} \opw{4}{i}$, by Definition \ref{definition DRO Ai}, and so $\edge{\opw{3}{}, \opw{4}{i}} \in C_i^1(\V, \op{1}{}, \opw{2}{})$ by Definition \ref{definition DRO Ci}.
		\end{itemize}
		
		\item
		Consider any two operations $\opw{3}{} \in (\operatorname{w}, *, *, *)$ and $\opw{4}{i} \in (\operatorname{w}, i, *, *)$ such that $\edge{\opw{3}{}, \opw{4}{i}} \in C_i^1(\V, \op{1}{}, \opw{2}{})$.
		Then,
		\begin{itemize}
			\item $\op{1}{} \le_{A_i(\V)} \opw{\operatorname{min}}{i}$, by the definition of $\opw{\operatorname{min}}{i}$, and
			\item $\opw{3}{} \le_{A_i(\V)} \opw{2}{}$, by Definition \ref{definition DRO Ci}.
		\end{itemize}
		Therefore, by Definition \ref{definition DRO Ci}, $\edge{\opw{3}{}, \opw{\operatorname{min}}{i}} \in C_i^1(\V, \op{1}{}, \opw{2}{})$.
	\end{enumerate}
\end{proof}

\begin{observation} \label{observation Ci sub SWO}
	Consider a set of views $\V = \set{ V_i }_{i \in P}$ that explain a strongly causal consistent execution, an arbitrary process $i$, and two operations $\op{1}{} \in (*, *, *, *)$ and $\opw{2}{} \in (\operatorname{w}, *, *, *)$.
	We have that if $C_i^1(\V, \op{1}{}, \opw{2}{}) \subseteq SWO(\V)$, then
	\begin{enumerate}
		\item $C_i(\V, \op{1}{}, \opw{2}{}) \subseteq SWO(\V)$, and
		\item $\edge{\op{1}{}, \opw{2}{}} \not \in B_i(\V)$.
	\end{enumerate}
\end{observation}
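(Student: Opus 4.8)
The plan is to prove the two claims in turn, leaning on three facts already in hand: that $A_j(\V) \supseteq SWO(\V)$ for every process $j$ (the remark following Observation \ref{observation DRO Ai same as SWO}), that each $A_j(\V)$ is closed under transitivity (established inside the proof of that observation), and Observation \ref{observation DRO Ai same as SWO} itself, namely that for a write $\opw{4}{j} \in (\operatorname{w}, j, *, *)$ one has $\edge{\opw{3}{}, \opw{4}{j}} \in A_j(\V)$ if and only if $\edge{\opw{3}{}, \opw{4}{j}} \in SWO(\V)$. The strategy for Part 1 is to collapse the nested inductive structure of $C_i$ into a single $A_{i'}(\V)$-chain and then reapply this equivalence.

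For Part 1, I would prove $C_i^k(\V, \op{1}{}, \opw{2}{}) \subseteq SWO(\V)$ for every $k \ge 1$ by induction on $k$, the base case $k = 1$ being exactly the hypothesis. For the inductive step, take $\edge{\opw{3}{}, \opw{4}{i'}} \in C_i^k(\V, \op{1}{}, \opw{2}{})$ and unpack Definition \ref{definition DRO Ci}: there are writes $\opw{5}{}, \opw{6}{}$ with $\edge{\opw{5}{}, \opw{6}{}} \in C_i^{k-1}(\V, \op{1}{}, \opw{2}{})$, with $\opw{3}{} \le_{A_{i'}(\V) \cup C_i^{k-1}(\V, \op{1}{}, \opw{2}{})} \opw{5}{}$, and with $\opw{6}{} \le_{A_{i'}(\V)} \opw{4}{i'}$. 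By the inductive hypothesis $C_i^{k-1}(\V, \op{1}{}, \opw{2}{}) \subseteq SWO(\V) \subseteq A_{i'}(\V)$, so since $A_{i'}(\V)$ is transitively closed the first relation simplifies to $\opw{3}{} \le_{A_{i'}(\V)} \opw{5}{}$, and $\edge{\opw{5}{}, \opw{6}{}} \in SWO(\V) \subseteq A_{i'}(\V)$ gives the strict $\opw{5}{} <_{A_{i'}(\V)} \opw{6}{}$. Chaining the three $A_{i'}(\V)$-relations yields $\opw{3}{} <_{A_{i'}(\V)} \opw{4}{i'}$; since $\opw{4}{i'}$ is a write on process $i'$, Observation \ref{observation DRO Ai same as SWO} upgrades this to $\edge{\opw{3}{}, \opw{4}{i'}} \in SWO(\V)$. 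Taking the union over $k$ gives $C_i(\V, \op{1}{}, \opw{2}{}) \subseteq SWO(\V)$.

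For Part 2, I would argue that the cycle requirement in Definition \ref{definition DRO Bi} can never hold, so $\edge{\op{1}{}, \opw{2}{}} \notin B_i(\V)$. The key point is that every $A_m(\V)$ is acyclic, since it is contained in the total order $V_m$: indeed $DRO(V_m) \subseteq V_m$, $PO \mid (*, m, *, *) \cup (\operatorname{w}, *, *, *) \subseteq V_m$, and $SWO_m(\V) \subseteq SWO(\V) \subseteq SCO(\V) \subseteq V_m$. By Part 1, $C_i(\V, \op{1}{}, \opw{2}{}) \subseteq SWO(\V) \subseteq A_m(\V)$ for every $m$. Hence for $m \ne i$ the edge set $A_m(\V) \cupdot C_i(\V, \op{1}{}, \opw{2}{})$ coincides with $A_m(\V)$, which is acyclic, and for $m = i$ the edge set $\del[1]{A_m(\V) \setminus \set[0]{\edge{\op{1}{}, \opw{2}{}}}} \cupdot C_i(\V, \op{1}{}, \opw{2}{})$ is contained in $A_i(\V)$ and so is acyclic too. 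In either case there is no cycle, the second clause of Definition \ref{definition DRO Bi} fails for all $m$, and the claim follows.

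The routine but delicate steps are justifying that $A_{i'}(\V) \cup C_i^{k-1}(\V, \op{1}{}, \opw{2}{}) = A_{i'}(\V)$, which uses both $C_i^{k-1}(\V, \op{1}{}, \opw{2}{}) \subseteq A_{i'}(\V)$ and the transitive closedness of $A_{i'}(\V)$, and the containment $A_m(\V) \subseteq V_m$ underpinning acyclicity. I expect the main obstacle to be the inductive step of Part 1: it is where the process-indexed closure built into $C_i$ must be reconciled with the fact that $A_{i'}(\V)$ already absorbs all of $SWO(\V)$, which is precisely what allows the multi-level $C_i$-witness to be flattened into one $A_{i'}(\V)$-chain before Observation \ref{observation DRO Ai same as SWO} is reapplied.
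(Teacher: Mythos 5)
Your proof is correct and follows essentially the same route as the paper's: induction on $k$ for Part 1, flattening the $C_i^{k-1}$ edges into $A_{i'}(\V)$ via the containment $SWO(\V) \subseteq A_{i'}(\V)$ and then applying Observation \ref{observation DRO Ai same as SWO} to the resulting chain ending at the write $\opw{4}{i'}$, followed by the same two-case acyclicity argument for Part 2. The only difference is cosmetic: you spell out why each $A_m(\V)$ is acyclic (containment in the total order $V_m$), a step the paper leaves implicit.
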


\begin{proof}
	\begin{enumerate}[label=\underline{\arabic*.},wide,topsep=0pt,labelindent=0pt]
		\item
		By induction on $k$, we show that for every positive integer $k$, $C_i^k(\V, \op{1}{}, \opw{2}{}) \subseteq SWO(\V)$.
		The base case follows by assumption.
		For the inductive step, for $k > 1$, consider any two operations $\opw{3}{} \in (\operatorname{w}, *, *, *)$ and $\opw{4}{i'} \in (\operatorname{w}, i', *, *)$ such that $\edge{\opw{3}{}, \opw{4}{i'}} \in C_i^k(\V, \op{1}{}, \opw{2}{})$.
		Then, by Definition \ref{definition DRO Ci}, there exist two write operations $\opw{5}{}, \opw{6}{} \in (\operatorname{w}, *, *, *)$, such that
		\begin{enumerate}
			\item
			$\edge{\opw{5}{}, \opw{6}{}} \in C_i^{k-1}(\V, \op{1}{}, \opw{2}{})$,
			
			\item
			$\opw{3}{} \le_{A_{i'}(\V) \cup C_i^{k-1}(\V, \op{1}{}, \opw{2}{})} \opw{5}{}$, and
			
			\item
			$\opw{6}{} \le_{A_{i'}(\V)} \opw{4}{i'}$.
		\end{enumerate}
		By the inductive hypothesis, we have that $C_i^{k-1}(\V, \op{1}{}, \opw{2}{}) \subseteq SWO(\V)$.
		Therefore $\opw{3}{} \le_{A_{i'}(\V) \cup SWO(\V)} \opw{5}{} \SWO{\V} \opw{6}{} \le_{A_{i'}(\V)} \opw{4}{i'}$, which implies $\edge{ \opw{3}{}, \opw{4}{i'} } \in SWO(\V)$ by Observation \ref{observation DRO Ai same as SWO}.
		
		\item
		Since $C_i(\V, \op{1}{}, \opw{2}{}) \subseteq SWO(\V)$ and, for each process $m \in P$, $A_m(\V) \supseteq SWO(\V)$, thus
		\begin{enumerate}
			\item if $m \ne i$, then ${A_m(\V)} \cupdot C_i( \V, \op{1}{}, \opw{2}{} ) = A_m(\V)$ which is acyclic, and
			\item if $m = i$, then $\del[1]{A_m(\V) \setminus \set[0]{ \edge{\op{1}{}, \opw{2}{}} } } \cupdot C_i( \V, \op{1}{}, \opw{2}{} ) \subseteq A_m(\V)$ which is acyclic.
		\end{enumerate}
		Therefore $\edge{\op{1}{}, \opw{2}{}} \not \in B_i(\V)$ by Definition \ref{definition DRO Bi}.
	\end{enumerate}
\end{proof}

\begin{observation} \label{observation Ci SWO}
	Consider a set of views $\V = \set{ V_i }_{i \in P}$ that explain a strongly causal consistent execution, an arbitrary process $i$, and two write operations $\opw{1}{}, \opw{2}{}, \opw{3}{} \in (\operatorname{w}, *, *, *)$, and $\opw{4}{i'} \in (\operatorname{w}, i', *, *)$.
	We have that if $\edge{\opw{3}{}, \opw{4}{i'}} \in C_i^k( \V, \opw{1}{}, \opw{2}{} )$ for some $k \ge 0$, then $\opw{1}{} \le_{SWO(\V)} \opw{4}{i'}$.
\end{observation}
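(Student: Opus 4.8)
The plan is to induct on $k$. The two facts doing all the work are Observation \ref{observation DRO Ai same as SWO}, which says that for a write $\opw{}{}$ and a write $\opw{}{j}$ on process $j$ one has $\edge{\opw{}{}, \opw{}{j}} \in A_j(\V)$ if and only if $\edge{\opw{}{}, \opw{}{j}} \in SWO(\V)$, together with the fact (noted after Definition \ref{definition DRO Ci}'s companion, and used freely) that $SWO(\V)$ is a partial order, hence transitive, for strongly causal consistent executions. The degenerate case $k = 0$ is vacuous, since $C_i^0(\V, \opw{1}{}, \opw{2}{})$ is empty, so I would begin the induction at $k = 1$.

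For the base case $k = 1$, Definition \ref{definition DRO Ci} forces the target to lie on process $i$ (so $i' = i$) and to satisfy $\opw{1}{} \le_{A_i(\V)} \opw{4}{i}$. Since $\opw{1}{}$ is a write and $\opw{4}{i}$ is a write on process $i$, Observation \ref{observation DRO Ai same as SWO} converts this directly into $\opw{1}{} \le_{SWO(\V)} \opw{4}{i}$, with the equality case $\opw{1}{} = \opw{4}{i}$ being trivial.

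For the inductive step $k > 1$, I would unpack Definition \ref{definition DRO Ci} to obtain writes $\opw{5}{}, \opw{6}{}$ with $\edge{\opw{5}{}, \opw{6}{}} \in C_i^{k-1}(\V, \opw{1}{}, \opw{2}{})$ and $\opw{6}{} \le_{A_{i'}(\V)} \opw{4}{i'}$; note that the condition relating $\opw{3}{}$ and $\opw{5}{}$ plays no role here. The inductive hypothesis applied to the $C_i^{k-1}$ edge gives $\opw{1}{} \le_{SWO(\V)} \opw{6}{}$. Because $\opw{4}{i'}$ is a write on process $i'$ and $\opw{6}{}$ is a write, Observation \ref{observation DRO Ai same as SWO} applied with process $i'$ turns $\opw{6}{} \le_{A_{i'}(\V)} \opw{4}{i'}$ into $\opw{6}{} \le_{SWO(\V)} \opw{4}{i'}$. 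Transitivity of $SWO(\V)$ then chains these into $\opw{1}{} \le_{SWO(\V)} \opw{4}{i'}$, completing the step.

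I do not anticipate a genuine obstacle, as essentially all the content is supplied by Observation \ref{observation DRO Ai same as SWO}. The only points requiring care are, first, applying the inductive hypothesis to the head $\opw{6}{}$ of the $C_i^{k-1}$ edge rather than to $\opw{5}{}$ or to the final target, and second, matching the hypotheses of Observation \ref{observation DRO Ai same as SWO} on each invocation, so that the second write is always anchored to exactly the process whose $A$-relation is being used ($i$ in the base case, $i'$ in the inductive step), all while tracking the $\le$-versus-$<$ distinction so that the equality sub-cases $\opw{1}{} = \opw{6}{}$ and $\opw{6}{} = \opw{4}{i'}$ are handled cleanly.
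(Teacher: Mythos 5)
Your proof is correct and follows essentially the same route as the paper's: induction on $k$, with the base case reading $i' = i$ and $\opw{1}{} \le_{A_i(\V)} \opw{4}{i}$ off Definition \ref{definition DRO Ci}, and the inductive step applying the hypothesis to the head $\opw{6}{}$ of the $C_i^{k-1}$ edge and then chaining $\opw{1}{} \le_{SWO(\V)} \opw{6}{} \le_{A_{i'}(\V)} \opw{4}{i'}$ via Observation \ref{observation DRO Ai same as SWO}. The only cosmetic difference is that you convert the $A_{i'}(\V)$ edge to $SWO(\V)$ before chaining while the paper chains the mixed relations first, and your dismissal of $k=0$ as vacuous is harmless (the paper elsewhere adopts the convention $C_i^0(\V, \op{1}{}, \opw{2}{}) = \set[0]{\edge{\opw{2}{}, \op{1}{}}}$, under which $k=0$ holds by reflexivity instead).
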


\begin{proof}
	We proceed by induction on $k$.
	Base case, for $k = 1$, we have $\opw{1}{} \le_{A_i(\V)} \opw{4}{i'}$ and $i' = i$, by Definition \ref{definition DRO Ci}.
	Therefore $\opw{1}{} \le_{SWO(\V)} \opw{4}{i'}$.
	For the inductive step, for $k > 1$, by Definition \ref{definition DRO Ci} there exist $\edge{\opw{5}{}, \opw{6}{}} \in C_i^{k-1}( \V, \opw{1}{}, \opw{2}{} )$ such that $\opw{6}{} \le_{A_{i'}(\V)} \opw{4}{i'}$.
	By the inductive hypothesis, we have that $\opw{1}{} \le_{SWO(\V)} \opw{6}{}$.
	Therefore $\opw{1}{} \le_{SWO(\V)} \opw{6}{} \le_{A_{i'}(\V)} \opw{4}{i'}$, 
	and so $\opw{1}{} \le_{SWO(\V)} \opw{4}{i'}$.
\end{proof}

	\section{Proofs for Section \ref{section optimal record DRO}} \label{section proofs 2}
\begin{lemma} \label{lemma replay preserves SWO and B_i}
	Consider a set of views $\V = \set{ V_i }_{i \in P}$ that explain a strongly causal consistent execution.
	For each process $i \in P$, let $R_i = \hat{A}_i(\V) \setminus \del{ SWO_i(\V) \cupdot PO \cupdot B_i(\V)}$.
	Then, for any set of views $\V'$ that certify a strongly causal consistent replay to be valid under $\R = \set{R_i}_{i \in P}$, we have that
	\begin{enumerate}[label=(\alph*),topsep=0pt]
		\item $SWO(\V') \supseteq SWO(\V)$, and
		\item $V'_i \supseteq B_i(\V)$ for every process $i \in P$.
	\end{enumerate}
\end{lemma}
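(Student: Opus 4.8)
The plan is to mirror the proof of Lemma~\ref{lemma replay preserves SCO and B_i}, replacing $SCO$ by $SWO$ and the per-view reductions $\trans{V}_i$ by $\trans{A}_i(\V)$, and using Observation~\ref{observation DRO Ai same as SWO} to move freely between membership in $A_i(\V)$ and in $SWO(\V)$ for a write whose target sits on process~$i$. The crucial structural difference from Model~1 is that parts (a) and (b) can no longer be proved in sequence: in Model~1 the $B_1(\V)$ subcase of part~(a) was closed using the fact that a $B_1$-edge has its source on process~$1$, but under Definition~\ref{definition DRO Bi} a Model~2 $B_i(\V)$-edge is merely a same-variable pair in $DRO(V_i)$ whose reversal would manufacture an $SWO$-cycle, so its source need not lie on~$i$. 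I would therefore prove (a) and (b) together by a single well-founded induction along the partial order $SWO(\V)$ (a partial order here, being contained in $SCO(\V)$).

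For part~(a) I would, as in Lemma~\ref{lemma replay preserves SCO and B_i}(a), call a write \emph{bad} when some $SWO(\V)$-predecessor fails to precede it in $SWO(\V')$, choose an $SWO(\V)$-minimal bad write $\opw{2}{1}$ (WLOG on process~$1$) together with an offending predecessor $\opw{1}{}$, and, using Observation~\ref{observation DRO Ai same as SWO}, fix a path $\opw{1}{}=\op{\rho,0}{}\lessdot_{A_1(\V)}\cdots\lessdot_{A_1(\V)}\op{\rho,k}{}=\opw{2}{1}$ in $\trans{A}_1(\V)$. At the smallest index where the edge leaves $A_1(\V')$, that edge is one of $R_1$, $PO$, $SWO_1(\V)$, or $B_1(\V)$. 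The first three close exactly as in Model~1 --- recorded $DRO$-edges and $PO$-edges survive by replay-validity, while an $SWO_1(\V)$-edge has its target strictly $SWO(\V)$-below $\opw{2}{1}$ and off process~$1$, hence not bad by minimality, so it persists into $SWO(\V')$ and thus into $A_1(\V')$. The genuinely new subcase is $B_1(\V)$, which I would close by appealing to part~(b) for process~$1$ (legitimate under the joint induction): the un-recorded edge is then respected by $V'_1$ and so lies in $DRO(V'_1)\subseteq A_1(\V')$, contradicting that it left $A_1(\V')$.

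For part~(b) I would argue that reversing a $B_i(\V)$-edge $\edge{\op{1}{},\opw{2}{}}$, i.e.\ placing $\opw{2}{}\DRO{V'_i}\op{1}{}$, injects into $SWO(\V')$ precisely the pairs assembled by $C_i(\V,\op{1}{},\opw{2}{})$, so that the cycle promised by Definition~\ref{definition DRO Bi} inside $A_m(\V)\cupdot C_i(\V,\op{1}{},\opw{2}{})$ reappears as a cycle inside $A_m(\V')$, contradicting that $A_m(\V')$ is a partial order. Observations~\ref{observation Ci minimal write}--\ref{observation Ci SWO} are the tools: Observation~\ref{observation Ci SWO} certifies that every $C_i$-edge has its target $SWO(\V)$-above a fixed write, which pins the induction's direction, and Observations~\ref{observation DRO Ai same as SWO} and~\ref{observation Ci sub SWO} convert the $A_{i'}$-chains threaded through $C_i$ into genuine $SWO(\V')$ membership.

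The hard part --- and the reason this proof is ``significantly more complicated'' than its Model~1 analogue --- is transferring that $\V$-cycle into the replay $\V'$. The edges of $A_m(\V)$ appearing in the cycle split into $PO$-edges (free), $SWO_m(\V)$-edges (preserved by part~(a)), recorded $DRO$-edges (preserved by replay-validity), and un-recorded $B_m(\V)$-edges (preserved only by part~(b) for process~$m$); I would also have to confirm that the reversal really does turn the $C_i$-edges into $SWO(\V')$-edges rather than leaving them merely as $SWO(\V)$-edges. Assembling all of these at once is the circular-looking dependency that the induction must break, so the delicate bookkeeping is to verify that every appeal to (a) or (b) arising while building the cycle refers only to writes strictly below the current extremal write in the $SWO(\V)$ order, guaranteeing that the induction is well-founded.
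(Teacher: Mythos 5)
Your diagnosis of the new difficulty is correct --- under Definition~\ref{definition DRO Bi} the source of a $B_1(\V)$-edge need not lie on process~$1$, so the Model-1 trick of absorbing the edge into $PO$ below $\opw{2}{1}$ is unavailable --- but your resolution has a genuine gap: the claim that ``(a) and (b) can no longer be proved in sequence'' is false, and the joint induction you substitute for the sequential structure is not well-founded. The paper proves part~(a) entirely self-contained; the $B_1(\V)$ case never appeals to part~(b). Concretely, if $\edge{\op{\rho,j}{},\op{\rho,j+1}{}} \in B_1(\V)$, then $C_1(\V,\op{\rho,j}{},\op{\rho,j+1}{})$ is non-empty, so Observation~\ref{observation Ci minimal write} produces a write $\opw{\operatorname{min}}{1} \in (\operatorname{w},1,*,*)$ with $\op{\rho,j}{} \le_{A_1(\V)} \opw{\operatorname{min}}{1}$, and one case-splits on its $PO$ position relative to the minimal bad write $\opw{2}{1}$: if $\opw{2}{1} \le_{PO} \opw{\operatorname{min}}{1}$, then every edge of $C_1^1(\V,\op{\rho,j}{},\op{\rho,j+1}{})$ collapses into $A_1(\V)$, hence into $SWO(\V)$, and Observation~\ref{observation Ci sub SWO} shows the edge was never in $B_1(\V)$ at all; if $\opw{\operatorname{min}}{1} \PO \opw{2}{1}$, then $\opw{\operatorname{min}}{1} \SWO{\V} \opw{2}{1}$ is not bad by minimality, and routing around the offending edge gives $\opw{1}{} \le_{A_1(\V')} \op{\rho,j}{} \le_{A_1(\V')} \opw{\operatorname{min}}{1} \PO \opw{2}{1}$, contradicting $\edge{\opw{1}{},\opw{2}{1}} \notin SWO(\V')$. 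Missing this direct argument is what forces you into the joint induction in the first place.

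That joint induction cannot be repaired as stated, because the two parts inherently recurse in opposite directions along $SWO(\V)$. Your invariant requires every appeal made while establishing~(b) to concern writes strictly \emph{below} the current extremal write, but your own tools point upward: Observation~\ref{observation Ci SWO} places the targets of all $C_i(\V,\opw{1}{i},\opw{2}{})$-edges $SWO(\V)$-\emph{above} $\opw{1}{i}$, so the cycle you must transfer into $A_m(\V')$ lives above $\opw{1}{i}$; moving its $SWO_m(\V)$-edges into $A_m(\V')$ needs part~(a) for those higher writes, and moving its unrecorded $B_m(\V)$-edges needs part~(b) at positions that are not controlled from below. This is exactly why the paper's two arguments choose counterexamples in opposite senses: (a) takes an $SWO(\V)$-\emph{minimal} bad write, while (b) takes a bad write pair whose first component is $SWO(\V)$-\emph{maximal} (and whose second is $V_i$-minimal), because the recursion inside~(b) (e.g., Claim~\ref{C_i sub SWO}, Case~4(ii)) manufactures new bad pairs strictly \emph{above} the current one. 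A single-direction well-founded induction cannot host both; with your scheme the chain ``(a) at $\opw{2}{1}$ $\rightarrow$ (b) at an edge below $\opw{2}{1}$ $\rightarrow$ (a) and (b) at writes above $\opw{2}{1}$'' never terminates. The correct structure is the paper's: close the $B_1(\V)$ case of~(a) internally as above, then prove~(b) afterwards, invoking~(a) at full strength wherever needed.
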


\begin{proof_of}{Lemma \ref{lemma replay preserves SWO and B_i}(a)}
	Consider any arbitrary set of views $\V' = \set{ V'_i }_{i \in P}$ that certify a strongly causal consistent replay to be valid for $\R$.
	We will call a write operation $\opw{}{}$ \emph{bad} if there exists a write operation ${\opw{}{}}'$ such that $\edge{ {\opw{}{}}', \opw{}{} } \in SWO(\V)$ but $\edge{ {\opw{}{}}', \opw{}{} } \not \in SWO(\V)$.
	Recall from Definition \ref{definition:strongWriteOrder} that $SWO(\V)$ orders only write operations and is a partial order for strongly causal consistent executions.
	Consider any bad write operation, WLOG executed on process $1$, $\opw{2}{1} \in (\operatorname{w}, 1, *, *)$, which is minimal with respect to $SWO(\V)$; i.e. for every write operation ${\opw{}{}}' \SWO{ {\V} } \opw{2}{1}$, we have that ${\opw{}{}}'$ is not bad.
	We proceed via contradiction.
	
	Since $\opw{2}{1}$ is a bad write operation, so there exists a write operation $\opw{1}{}$ such that $\edge{ \opw{1}{}, \opw{2}{1} } \in SWO(\V)$ and $\edge{ \opw{1}{}, \opw{2}{1} } \not \in SWO(\V')$.
	Consider a path $\rho$ from $\opw{1}{}$ to $\opw{2}{1}$ in $\hat{A}_1(\V)$ (such a path must exist since $\opw{1}{} \SWO{ {\V} } \opw{2}{1} \Rightarrow \opw{1}{} <_{A_1(\V)} \opw{2}{1}$ by Observation \ref{observation DRO Ai same as SWO}) given by $\opw{1}{} = \op{\rho, 0}{} \lessdot_{A_1(\V)} \op{\rho, 1}{} \lessdot_{A_1(\V)} \op{\rho, 2}{} \lessdot_{A_1(\V)} \dots \lessdot_{A_1(\V)} \op{\rho, k}{} = \opw{2}{1}$.
	Note that each operation in the path is in the view ${V_1}$ and hence in $V_1'$.
	If $\edge{ \op{\rho, j}{}, \op{\rho, j+1}{} } \in {A_1(\V')}$ for every $j \in \sbr{0, k-1}$, then $\edge{ \opw{1}{}, \opw{2}{1} } \in A_1(\V')$ and so $\edge{ \opw{1}{}, \opw{2}{1} } \in SWO(\V')$ by Observation \ref{observation DRO Ai same as SWO} which is a contradiction.
	So there exists a $j \in \sbr{0, k-1}$ such that $\edge{ \op{\rho, j}{}, \op{\rho, j+1}{} } \not \in {A_1(\V')}$.
	
	Consider the smallest $j \in \sbr{0, k-1}$ such that $\edge{ \op{\rho, j}{}, \op{\rho, j+1}{} } \not \in {A_1(\V')}$.
	Therefore $\opw{1}{} \le_{A_1(\V')} \op{\rho, j}{}$.
	There are $4$ cases to consider.
	\begin{enumerate}
		[label=\underline{Case \arabic*:},wide,topsep=0pt,labelindent=0pt]
		\item $\edge{ \op{\rho, j}{}, \op{\rho, j+1}{} } \in \hat{A}_1(\V) 
		\setminus \del{ SWO(\V) \cupdot PO \cupdot B_1(\V) }$.
		Then $\edge{ \op{\rho, j}{}, \op{\rho, j+1}{} } \in R_1$ and $V'_1$ respects $R_1$ since $\V'$ is a replay of $\R$.
		Thus $\edge{ \op{\rho, j}{}, \op{\rho, j+1}{} } \in DRO(V'_1)$ and so $\edge{ \op{\rho, j}{}, \op{\rho, j+1}{} } \in {A_1(\V')}$, a contradiction.
		
		\item $\edge{ \op{\rho, j}{}, \op{\rho, j+1}{} } \in PO$.
		Then $V'_1$ respects $PO$ due to consistency and $PO$ is independent of executions.
		Thus $\edge{ \op{\rho, j}{}, \op{\rho, j+1}{} } \in {A_1(\V')}$, a contradiction.
		
		\item $\edge{ \op{\rho, j}{}, \op{\rho, j+1}{} } \in SWO_1(\V)$.
		Then both $\op{\rho, j}{}$ and $\op{\rho, j+1}{}$ must be write operations.
		There are now two cases to consider.
		\begin{enumerate}
			[label=\underline{Case \roman*:},topsep=0pt,labelwidth=\widthof{Case 
			ii:a},leftmargin=!,itemindent=\widthof{aaaaaaa}]
			\item
			$j < k-1$.
			Then $\op{\rho, j+1}{} \ne \opw{2}{1}$.
			Observe that $\edge{ \op{\rho, j+1}{}, \opw{2}{1} } \in A_1(\V)$ and so $\edge{ \op{\rho, j+1}{}, \opw{2}{1} } \in SWO(\V)$.
			Therefore, by the minimality of $\opw{2}{1}$, we have that $\op{\rho, j+1}{}$ is not a bad write.
			Thus $\edge{ \op{\rho, j}{}, \op{\rho, j+1}{} } \in SWO(\V')$ and so $\edge{ \op{\rho, j}{}, \op{\rho, j+1}{} } \in {A_1(\V')}$, a contradiction.
			
			\item
			$j = k - 1$.
			So $\op{\rho, j+1}{} = \opw{2}{1}$ and $\edge{ \op{\rho, j}{}, \opw{2}{1} } \in  SWO_{1}({\V})$.
			From Definition \ref{definition:strongWriteOrder} we have that $\opw{2}{1}$ is not executed on process 1, a contradiction to the initial assumption that $\opw{2}{1} \in (*, 1, *, *)$.
		\end{enumerate}
		
		\begin{figure}
			\centering
			{
				\begin{tikzpicture}
				\node at (1, 3) {\LARGE $A_1(\V)$};

				\node (w1) at (0, 0) {$\opw{1}{}$};
				\node (opj) at (3, 1) {$\op{\rho, j}{}$};
				\node (opj+1) at (5, 2) {$\op{\rho, j+1}{}$};
				\node (w21) at (8, 3) {$\opw{2}{1}$};

				\draw[-{Latex}, thick] (w1) to[out=0, in=180] (opj);
				\draw[-{Latex[open]}, thick] (opj) to node[below right] {$SWO(\V)$} (opj+1);
				\draw[-{Latex}, thick] (opj+1) to[out=0, in=180] (w21);
				\end{tikzpicture}
			}
			\caption{Proof of Lemma \ref{lemma replay preserves SWO and B_i}(a) Case 3}
		\end{figure}
		
		\item $\edge{ \op{\rho, j}{}, \op{\rho, j+1}{} } \in B_1(\V)$.
		Then by Definition \ref{definition DRO Bi} we have that $\edge{ \op{\rho, j}{}, \op{\rho, j+1}{} } \in DRO(V_1)$ and that $C_1( \V, \op{\rho, j}{}, \op{\rho, j+1}{} )$ is non-empty.
		Thus, from Observation \ref{observation Ci minimal write}, there exists $\opw{\operatorname{min}}{1} \in (\operatorname{w}, 1, *, *)$ such that $C_1^1( \V, \op{\rho, j}{}, \op{\rho, j+1}{} ) = C_1^1( \V, \opw{\operatorname{min}}{1}, \op{\rho, j+1}{} )$.
		Since $\opw{2}{1} \in (\operatorname{w}, 1, *, *)$, therefore either $\opw{2}{1} \le_{PO} \opw{\operatorname{min}}{1}$ or $\opw{\operatorname{min}}{1} \PO \opw{2}{1}$.
		We consider both cases.
		\begin{enumerate}
			[label=\underline{Case \roman*:},topsep=0pt,labelwidth=\widthof{Case 
			ii:a},leftmargin=!,itemindent=\widthof{aaaaaaa}]
			\item
			$\opw{2}{1} \le_{PO} \opw{\operatorname{min}}{1}$.
			We show that $C^1_1( \V, \op{\rho, j}{}, \op{\rho, j+1}{} ) \subseteq SWO(\V)$ so that, by Observation \ref{observation Ci sub SWO}, $\edge{ \op{\rho, j}{}, \op{\rho, j+1}{} } \not \in B_1(\V)$ which is a contradiction.
			Consider any two writes $\opw{3}{} \in (\operatorname{w}, *, *, *)$ and $\opw{4}{1} \in (\operatorname{w}, 1, *, *)$ such that $\edge{\opw{3}{}, \opw{4}{1}} \in C_1^1( \V, \op{\rho, j}{}, \op{\rho, j+1}{} )$.
			Then
			\begin{itemize}
				\item $\opw{3}{} \le_{A_1(\V)} \op{\rho, j+1}{}$, by Definition \ref{definition DRO Ci},
				\item $\op{\rho, j+1}{} \le_{A_1(\V)} \opw{2}{1}$, since $\op{\rho, j+1}{}$ is on a $\opw{1}{} \opw{2}{1}$-path in $A_1(\V)$,
				\item $\opw{2}{1} \le_{PO} \opw{\operatorname{min}}{1} \le_{PO} \opw{4}{1}$, by assumption and the minimality of $\opw{\operatorname{min}}{1}$, and
				\item $\opw{3}{} \ne \opw{4}{1}$, by Definition \ref{definition DRO Ci}.
			\end{itemize}
			Therefore, we get that $\edge{ \opw{3}{}, \opw{4}{1} } \in {A_1(\V)}$.
			So $\edge{ \opw{3}{}, \opw{4}{1} } \in {SWO(\V)}$, as required.
			
			\begin{figure}
				\centering
				{
					\begin{tikzpicture}
					\node at (1, 3) {\LARGE $A_1(\V)$};
					
					\node (w1) at (0, 0) {$\opw{1}{}$};
					\node (opj) at (3, 1) {$\op{\rho, j}{}$};
					\node (opj+1) at (5, 2) {$\op{\rho, j+1}{}$};
					\node (w21) at (8, 3) {$\opw{2}{1}$};
					\node (wmin) at (10, 3) {$\opw{\operatorname{min}}{1}$};
					\node (w41) at (12, 3) {$\opw{4}{1}$};
					\node (w3) at (0, 2) {$\opw{3}{}$};
					
					\draw[-{Latex}, thick] (w1) to[out=0, in=180] (opj);
					\draw[-{Latex[open]}, thick] (opj) to node[above left] {$B_1(\V)$} (opj+1);
					\draw[-{Latex}, thick] (opj) to[out=330, in=220] (wmin);
					\draw[-{Latex}, thick] (w21) to[out=10, in=170] node[above] {$PO$} (wmin);
					\draw[-{Latex}, thick] (opj+1) to[out=0, in=200] (w21);
					\draw[-{Latex}, thick] (wmin) to node[above] {$PO$} (w41);
					\draw[-{Latex}, thick] (w3) to[out=0, in=150] (opj+1);
					\end{tikzpicture}
				}
				\caption{Proof of Lemma \ref{lemma replay preserves SWO and B_i}(a) Case 4 (i)}
			\end{figure}
			
			\item
			$\opw{\operatorname{min}}{1} \PO \opw{2}{1}$.
			Observe that $\opw{\operatorname{min}}{1} <_{A_1(\V)} \opw{2}{1}$ and therefore $\opw{\operatorname{min}}{1} \SWO{\V} \opw{2}{1}$.
			By the minimality of $\opw{2}{1}$ being a bad write, we have that $\opw{\operatorname{min}}{1}$ is not a bad write.
			Now $\op{\rho, j}{} \le_{A_1(\V)} \opw{\operatorname{min}}{1}$, by Observation \ref{observation Ci minimal write}.
			$\op{\rho, j}{}$ is either a read or a write operation.
			If $\op{\rho, j}{}$ is a read operation, then $\op{\rho, j}{} \in (\operatorname{r}, 1, *, *)$ and so $\edge{ \op{\rho, j}{}, \opw{\operatorname{min}}{1} } \in {PO}$.
			If $\op{\rho, j}{}$ is a write operation, then $\op{\rho, j}{} \in (\operatorname{w}, *, *, *)$ and so $\op{\rho, j}{} \le_{SWO(\V)} \opw{\operatorname{min}}{1}$.
			Since $\opw{\operatorname{min}}{1}$ is not a bad write, thus $\op{\rho, j}{} \le_{SWO(\V')} \opw{\operatorname{min}}{1}$.
			Therefore, in either case $\op{\rho, j}{} \le_{A_1(\V')} \opw{\operatorname{min}}{1}$.
			Furthermore, by the choice of $j$, we have that $\opw{1}{} \le_{A_1(\V')} o_j$.
			So we get that $\opw{1}{} \le_{A_1(\V')} \op{\rho, j}{} \le_{A_1(\V')} \opw{\operatorname{min}}{1} \PO \opw{2}{1}$.
			This implies that $\edge{ \opw{1}{}, \opw{2}{1} } \in {A_1(\V')}$ and so $\edge{ \opw{1}{}, \opw{2}{1} } \in SWO(\V')$.
			This contradicts the initial assumption that $\edge{ \opw{1}{}, \opw{2}{1} } \not \in SWO(\V')$.
			
			\begin{figure}
				\centering
				{
					\begin{tikzpicture}
					\node at (1, 3) {\LARGE $A_1(\V)$};
					
					\node (w1) at (0, 0) {$\opw{1}{}$};
					\node (opj) at (3, 1) {$\op{\rho, j}{}$};
					\node (opj+1) at (5, 2) {$\op{\rho, j+1}{}$};
					\node (wmin) at (5, 3) {$\opw{\operatorname{min}}{1}$};
					\node (w21) at (8, 3) {$\opw{2}{1}$};
					
					\draw[-{Latex}, thick] (w1) to[out=0, in=180] (opj);
					\draw[-{Latex[open]}, thick] (opj) to node[below right] {$B_1(\V)$} (opj+1);
					\draw[-{Latex}, thick] (opj) to[out=90, in=180] (wmin);
					\draw[-{Latex[open]}, thick] (wmin) to[out=10, in=170] node[above] {$PO$} (w21);
					\draw[-{Latex}, thick] (opj+1) to[out=0, in=200] (w21);
					\end{tikzpicture}
				}
				\caption{Proof of Lemma \ref{lemma replay preserves SWO and B_i}(a) Case 4 (ii)}
			\end{figure}
		\end{enumerate}
	\end{enumerate}
	In all cases, we get the desired contradiction.
\end{proof_of}
\\

\begin{proof_of}{Lemma \ref{lemma replay preserves SWO and B_i}(b)}
	Consider any arbitrary set of views $\V' = \set{ V'_i }_{i \in P}$ that certify a strongly causal consistent replay to be valid for $\R$.
	We will call a pair of write operations $\edge{\opw{1}{i}, \opw{2}{}}$, $\opw{1}{i} \in (\operatorname{w}, i, *, *)$ and $\opw{2}{} \in (\operatorname{w}, *, *, *)$, \emph{bad} if
	\begin{enumerate}[topsep=0pt]
		\item
		$\edge{\opw{2}{}, \opw{1}{i}} \in A_i(\V')$, and
		
		\item
		there exists a process $m \in P$ such that either
		\begin{enumerate}[topsep=0pt]
			\item $m \ne i$ and ${A_m(\V)} \cupdot C_i( \V, \opw{1}{i}, \opw{2}{} )$ has a cycle, or
			\item $m = i$ and $\del[1]{A_m(\V) \setminus \set[0]{ \edge{\opw{1}{i}, \opw{2}{}} } } \cupdot C_i( \V, \opw{1}{i}, \opw{2}{} )$ has a cycle.
		\end{enumerate}
	\end{enumerate}
	First note that if there is no bad write pair, then this implies the result as follows.
	We show the contrapositive that if $V_i' \not \supseteq B_i(\V)$, then there exists a bad write pair.
	Suppose there exist two distinct operations $\op{}{} \in (*, *, *, *)$ and $w' \in (\operatorname{w}, *, *, *)$ such that $\edge{ \op{}{}, w' } \in {B_i(\V)}$ but $\edge{ w', \op{}{} } \in V'_i$, for some process $i \in P$.
	Recall from Definition \ref{definition DRO Bi} that $B_i(\V') \subseteq DRO(V'_i)$ .
	Therefore $\edge{ w', \op{}{} } \in DRO(V'_i)$.
	Then, from Observation \ref{observation Ci minimal write}, there exists a write operation $\opw{\operatorname{min}}{i} \in (\operatorname{w}, i, *, *)$ such that $C_i( \V, \op{}{}, w' ) = C_i( \V, \opw{\operatorname{min}}{i}, w' )$ and $\op{}{} \le_{A_i(\V)} \opw{\operatorname{min}}{i}$.
	We show that $\edge{\opw{\operatorname{min}}{i}, w'}$ is a bad write pair.
	Condition 2 follows from the fact that $C_i( \V, \op{}{}, w' ) = C_i( \V, \opw{\operatorname{min}}{i}, w' )$ and $\edge{ \op{}{}, w' } \in B_i(\V)$ by interchanging $C_i( \V, \op{}{}, w' )$ with $C_i( \V, \opw{\operatorname{min}}{i}, w' )$ in Definition \ref{definition DRO Bi}.
	So it is left to show that $\edge{w', \opw{\operatorname{min}}{i}} \in A_i(\V')$.
	If $\op{}{}$ is a read operation then $\op{}{} \in (\operatorname{r}, i, *, *)$ and so $\op{}{} \le_{PO} \opw{\operatorname{min}}{i}$.
	If $\op{}{}$ is a write operation then $\op{}{} \le_{SWO(\V)} \opw{\operatorname{min}}{i}$ (recall that $\op{}{} \le_{A_i(\V)} \opw{\operatorname{min}}{i}$) and, by Lemma \ref{lemma replay preserves SWO and B_i}(a), $\op{}{} \le_{SWO(\V')} \opw{\operatorname{min}}{i}$.
	In either case $\op{}{} \le_{A_i(\V')} \opw{\operatorname{min}}{i}$.
	Therefore $w' \DRO{V'_i} \op{}{} \le_{A_i(\V')} \opw{\operatorname{min}}{i}$ and so $\edge{w', \opw{\operatorname{min}}{i}} \in A_i(\V')$ by Definition \ref{definition DRO Ai}.
	
	We now proceed via contradiction to show that there are no bad write pairs.
	Suppose that there exists at least one bad write pair.
	Recall from Definition \ref{definition:strongWriteOrder} that $SWO(\V)$ orders only write operations and is a partial order for strongly causal consistent executions.
	Consider any bad write pair $\edge{\opw{1}{i}, \opw{2}{}}$ such that $\opw{1}{i} \in (\operatorname{w}, i, *, *)$ is maximal with respect to $SWO(\V)$; i.e. for every write operation ${\opw{}{}}' >_{SWO(\V)} \opw{1}{i}$, we have that there is no $w''$ such that $\edge{w', w''}$ is a bad write pair.
	We also assume that $\opw{2}{}$ is minimal with respect to $V_i$; i.e. for every write operation ${\opw{}{}}' <_{V_i} \opw{2}{}$ we have that $\edge{\opw{1}{i}, {\opw{}{}}'}$ is not a bad write pair.
	Since $\edge{\opw{1}{i}, \opw{2}{}} \in B_i(\V)$, by Definition \ref{definition DRO Bi}, there exists a process, WLOG process $1$, such that either
	\begin{enumerate}[topsep=0pt]
		\item $i \ne 1$ and ${A_1(\V)} \cupdot C_i( \V, \opw{1}{i}, \opw{2}{} )$ has a cycle, or
		\item $i = 1$ and $\del[1]{A_1(\V) \setminus \set[0]{ \edge{\opw{1}{i}, \opw{2}{}} } } \cupdot C_i( \V, \opw{1}{i}, \opw{2}{} )$ has a cycle.
	\end{enumerate}
	We will show that
	\begin{enumerate}[topsep=0pt]
		\item $C_i( \V, \opw{1}{i}, \opw{2}{} ) \subseteq SWO(\V')$, and
		\item ${A_{1}(\V')} \cupdot C_i( \V, \opw{1}{i}, \opw{2}{} )$ has a cycle.
	\end{enumerate}
	This implies that $V'_1 \supseteq A_1(\V')$ does not respect $C_i( \V, \opw{1}{i}, \opw{2}{} ) \subseteq SWO(\V')$, hence giving us the desired contradiction.

	\begin{claim} \label{C_i sub SWO}
		$C_i( \V, \opw{1}{i}, \opw{2}{} ) \subseteq SWO(\V')$.
	\end{claim}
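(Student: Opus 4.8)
The plan is to prove $C_i(\V, \opw{1}{i}, \opw{2}{}) \subseteq SWO(\V')$ by induction on the level $k$ of the inductive Definition~\ref{definition DRO Ci}, establishing $C_i^k(\V, \opw{1}{i}, \opw{2}{}) \subseteq SWO(\V')$ for every $k \ge 1$. The workhorse throughout would be Observation~\ref{observation DRO Ai same as SWO} applied to the replay views $\V'$: whenever $\opw{a}{}$ is a write and $\opw{b}{j}$ is a write on process $j$, one has $\opw{a}{} \SWO{\V'} \opw{b}{j}$ exactly when $\opw{a}{} \le_{A_j(\V')} \opw{b}{j}$, and in particular $SWO(\V') \subseteq A_j(\V')$ for every $j$. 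Hence, since every edge $\edge{\opw{3}{}, \opw{4}{i'}} \in C_i^k$ has its head $\opw{4}{i'}$ on process $i'$, it suffices to show $\opw{3}{} \le_{A_{i'}(\V')} \opw{4}{i'}$. I would also use Lemma~\ref{lemma replay preserves SWO and B_i}(a) (already proved), which gives $SWO(\V) \subseteq SWO(\V')$, so that any $SWO(\V)$-ordering transports directly into $A_{i'}(\V')$.

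For the base case $k = 1$, an edge $\edge{\opw{3}{}, \opw{4}{i}} \in C_i^1$ satisfies $\opw{1}{i} \le_{A_i(\V)} \opw{4}{i}$ and $\opw{3}{} \le_{A_i(\V)} \opw{2}{}$, with $\opw{3}{} \ne \opw{4}{i}$. The idea is to reroute this order through the flipped bad pair. Condition~1 of a bad write pair gives $\opw{2}{} \le_{A_i(\V')} \opw{1}{i}$; and because $\opw{1}{i}, \opw{4}{i}$ are both on process $i$, Observation~\ref{observation DRO Ai same as SWO} converts $\opw{1}{i} \le_{A_i(\V)} \opw{4}{i}$ into $\opw{1}{i} \SWO{\V} \opw{4}{i}$, which Lemma~\ref{lemma replay preserves SWO and B_i}(a) lifts to $\opw{1}{i} \le_{A_i(\V')} \opw{4}{i}$. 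It remains to transport $\opw{3}{} \le_{A_i(\V)} \opw{2}{}$ into $\opw{3}{} \le_{A_i(\V')} \opw{2}{}$; here I would use the minimality of $\opw{2}{}$ in $V_i$ among bad-pair partners of $\opw{1}{i}$ to rule out any reversal. Chaining $\opw{3}{} \le_{A_i(\V')} \opw{2}{} \le_{A_i(\V')} \opw{1}{i} \le_{A_i(\V')} \opw{4}{i}$ then gives $\edge{\opw{3}{}, \opw{4}{i}} \in SWO(\V')$.

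For the inductive step $k > 1$, Definition~\ref{definition DRO Ci} supplies writes $\opw{5}{}, \opw{6}{}$ with $\edge{\opw{5}{}, \opw{6}{}} \in C_i^{k-1}$, $\opw{3}{} \le_{A_{i'}(\V) \cup C_i^{k-1}} \opw{5}{}$, and $\opw{6}{} \le_{A_{i'}(\V)} \opw{4}{i'}$. I would assemble $\opw{3}{} \le_{A_{i'}(\V')} \opw{4}{i'}$ from three pieces. First, $\edge{\opw{5}{}, \opw{6}{}} \in C_i^{k-1} \subseteq SWO(\V')$ by the induction hypothesis, and $SWO(\V') \subseteq A_{i'}(\V')$. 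Second, since $\opw{4}{i'}$ is a write on process $i'$, Observation~\ref{observation DRO Ai same as SWO} turns $\opw{6}{} \le_{A_{i'}(\V)} \opw{4}{i'}$ into $\opw{6}{} \SWO{\V} \opw{4}{i'}$, lifted to $A_{i'}(\V')$ by Lemma~\ref{lemma replay preserves SWO and B_i}(a). Third, for the prefix $\opw{3}{} \le_{A_{i'}(\V) \cup C_i^{k-1}} \opw{5}{}$, I would split the witnessing path at the endpoints of its $C_i^{k-1}$ edges (which are writes); the $C_i^{k-1}$ pieces lift by the induction hypothesis, leaving maximal runs consisting solely of $A_{i'}(\V)$ edges, each joining a write to a write, to be transported into $A_{i'}(\V')$.

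The main obstacle is precisely the transport of these $A_{i'}(\V)$ runs (and, in the base case, of $\opw{3}{} \le_{A_i(\V)} \opw{2}{}$): an $A_{i'}(\V)$-ordering between two writes is \emph{not} automatically an $SWO(\V)$-edge when its head is a write of a process other than $i'$, and the $DRO(V_{i'})$ edges underlying it could in principle be $B_{i'}(\V)$ edges reversed in $\V'$. To close this gap I would invoke Observation~\ref{observation Ci SWO}, which forces the writes arising along these runs to lie $SWO(\V)$-above $\opw{1}{i}$, together with the maximality of the chosen bad pair $\edge{\opw{1}{i}, \opw{2}{}}$ with respect to $SWO(\V)$: a write strictly $SWO(\V)$-above $\opw{1}{i}$ heads no bad pair, so the orderings feeding into it cannot be reversed in $\V'$. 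Turning this ``no reversal above $\opw{1}{i}$'' principle into a clean statement, while accounting for intermediate reads and foreign-process writes on each run, is the delicate heart of the argument; once it is in place, the remaining chaining is routine via Observation~\ref{observation DRO Ai same as SWO} and Lemma~\ref{lemma replay preserves SWO and B_i}(a).
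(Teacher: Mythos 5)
Your scaffolding points at the right ingredients (induction over the levels of $C_i$, Observation \ref{observation DRO Ai same as SWO}, Lemma \ref{lemma replay preserves SWO and B_i}(a), Observation \ref{observation Ci SWO}, maximality of the bad pair), and your chaining through the reversed pair $\opw{2}{} \le_{A_i(\V')} \opw{1}{i}$ in the base case is legitimate. But the step you defer --- transporting the $A_{i'}(\V)$ runs into $A_{i'}(\V')$, which you yourself call ``the delicate heart of the argument'' --- is the entire content of the claim, and the tools you name for it do not close it. In the base case, minimality of $\opw{2}{}$ with respect to $V_i$ cannot transport $\opw{3}{} \le_{A_i(\V)} \opw{2}{}$: if a $B_i(\V)$ edge $\edge{ \op{\rho, j}{}, \op{\rho, j+1}{} }$ on that path is reversed in $\V'$, the bad pair this produces (via Observation \ref{observation Ci minimal write}) is $\edge{ \opw{\operatorname{min}}{i}, \op{\rho, j+1}{} }$, where $\opw{\operatorname{min}}{i}$ need not equal $\opw{1}{i}$ and $\op{\rho, j+1}{}$ need not precede $\opw{2}{}$ in $V_i$, so minimality of $\opw{2}{}$ is never contradicted (the paper uses that minimality only in the later Claim \ref{claim A_1 C_i cycle}). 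Likewise, your ``no reversal above $\opw{1}{i}$'' principle is not true as stated: maximality of $\opw{1}{i}$ only forbids \emph{bad pairs} headed strictly $SWO(\V)$-above $\opw{1}{i}$, and by Definition \ref{definition DRO Bi} a bad pair requires the cycle condition in addition to the reversal, so a reversed ordering by itself contradicts nothing.

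What actually closes the gap in the paper --- and is missing from your sketch --- is a dichotomy keyed to the head $\opw{4}{i'}$ of the counterexample edge. When a $B_{i'}(\V)$ edge $\edge{ \op{\rho, j}{}, \op{\rho, j+1}{} }$ on the run is reversed, take $\opw{\operatorname{min}}{i'}$ from Observation \ref{observation Ci minimal write} and compare it with $\opw{4}{i'}$ under $PO$. If $\opw{\operatorname{min}}{i'} \le_{PO} \opw{4}{i'}$, then $\opw{3}{} \le_{A_{i'}(\V)} \op{\rho, j}{} \le_{A_{i'}(\V)} \opw{\operatorname{min}}{i'} \le_{PO} \opw{4}{i'}$ places the \emph{whole} counterexample edge in $SWO(\V)$, hence in $SWO(\V')$ by part (a) of the lemma --- contradiction with it being a counterexample; note that in this case the run is never transported at all, so ``run transport'' is not a modular statement one can prove in isolation, contrary to your plan. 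If instead $\opw{4}{i'} \PO \opw{\operatorname{min}}{i'}$, then the reversal does yield a bad pair $\edge{ \opw{\operatorname{min}}{i'}, \op{\rho, j+1}{} }$ (its cycle condition transfers because $C_{i'}( \V, \op{\rho, j}{}, \op{\rho, j+1}{} ) = C_{i'}( \V, \opw{\operatorname{min}}{i'}, \op{\rho, j+1}{} )$), and Observation \ref{observation Ci SWO} gives $\opw{1}{i} \le_{SWO(\V)} \opw{4}{i'} \PO \opw{\operatorname{min}}{i'}$, so this bad pair is headed strictly above $\opw{1}{i}$, contradicting maximality. The paper additionally uses a WLOG step --- replacing the counterexample by $\edge{ \opw{\psi, j}{}, \opw{4}{i'} }$ for a later vertex of the witnessing path --- so that only the first $A_{i'}(\V)$ run ever needs this treatment; without it your run-by-run induction requires further bookkeeping. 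Since this case analysis is left open in your proposal, the proof has a genuine gap.
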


	\begin{proof_claim}
		Suppose, for the sake of contradiction, that $C_i( \V, \opw{1}{i}, \opw{2}{} ) \not \subseteq SWO(\V')$.
		Consider the smallest $\ell \ge 1$ such that $C_i^\ell( \V, \opw{1}{i}, \opw{2}{} ) \not \subseteq SWO(\V')$.
		Consider any two writes $\opw{3}{} \in (\operatorname{w}, *, *, *)$ and $\opw{4}{i'} \in (\operatorname{w}, i', *, *)$ such that $\edge{\opw{3}{}, \opw{4}{i'}} \in C_i^\ell( \V, \opw{1}{i}, \opw{2}{} )$ but $\edge{\opw{3}{}, \opw{4}{i'}} \not \in SWO(\V')$.
		Using $C_i^0( \V, \opw{1}{i}, \opw{2}{} ) = \set[0]{\edge{\opw{2}{}, \opw{1}{i}}}$, by Definition \ref{definition DRO Ci}, we have that there exists a $\opw{3}{} \opw{4}{i'}$-path $\psi$ in $A_{i'}(\V) \cupdot C_i^{\ell-1}( \V, \opw{1}{i}, \opw{2}{} )$ given by $\opw{3}{} = \opw{\psi, 0}{} \le_{A_{i'}(\V)} \opw{\psi, 1}{} <_{C_i^{\ell-1}( \V, \opw{1}{i}, \opw{2}{} )} \opw{\psi, 2}{} \le_{A_{i'}(\V)} \opw{\psi, 3}{} \dots <_{C_i^{\ell-1}( \V, \opw{1}{i}, \opw{2}{} )} \opw{\psi, k-1}{} \le_{A_{i'}(\V)} \opw{\psi, k}{} = \opw{4}{i'}$.
		By the choice of $\ell$, we have that $C_i^{\ell-1}( \V, \opw{1}{i}, \opw{2}{} ) \subseteq SWO(\V')$.
		WLOG, we can assume that $\opw{\psi, j}{} \le_{SWO(\V')} \opw{4}{i'}$ for every $j \in \sbr{1, k-1}$, since otherwise we can consider $\edge{ \opw{\psi, j}{}, \opw{4}{i'} } \in C_i^\ell( \V, \opw{1}{i}, \opw{2}{} )$ instead of $\edge{ \opw{3}{}, \opw{4}{i'} }$.
		Therefore it is sufficient to show that $\opw{3}{} = \opw{\psi, 0}{} \le_{A_{i'}(\V')} \opw{\psi, 1}{}$, since this implies a $\opw{3}{} \opw{4}{i'}$-path in $A_{i'}(\V') \cupdot SWO(\V')$ which is a contradiction since $\edge{\opw{3}{}, \opw{4}{i'}} \not \in SWO(\V')$.
		
		If $\opw{3}{} = \opw{\psi, 1}{}$, then we are done.
		So suppose $\opw{3}{} <_{A_{i'}(\V)} \opw{\psi, 1}{}$.
		Consider a path $\rho$ from $\opw{3}{}$ to $\opw{\psi, 1}{}$ in $\hat{A}_{i'}$ given by $\opw{3}{} = \op{\rho, 0}{} \lessdot_{A_{i'}(\V)} \op{\rho, 1}{} \lessdot_{A_{i'}(\V)} \op{\rho, 2}{} \lessdot_{A_{i'}(\V)} \dots \lessdot_{A_{i'}(\V)} \op{\rho, k'}{} = \opw{\psi, 1}{}$.
		Note that each operation in the path is in the view ${V_{i'}}$ and hence in $V'_{i'}$.
		If $\edge{ \op{\rho, j'}{}, \op{\rho, j'+1}{} } \in {A_{i'}(\V')}$ for every $j' \in \sbr{0, k'-1}$, then $\edge{ \opw{3}{}, \opw{\psi, 1}{} } \in SWO(\V')$ which is a contradiction.
		So there exists a $j' \in \sbr{0, k'-1}$ such that $\edge{ \op{\rho, j'}{}, \op{\rho, j'+1}{} } \not \in {A_{i'}(\V')}$.
		
		Consider any $j' \in \sbr{0, k'-1}$ such that $\edge{ \op{\rho, j'}{}, \op{\rho, j'+1}{} } \not \in {A_{i'}(\V')}$.
		There are $4$ cases to consider.
		\begin{enumerate}
			[label=\underline{Case \arabic*:},wide,topsep=0pt,labelindent=0pt]
			\item $\edge{ \op{\rho, j'}{}, \op{\rho, j'+1}{} } \in \hat{A}_{i'}(\V) 
			\setminus \del{ SWO(\V) \cupdot PO \cupdot B_{i'}(\V) }$.
			Then $\edge{ \op{\rho, j'}{}, \op{\rho, j'+1}{} } \in R_{i'}$ and $V'_{i'}$ respects $R_{i'}$ since $\V'$ is a replay of $\R$. Thus $\edge{ \op{\rho, j'}{}, \op{\rho, j'+1}{} } \in DRO(V'_{i'})$ and so $\edge{ \op{\rho, j'}{}, \op{\rho, j'+1}{} } \in A_{i'}(\V')$, a contradiction.
			
			\item $\edge{ \op{\rho, j'}{}, \op{\rho, j'+1}{} } \in PO$.
			Then $A_{i'}(\V')$ respects $PO$ due to consistency and $PO$ is independent of executions.
			Thus $\edge{ \op{\rho, j'}{}, \op{\rho, j'+1}{} } \in {A_{i'}(\V')}$, a contradiction.
			
			\item $\edge{ \op{\rho, j'}{}, \op{\rho, j'+1}{} } \in SWO_{i'}(\V)$.
			Then $A_{i'}(\V') \supseteq SWO(\V')$ by Definition \ref{definition DRO Ai} and $SWO(\V') \supseteq SWO(\V)$ by Lemma \ref{lemma replay preserves SWO and B_i}(a).
			Thus $\edge{ \op{\rho, j'}{}, \op{\rho, j'+1}{} } \in {A_{i'}(\V')}$, a contradiction.
			
			\item $\edge{ \op{\rho, j'}{}, \op{\rho, j'+1}{} } \in B_{i'}(\V)$.
			Then by Definition \ref{definition DRO Bi} we have that $\edge{ \op{\rho, j'}{}, \op{\rho, j'+1}{} } \in DRO(V_{i'})$ and that $C_{i'}( \V, \op{\rho, j'}{}, \op{\rho, j'+1}{} )$ is non-empty.
			Thus, from Observation \ref{observation Ci minimal write}, there exists $\opw{\operatorname{min}}{i'} \in (\operatorname{w}, i', *, *)$ such that $C_{i'}( \V, \op{\rho, j'}{}, \op{\rho, j'+1}{} ) = C_{i'}( \V, \opw{\operatorname{min}}{i'}, \op{\rho, j'+1}{} )$.
			Since $\opw{4}{i'} \in (\operatorname{w}, i', *, *)$, therefore either $\opw{\operatorname{min}}{i'} \le_{PO} \opw{4}{i'}$ or $\opw{4}{i'} \PO \opw{\operatorname{min}}{i'}$.
			We consider both cases.
			\begin{enumerate}
				[label=\underline{Case \roman*:},topsep=0pt,labelwidth=\widthof{Case 
				ii:a},leftmargin=!,itemindent=\widthof{aaaaaaa}]
				\item $\opw{\operatorname{min}}{i'} \le_{PO} \opw{4}{i'}$.
				Then $\opw{3}{} \le_{A_{i'}(\V)} \op{\rho, j'}{} \le_{A_{i'}(\V)} \opw{\operatorname{min}}{i'} \le_{PO} \opw{4}{i'}$.
				This implies that $\edge{ \opw{3}{}, \opw{4}{i'} } \in SWO(\V)$, and therefore $\edge{ \opw{3}{}, \opw{4}{i'} } \in SWO(\V')$ by Lemma \ref{lemma replay preserves SWO and B_i}(a), a contradiction to the initial assumption that $\edge{ \opw{3}{}, \opw{4}{i'} } \not \in SWO(\V')$.
				
				\begin{figure}
					\centering
					{
						\begin{tikzpicture}
						\node at (1, 3) {\LARGE $A_{i'}(\V)$};
						
						\node (w3) at (0, 0) {$\opw{3}{}$};
						\node (opj) at (3, 1) {$\op{\rho, j'}{}$};
						\node (opj+1) at (5, 2) {$\op{\rho, j'+1}{}$};
						\node (wpsi1) at (8, 3) {$\opw{\psi, 1}{}$};
						\node (wmin) at (8, 5) {$\opw{\operatorname{min}}{i'}$};
						\node (w41) at (10, 5) {$\opw{4}{i'}$};
						
						\draw[-{Latex}, thick] (w3) to[out=0, in=180] (opj);
						\draw[-{Latex[open]}, thick] (opj) to node[below right] {$B_1(\V)$} (opj+1);
						\draw[-{Latex}, thick] (opj) to[out=45, in=180] (wmin);
						\draw[-{Latex}, thick] (opj+1) to[out=0, in=200] (wpsi1);
						\draw[-{Latex}, thick] (wmin) to node[above] {$PO$} (w41);
						\end{tikzpicture}
					}
					\caption{Proof of Claim \ref{C_i sub SWO} Case 4 (i)}
				\end{figure}
				
				\item $\opw{4}{i'} \PO \opw{\operatorname{min}}{i'}$.
				Now $\op{\rho, j'}{} \le_{A_{i'}(\V)} \opw{\operatorname{min}}{i'}$, by Observation \ref{observation Ci minimal write}.
				$\op{\rho, j'}{}$ is either a read or a write operation.
				If $\op{\rho, j'}{}$ is a read operation, then $\op{\rho, j'}{} \in (\operatorname{r}, i', *, *)$ and so $\edge{ \op{\rho, j'}{}, \opw{\operatorname{min}}{i'} } \in {PO}$.
				If $\op{\rho, j'}{}$ is a write operation, then $\op{\rho, j'}{} \in (\operatorname{w}, *, *, *)$ and so $\op{\rho, j'}{} \le_{SWO(\V)} \opw{\operatorname{min}}{i'}$ and by Lemma \ref{lemma replay preserves SWO and B_i}(a) $\op{\rho, j'}{} \le_{SWO(\V')} \opw{\operatorname{min}}{i'}$.
				Therefore, in either case $\op{\rho, j'}{} \le_{A_{i'}(\V')} \opw{\operatorname{min}}{i'}$.
				Now since $\edge{ \op{\rho, j'}{}, \op{\rho, j'+1}{} } \in DRO(V_{i'})$ and $\edge{ \op{\rho, j'}{}, \op{\rho, j'+1}{} } \not \in {A_{i'}(\V')}$, thus $\edge{ \op{\rho, j'+1}{}, \op{\rho, j'}{} } \in DRO(V'_{i'})$ and $\edge{ \op{\rho, j'+1}{}, \op{\rho, j'}{} } \in {A_{i'}(\V')}$.
				It follows that $\op{\rho, j'+1}{} <_{A_{i'}(\V')} \op{\rho, j'}{} \le_{A_{i'}(\V')} \opw{\operatorname{min}}{i'}$.
				Since $\op{\rho, j'+1}{}$ is a write operation (by Definition \ref{definition DRO Bi}), thus $\edge{\opw{\operatorname{min}}{i'}, \op{\rho, j'+1}{}}$ is a bad write pair (recall that $\edge{ \op{\rho, j'}{}, \op{\rho, j'+1}{} } \in B_{i'}(\V)$ and $C_{i'}( \V, \op{\rho, j'}{}, \op{\rho, j'+1}{} ) = C_{i'}( \V, \opw{\operatorname{min}}{i'}, \op{\rho, j'+1}{} )$).

				\begin{figure}
					\centering
					{
						\begin{tikzpicture}
						\node at (1, 3) {\LARGE $A_{i'}(\V)$};
						
						\node (w3) at (0, 0) {$\opw{3}{}$};
						\node (opj) at (3, 1) {$\op{\rho, j'}{}$};
						\node (opj+1) at (5, 2) {$\op{\rho, j'+1}{}$};
						\node (wpsi1) at (8, 3) {$\opw{\psi, 1}{}$};
						\node (w41) at (8, 1) {$\opw{4}{i'}$};
						\node (wmin) at (10, 1) {$\opw{\operatorname{min}}{i'}$};
						\node (w1i) at (5, 0) {$\opw{1}{i}$};
						
						\draw[-{Latex}, thick] (w3) to[out=0, in=180] (opj);
						\draw[-{Latex[open]}, thick] (opj) to node[above left] {$B_{i'}(\V)$} (opj+1);
						\draw[-{Latex}, thick] (opj) to[out=0, in=135] (wmin);
						\draw[-{Latex}, thick] (opj+1) to[out=0, in=180] (wpsi1);
						\draw[-{Latex[open]}, thick] (w41) to node[below] {$PO$} (wmin);
						\draw[-{Latex}, thick] (w1i) to[out=0, in=180] node[below right] {$SWO(\V)$} (w41);
						\end{tikzpicture}
					}
					\caption{Proof of Claim \ref{C_i sub SWO} Case 4 (ii)}
				\end{figure}
				
				Now, by Observation \ref{observation Ci SWO}, we have that $\opw{1}{i} \le_{SWO(\V)} \opw{4}{i'}$ since $\edge{\opw{3}{}, \opw{4}{i'}} \in C_i^\ell( \V, \opw{1}{i}, \opw{2}{} )$.
				So $\opw{1}{i} \le_{SWO(\V)} \opw{4}{i'} \PO \opw{\operatorname{min}}{i'}$ which implies $\edge{ \opw{1}{i},  \opw{\operatorname{min}}{i'} } \in SWO(\V)$.
				Since both $\edge{\opw{\operatorname{min}}{i'}, \op{\rho, j'+1}{}}$ and $\edge{\opw{1}{i}, \opw{2}{}}$ are bad write pairs, thus this contradicts the maximality of $\opw{1}{i}$.
			\end{enumerate}
		\end{enumerate}
		In all cases, we get a contradiction.
		Therefore $\opw{3}{} \le_{A_{i'}(\V')} \opw{\psi, 1}{}$, as required.
	\end{proof_claim}
	
	\begin{claim} \label{claim A_1 C_i cycle}
		${A_{1}(\V')} \cupdot C_i( \V, \opw{1}{i}, \opw{2}{} )$ has a cycle.
	\end{claim}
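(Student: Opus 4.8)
The plan is to transform the cycle supplied by the bad write pair — a cycle in $A_1(\V) \cupdot C_i(\V, \opw{1}{i}, \opw{2}{})$, or, when $i = 1$, in $\del[1]{A_1(\V) \setminus \set[0]{\edge{\opw{1}{i}, \opw{2}{}}}} \cupdot C_i(\V, \opw{1}{i}, \opw{2}{})$ — into a cycle in $A_1(\V') \cupdot C_i(\V, \opw{1}{i}, \opw{2}{})$. The $C_i(\V, \opw{1}{i}, \opw{2}{})$ edges are literally the same relation in the original and the replay, so they are reused verbatim; the work is to show that every $A_1(\V)$ edge appearing on the cycle survives into $A_1(\V')$. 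The edge $\edge{\opw{1}{i}, \opw{2}{}}$ is the one $A_1(\V)$ edge we cannot preserve — in the case $i = 1$ it is a $DRO(V_1)$ edge the replay reverses, which is exactly why it is deleted from the cycle hypothesis — so I would first note that it does not occur on the cycle, and that in the case $i = 1$ its reverse $\edge{\opw{2}{}, \opw{1}{i}} \in A_1(\V')$ (bad write pair condition~1) is what closes the reconstructed cycle.

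For the preservation step I would take each $A_1(\V)$ edge $\edge{x, y}$ of the cycle and walk a transitive-reduction path $x = \op{\rho, 0}{} \lessdot_{A_1(\V)} \dots \lessdot_{A_1(\V)} \op{\rho, k}{} = y$ in $\hat{A}_1(\V)$, classifying each reduction edge as lying in $R_1$, $PO$, $SWO_1(\V)$, or $B_1(\V)$; these exhaust $\hat{A}_1(\V)$ since $R_1 = \hat{A}_1(\V) \setminus \del{SWO_1(\V) \cupdot PO \cupdot B_1(\V)}$. Three cases are immediate: an $R_1$ edge is recorded, hence lies in $DRO(V'_1) \subseteq A_1(\V')$; a $PO$ edge is execution-independent, hence in $A_1(\V')$; and an $SWO_1(\V)$ edge lies in $SWO(\V) \subseteq SWO(\V') \subseteq A_1(\V')$ by Lemma~\ref{lemma replay preserves SWO and B_i}(a) and the containment noted after Observation~\ref{observation DRO Ai same as SWO}.

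The crux, exactly as in Claim~\ref{C_i sub SWO}, is a $B_1(\V)$ reduction edge $\edge{\op{\rho, j}{}, \op{\rho, j+1}{}}$. Here I would invoke Observation~\ref{observation Ci minimal write} to extract the $PO$-minimal process-$1$ write $\opw{\operatorname{min}}{1}$ with $\op{\rho, j}{} \le_{A_1(\V)} \opw{\operatorname{min}}{1}$ and $C_1^1(\V, \op{\rho, j}{}, \op{\rho, j+1}{}) = C_1^1(\V, \opw{\operatorname{min}}{1}, \op{\rho, j+1}{})$, then split on the $PO$-order of $\opw{\operatorname{min}}{1}$ against the process-$1$ write terminating the segment, mirroring the two subcases of Claim~\ref{C_i sub SWO} Case~4. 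In the first subcase the endpoints turn out to be already ordered by $SWO(\V) \subseteq SWO(\V')$, so the edge is preserved; in the second, a reversal of the edge in $V'_1$ would manufacture a bad write pair $\edge{\opw{\operatorname{min}}{1}, \op{\rho, j+1}{}}$, and Observation~\ref{observation Ci SWO} together with $\opw{1}{i} \le_{SWO(\V)} \opw{\operatorname{min}}{1}$ would make this pair strictly larger under $SWO(\V)$ than $\edge{\opw{1}{i}, \opw{2}{}}$, contradicting the maximality of $\opw{1}{i}$ (the minimality of $\opw{2}{}$ under $V_i$ rules out degenerate coincidences of endpoints). Either way the reduction edge is ordered the same way in $A_1(\V')$, so the whole $\hat{A}_1(\V)$ path, and hence the original $A_1(\V)$ edge, survives.

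I expect the $B_1(\V)$ case to be the main obstacle, precisely because a $B_1(\V)$ edge is a data-race edge whose preservation is not guaranteed a priori — it is what the lemma is ultimately establishing. The delicate part is orchestrating the extremal choice of the bad write pair (maximal $\opw{1}{i}$ under $SWO(\V)$, minimal $\opw{2}{}$ under $V_i$) so that every auxiliary reversal met while rerouting the cycle either collapses into an order already present in $SWO(\V')$ or violates that extremality, and keeping the $i = 1$ self-reference (where $\edge{\opw{1}{i}, \opw{2}{}}$ must be excluded and its reverse reinstated from $A_1(\V')$) consistent with the general argument. Once the cycle is rebuilt in $A_1(\V') \cupdot C_i(\V, \opw{1}{i}, \opw{2}{})$, combining it with Claim~\ref{C_i sub SWO}, which gives $C_i(\V, \opw{1}{i}, \opw{2}{}) \subseteq SWO(\V') \subseteq A_1(\V')$, collapses it to a cycle inside the partial order $A_1(\V')$, the desired contradiction.
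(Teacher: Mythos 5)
Your skeleton --- reuse the $C_i(\V, \opw{1}{i}, \opw{2}{})$ edges verbatim and preserve each $A_1(\V)$ segment of the hypothesized cycle by walking its reduction path in $\trans{A}_1(\V)$, classifying edges into $R_1$, $PO$, $SWO_1(\V)$, $B_1(\V)$, and resolving $B_1(\V)$ edges via Observation~\ref{observation Ci minimal write} and the extremality of the bad pair --- is exactly the paper's Case~2 of the proof of Claim~\ref{claim A_1 C_i cycle}. The gap is the case your uniform treatment cannot handle and on which the paper spends most of its effort: $i = 1$ and the cycle passes through $\opw{1}{i}$ itself as the source of an $A_1(\V)$ segment. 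In the $B_1(\V)$ sub-case there, the chain $\opw{1}{i} \le_{SWO(\V)} \opw{\psi, j}{} \le_{A_1(\V)} \op{\rho, j'}{} \le_{A_1(\V)} \opw{\operatorname{min}}{1}$ can collapse to the equality $\opw{1}{i} = \opw{\psi, j}{} = \op{\rho, j'}{} = \opw{\operatorname{min}}{1}$, and then maximality of $\opw{1}{i}$ yields nothing, since it only constrains writes \emph{strictly} $SWO(\V)$-above $\opw{1}{i}$. Your fallback --- minimality of $\opw{2}{}$ under $V_i$ --- requires the second element of the manufactured bad pair to satisfy $\op{\rho, j'+1}{} <_{V_1} \opw{2}{}$; but the reduction path you are walking ends at an arbitrary cycle vertex $\opw{\psi, j+1}{}$, not at $\opw{2}{}$, and when the $C_i$ edge leaving that vertex has level greater than $1$ there is no way (Definition~\ref{definition DRO Ci}) to relate $\op{\rho, j'+1}{}$ to $\opw{2}{}$ in $V_1$. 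So the ``degenerate coincidence'' is not ruled out, and the preservation argument stalls exactly where the self-reference bites. (Relatedly, your remark that for $i=1$ the reverse edge $\edge{\opw{2}{}, \opw{1}{i}} \in A_1(\V')$ ``closes the reconstructed cycle'' is off: the hypothesized cycle lives in $\del[1]{A_1(\V) \setminus \set[0]{\edge{\opw{1}{i}, \opw{2}{}}}} \cupdot C_i(\V, \opw{1}{i}, \opw{2}{})$ and never uses that edge, so nothing needs closing --- a symptom of the missing case split.)

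The paper's resolution uses an ingredient absent from your proposal: the cycle $\psi$ is not arbitrary but chosen of minimal \emph{level}, i.e.\ minimizing the smallest $\ell$ with $\edge{\opw{\psi, k-1}{}, \opw{1}{i}} \in C_i^{\ell}(\V, \opw{1}{i}, \opw{2}{})$ (after rotating so $\opw{\psi,k}{} = \opw{1}{i}$). This forces the last $C_i$ edge into $\opw{1}{i}$ to be level $1$, whose defining property gives $\opw{\psi, k-1}{} \le_{A_1(\V)} \opw{2}{}$, and hence a $\opw{1}{i}\opw{2}{}$-path inside $A_1(\V) \setminus \set[0]{\edge{\opw{1}{i}, \opw{2}{}}}$. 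The preservation machinery is then run on \emph{that} path rather than on the cycle: since this path does end at $\opw{2}{}$, the minimality of $\opw{2}{}$ legitimately disposes of the degenerate sub-case, and the conclusion $\edge{\opw{1}{i}, \opw{2}{}} \in A_1(\V')$ contradicts bad-pair condition~1 directly. Only in the complementary case (which is the one your argument actually covers, where the degenerate equality would contradict the case hypothesis itself) does the paper preserve the cycle segment-by-segment as you describe. Without the level-minimal selection of $\psi$ and this separate treatment, the claim is not established.
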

	\begin{proof_claim}
		Since $\edge{ \opw{1}{i}, \opw{2}{}}$ is a bad write pair, therefore either
		\begin{enumerate}
			\item $i \ne 1$ and ${A_1(\V)} \cupdot C_i( \V, \opw{1}{i}, \opw{2}{} )$ has a cycle, or
			\item $i = 1$ and $\del[1]{A_1(\V) \setminus \set[0]{ \edge{\opw{1}{i}, \opw{2}{}} } } \cupdot C_i( \V, \opw{1}{i}, \opw{2}{} )$ has a cycle.
		\end{enumerate}
		Consider one such cycle $\psi$ given by $\opw{\psi, 0}{} \le_{A_{1}(\V)} \opw{\psi, 1}{} <_{C_i( \V, \opw{1}{i}, \opw{2}{} )} \opw{\psi, 2}{} \le_{A_{1}(\V)} \dots \le_{A_{1}(\V)} \opw{\psi, k-1}{} <_{C_i( \V, \opw{1}{i}, \opw{2}{} )} \opw{\psi, k}{} = \opw{\psi, 0}{}$.
		If $i \ne 1$, then we let $\psi$ be any cycle.
		However, if $i = 1$, then we select $\psi$ to be a cycle with some particular properties.
		If there exists a cycle $\psi$ such that there is no even $j$ with $\opw{\psi, j}{} = \opw{1}{i}$, then we select that cycle.
		Otherwise, we select $\psi$ as follows. 
		Since we can rotate cycles, we assume WLOG that $\opw{\psi, k}{} = \opw{\psi, 0}{} = \opw{1}{i}$.
		We say that $\psi$ has \emph{level} $\ell$ if $\ell$ is the smallest integer such that $\edge{\opw{\psi, k-1}{}, \opw{1}{i}} \in {C^\ell_i( \V, \opw{1}{i}, \opw{2}{} )}$.
		We select $\psi$ such that it has the lowest level $\ell$. 
		The reason behind this choice will become clearer in case 1 below.

		\begin{enumerate}
			[label=\underline{Case \arabic*:},wide,topsep=0pt,labelindent=0pt]
			\item
			$i = 1$ and there exists an even $j \in \sbr{0, k-1}$ such that $\opw{\psi, j}{} = \opw{1}{i}$.
			WLOG we can assume that $\opw{\psi, k}{} = \opw{\psi, 0}{} = \opw{1}{i}$ since we can rotate the cycle $\psi$.
			We first show that the choice of $\psi$ implies that $\edge{\opw{\psi, k-1}{}, \opw{1}{i}} \in {C^1_i( \V, \opw{1}{i}, \opw{2}{} )}$.
			Suppose for the sake of contradiction that the level of $\psi$ is $\ell > 1$ so that $\ell$ is the smallest integer such that $\edge{\opw{\psi, k-1}{}, \opw{1}{i}} \in {C^\ell_i( \V, \opw{1}{i}, \opw{2}{} )}$.
			By Definition \ref{definition DRO Ci} there exists a $\opw{\psi, k-1}{} \opw{1}{i}$-path $\rho$ in $A_1(\V) \cupdot {C^{\ell-1}_i( \V, \opw{1}{i}, \opw{2}{} )}$.
			Then either $\psi \cupdot \rho$ is a cycle or $\rho$ intersects with $\psi$ other than at endpoints.
			In the first case we have found a cycle with level smaller than $\psi$ and in the second case $\psi \cupdot \rho$ has a cycle that does not use $\opw{1}{i}$.
			In either case we have a contradiction with the choice of $\psi$.

			We now show that there exists a path from $\opw{1}{i}$ to $\opw{2}{}$ in $\del[1]{A_1(\V) \setminus \set[0]{ \edge{\opw{1}{i}, \opw{2}{}} } }$.
			Since $\edge{\opw{\psi, k-1}{}, \opw{1}{i}} \in {C^1_i( \V, \opw{1}{i}, \opw{2}{} )}$ we have that $\opw{\psi, k-1}{} \le_{A_{1}(\V)} \opw{2}{}$ by Definition $\ref{definition DRO Ci}$.
			If $k > 2$, then $\edge{\opw{\psi, k-3}{}, \opw{\psi, k-2}{}} \in C_i( \V, \opw{1}{i}, \opw{2}{} )$ and so $\opw{1}{i} \le_{SWO(\V)} \opw{\psi, k-2}{}$ by Observation \ref{observation Ci SWO}.
			If $k = 2$, then $\opw{1}{i} = \opw{\psi, k-2}{}$.
			In either case we get that $\opw{1}{i} \le_{SWO(\V)} \opw{\psi, k-2}{} \le_{A_1(\V)} \opw{\psi, k-1}{} \le_{A_{1}(\V)} \opw{2}{}$.

			Note that $\opw{1}{i} \ne \opw{2}{}$.
			There are 3 cases to consider
			\begin{enumerate}
				[label=\underline{Case \roman*:},topsep=0pt,labelwidth=\widthof{Case 
				ii:a},leftmargin=!,itemindent=\widthof{aaaaaaa}]
				\item $\opw{\psi, k-2}{} = \opw{\psi, k-1}{} = \opw{2}{}$.
				Then $\edge{ \opw{1}{i}, \opw{2}{} } \in SWO(\V)$ and by Lemma \ref{lemma replay preserves SWO and B_i}(a) $\edge{ \opw{1}{i}, \opw{2}{} } \in SWO(\V')$.
				This contradicts with the assumption that $\edge{ \opw{1}{i}, \opw{2}{} }$ is a bad write pair (which implies $\edge{ \opw{2}{}, \opw{1}{i} } \in A_i(\V')$).
				
				\item $\opw{\psi, k-2}{} = \opw{\psi, k-1}{} \ne \opw{2}{}$.
				Since $\edge{\opw{\psi, k-1}{}, \opw{1}{i}} \in {C_i( \V, \opw{1}{i}, \opw{2}{} )}$, by Definition \ref{definition DRO Ci}, we have that $\opw{1}{i} \ne \opw{\psi, k-1}{}$.
				Therefore $\opw{1}{i} <_{SWO(\V)} \opw{\psi, k-2}{} = \opw{\psi, k-1}{} <_{A_{1}(\V)} \opw{2}{}$ is a $\opw{1}{i} \opw{2}{}$-path in $\del[1]{A_1(\V) \setminus \set[0]{ \edge{\opw{1}{i}, \opw{2}{}} } }$.
				
				\item $\opw{\psi, k-2}{} \ne \opw{\psi, k-1}{}$.
				Since, by construction of $\psi$, there is a $\opw{\psi, k-2}{} \opw{\psi, k-1}{}$-path in $\del[1]{A_1(\V) \setminus \set[0]{ \edge{\opw{1}{i}, \opw{2}{}} } }$, therefore there is a $\opw{1}{i} \opw{2}{}$-path in $\del[1]{A_1(\V) \setminus \set[0]{ \edge{\opw{1}{i}, \opw{2}{}} } }$.
			\end{enumerate}
			
			Therefore, there exists a path from $\opw{1}{i}$ to $\opw{2}{}$ in $\del[1]{A_1(\V) \setminus \set[0]{ \edge{\opw{1}{i}, \opw{2}{}} } }$.
			We now show that $\edge{ \opw{1}{i}, \opw{2}{} } \in A_1(\V')$ which contradicts with the assumption that $\edge{ \opw{1}{i}, \opw{2}{} }$ is a bad write pair (which implies $\edge{ \opw{2}{}, \opw{1}{i} } \in A_i(\V')$).
			Since $\hat{A}_1(\V)$ preserves all paths, we can consider the corresponding $\opw{1}{i} \opw{2}{}$-path $\rho$ in $\hat{A}_1(\V)$ given by $\opw{1}{} = \op{\rho, 0}{} \lessdot_{A_{1}(\V)} \op{\rho, 1}{} \lessdot_{A_{1}(\V)} \op{\rho, 2}{} \lessdot_{A_{1}(\V)} \dots \lessdot_{A_{1}(\V)} \op{\rho, k'}{} = \opw{2}{}$.
			Observe that $\rho$ does not use the $\edge{\opw{1}{i}, \opw{2}{} }$ edge (property of transitive reduction).
			If $\edge{ \op{\rho, j'}{}, \op{\rho, j'+1}{} } \in {A_{1}(\V')}$ for every $j' \in \sbr{0, k'-1}$, then $\edge{ \opw{1}{i}, \opw{2}{} } \in A_{1}(\V')$ which is a contradiction.
			So there exists a $j' \in \sbr{0, k'-1}$ such that $\edge{ \op{\rho, j'}{}, \op{\rho, j'+1}{} } \not \in {A_{1}(\V')}$.
			
			Consider the minimum $j' \in \sbr{0, k'-1}$ such that $\edge{ \op{\rho, j'}{}, \op{\rho, j'+1}{} } \not \in {A_{1}(\V')}$.
			Therefore $\opw{1}{i} \le_{A_{1}(\V')} \op{\rho, j'}{}$.
			Similar to proof of Claim \ref{C_i sub SWO}, the interesting case is when $\edge{ \op{\rho, j'}{}, \op{\rho, j'+1}{} } \in B_{1}(\V)$.
			Then by Definition \ref{definition DRO Bi} we have that $\edge{ \op{\rho, j'}{}, \op{\rho, j'+1}{} } \in DRO(V_1)$ and that $C_{1}( \V, \op{\rho, j'}{}, \op{\rho, j'+1}{} )$ is non-empty.
			Thus, from Observation \ref{observation Ci minimal write}, there exists $\opw{\operatorname{min}}{1} \in (\operatorname{w}, 1, *, *)$ such that $C_{1}( \V, \op{\rho, j'}{}, \op{\rho, j'+1}{} ) = C_{1}( \V, \opw{\operatorname{min}}{1}, \op{\rho, j'+1}{} )$.
			Now $\op{\rho, j'}{} \le_{A_{1}(\V)} \opw{\operatorname{min}}{1}$, by Observation \ref{observation Ci minimal write}.
			$\op{\rho, j'}{}$ is either a read or a write operation.
			If $\op{\rho, j'}{}$ is a read operation, then $\op{\rho, j'}{} \in (\operatorname{r}, 1, *, *)$ and so $\edge{ \op{\rho, j'}{}, \opw{\operatorname{min}}{1} } \in {PO}$.
			If $\op{\rho, j'}{}$ is a write operation, then $\op{\rho, j'}{} \in (\operatorname{w}, *, *, *)$ and so $\op{\rho, j'}{} \le_{SWO(\V)} \opw{\operatorname{min}}{1}$ and by Lemma \ref{lemma replay preserves SWO and B_i}(a) $\op{\rho, j'}{} \le_{SWO(\V')} \opw{\operatorname{min}}{1}$.
			Therefore, in either case $\op{\rho, j'}{} \le_{A_{1}(\V')} \opw{\operatorname{min}}{1}$.
			Now since $\edge{ \op{\rho, j'}{}, \op{\rho, j'+1}{} } \in DRO(V_1)$ and $\edge{ \op{\rho, j'}{}, \op{\rho, j'+1}{} } \not \in {A_{1}(\V')}$, thus $\edge{ \op{\rho, j'+1}{}, \op{\rho, j'}{} } \in DRO(V'_1)$ and $\edge{ \op{\rho, j'+1}{}, \op{\rho, j'}{} } \in {A_{1}(\V')}$.
			It follows that $\op{\rho, j'+1}{} <_{A_{1}(\V')} \op{\rho, j'}{} \le_{A_{1}(\V')} \opw{\operatorname{min}}{1}$.
			Since $\op{\rho, j'+1}{}$ is a write operation (by Definition \ref{definition DRO Bi}), thus $\edge{\opw{\operatorname{min}}{1}, \op{\rho, j'+1}{}}$ is a bad write pair (recall that $\edge{ \op{\rho, j'}{}, \op{\rho, j'+1}{} } \in B_{1}(\V)$ and $C_{1}( \V, \op{\rho, j'}{}, \op{\rho, j'+1}{} ) = C_{1}( \V, \opw{\operatorname{min}}{1}, \op{\rho, j'+1}{} )$).
			
			\begin{figure}
				\centering
				{
					\begin{tikzpicture}
					\node at (1, 3) {\LARGE $A_1(\V)$};
					
					\node (w1i) at (0, 0) {$\opw{1}{i}$};
					\node (opj) at (3, 1) {$\op{\rho, j'}{}$};
					\node (opj+1) at (5, 2) {$\op{\rho, j'+1}{}$};
					\node (wmin) at (6, 0.5) {$\opw{\operatorname{min}}{1}$};
					\node (w2) at (8, 3) {$\opw{2}{}$};
					
					\draw[-{Latex}, thick] (w1i) to[out=0, in=180] (opj);
					\draw[-{Latex[open]}, thick] (opj) to node[above left] {$B_1(\V)$} (opj+1);
					\draw[-{Latex}, thick] (opj) to[out=0, in=180] (wmin);
					\draw[-{Latex}, thick] (opj+1) to[out=0, in=180] (w2);
					\end{tikzpicture}
				}
				\caption{Proof of Claim \ref{claim A_1 C_i cycle} Case 1}
			\end{figure}
			
			Now,
			\begin{itemize}
				\item $\opw{1}{i} \le_{A_{1}(\V')} \op{\rho, j'}{}$ by choice of $j'$, and
				\item $\op{\rho, j'}{} \le_{A_{1}(\V)} w^{\operatorname{min}}_{1}$ by Observation \ref{observation Ci minimal write}.
			\end{itemize}
			Therefore we have that $\opw{1}{i} \le_{SWO(\V)} w^{\operatorname{min}}_{1}$.
			There are two cases to consider.
			\begin{enumerate}
				[label=\underline{Case \roman*:},topsep=0pt,labelwidth=\widthof{Case 
				ii:a},leftmargin=!,itemindent=\widthof{aaaaaaa}]
				\item $\edge{ \opw{1}{i}, w^{\operatorname{min}}_{1} } \in {SWO(\V)}$.
				This contradicts the maximality of $\opw{1}{i}$ since both $\edge{\opw{\operatorname{min}}{1}, \op{\rho, j'+1}{}}$ and $\edge{\opw{1}{i}, \opw{2}{}}$ are bad write pairs.
				
				\item $\opw{1}{i} = \op{\rho, j'}{} = w^{\operatorname{min}}_{1}$.
				Since $\rho$ is a path in $\del[1]{A_1(\V) \setminus \set[0]{ \edge{\opw{1}{i}, \opw{2}{}} } }$ and $\opw{1}{i} = \op{\rho, j'}{}$, thus $\op{\rho, j' + 1}{} \ne \opw{2}{}$ and by the minimality of $\opw{2}{}$, we have that $\edge{ w^{\operatorname{min}}_{1}, \op{\rho, j' + 1}{} }$ is not a bad write pair, a contradiction.
			\end{enumerate}
			
			\item
			Either $i \ne 1$ or there does not exist an even $j \in \sbr{0, k-1}$ such that $\opw{\psi, j}{} = \opw{1}{i}$.
			We show that for every even $j \in \sbr{0, k-1}$ we have that $\opw{\psi, j}{} \le_{A_{1}(\V')} \opw{\psi, j+1}{}$.
			It follows that ${A_{1}(\V')} \cupdot C_i( \V, \opw{1}{i}, \opw{2}{} )$ has a cycle and we are done.
			
			Consider any even $j \in \sbr{0, k-1}$.
			WLOG $j = 0$ since we can rotate the cycle.
			If $\opw{\psi, 0}{} = \opw{\psi, 1}{}$, then we are done.
			So assume $\opw{\psi, 0}{} <_{A_{1}(\V)} \opw{\psi, 1}{}$.
			Suppose for the sake of contradiction that $\opw{\psi, 0}{} \not <_{A_{1}(\V)} \opw{\psi, 1}{}$.
			Consider a path $\rho$ from $\opw{\psi, 0}{}$ to $\opw{\psi, 1}{}$ in $\hat{A}_{1}$ given by $\opw{\psi, 0}{} = \op{\rho, 0}{} \lessdot_{A_{1}(\V)} \op{\rho, 1}{} \lessdot_{A_{1}(\V)} \op{\rho, 2}{} \lessdot_{A_{1}(\V)} \dots \lessdot_{A_{1}(\V)} \op{\rho, k'}{} = \opw{\psi, 1}{}$.
			Note that each operation in the path is in the view ${V_{1}}$ and hence in $V'_{1}$.
			If $\edge{ \op{\rho, j'}{}, \op{\rho, j'+1}{} } \in {A_{1}(\V')}$ for every $j' \in \sbr{0, k'-1}$, then $\edge{ \opw{\psi, 0}{}, \opw{\psi, 1}{} } \in A_{1}(\V')$ which is a contradiction.
			So there exists a $j' \in \sbr{0, k'-1}$ such that $\edge{ \op{\rho, j'}{}, \op{\rho, j'+1}{} } \not \in {A_{1}(\V')}$.
			
			Consider the minimum $j' \in \sbr{0, k'-1}$ such that $\edge{ \op{\rho, j'}{}, \op{\rho, j'+1}{} } \not \in {A_{1}(\V')}$.
			Therefore $\opw{\psi, 0}{} \le_{A_{1}(\V')} \op{\rho, j'}{}$.
			Similar to proof of Claim \ref{C_i sub SWO}, the interesting case is when $\edge{ \op{\rho, j'}{}, \op{\rho, j'+1}{} } \in B_{1}(\V)$.
			Then by Definition \ref{definition DRO Bi} we have that $\edge{ \op{\rho, j'}{}, \op{\rho, j'+1}{} } \in DRO(V_1)$ and that $C_{1}( \V, \op{\rho, j'}{}, \op{\rho, j'+1}{} )$ is non-empty.
			Thus, from Observation \ref{observation Ci minimal write}, there exists $\opw{\operatorname{min}}{1} \in (\operatorname{w}, 1, *, *)$ such that $C_{1}( \V, \op{\rho, j'}{}, \op{\rho, j'+1}{} ) = C_{1}( \V, \opw{\operatorname{min}}{1}, \op{\rho, j'+1}{} )$.
			Similar to Case 1, we get that $\edge{\opw{\operatorname{min}}{1}, \op{\rho, j'+1}{}}$ is a bad write pair.
			
			Now,
			\begin{itemize}
				\item $\opw{1}{i} \le_{SWO(\V)} \opw{\psi, 0}{}$, by Observation \ref{observation Ci SWO} since $\opw{\psi, k}{} = \opw{\psi, 0}{}$ and $\edge{\opw{\psi, k-1}{}, \opw{\psi, k}{}} \in C_i[\V, \opw{1}{i}, \op{2}{}]$,
				\item $\opw{\psi, 0}{} \le_{A_{1}(\V')} \op{\rho, j'}{}$ by choice of $j'$, and
				\item $\op{\rho, j'}{} \le_{A_{1}(\V)} w^{\operatorname{min}}_{1}$ by Observation \ref{observation Ci minimal write}.
			\end{itemize}
			Therefore we have that $\opw{1}{i} \le_{SWO(\V)} w^{\operatorname{min}}_{1}$.
			There are two cases to consider.
			\begin{enumerate}
				[label=\underline{Case \roman*:},topsep=0pt,labelwidth=\widthof{Case 
				ii:a},leftmargin=!,itemindent=\widthof{aaaaaaa}]
				\item $\opw{1}{i} <_{SWO(\V)} w^{\operatorname{min}}_{1}$.
				This contradicts the maximality of $\opw{1}{i}$ since both $\edge{\opw{\operatorname{min}}{1}, \op{\rho, j'+1}{}}$ and $\edge{\opw{1}{i}, \opw{2}{}}$ are bad write pairs.
				
				\item $\opw{1}{i} = \opw{\psi, 0}{} = \op{\rho, j'}{} = 
				w^{\operatorname{min}}_{1}$.
				Then $i = 1$ and $\opw{\psi, 0}{} = w^{\operatorname{min}}_{1}$.
				This contradicts the assumption that either $i \ne 1$ or there does not exist an even $j \in \sbr{0, k-1}$ such that $\opw{\psi, j}{} = \opw{1}{i}$.
			\end{enumerate}
			In both cases, we get a contradiction.
			Therefore for every even $j \in \sbr{0, k-1}$ we have that $\opw{\psi, j}{} \le_{A_{1}(\V')} \opw{\psi, j+1}{}$ and so ${A_{1}(\V')} \cupdot C_i( \V, \opw{1}{i}, \opw{2}{} )$ has a cycle, as required.
		\end{enumerate}
	\end{proof_claim}
\end{proof_of}
\\

\begin{proof_of}{Theorem \ref{theorem DRO RnRStrongCausalConsistencySufficiency}}
	Consider any arbitrary set of views $\V'$ that certify a strongly causal consistent replay to be valid for $\R$.
	We show that for any process $i$ and any two operations $\op{1}{}, \op{2}{} \in (*, *, *, *)$ such that $\edge{ \op{1}{}, \op{2}{} } \in {DRO(V_i)}$ we must have that $\edge{ \op{1}{}, \op{2}{} } \in {V'_i}$.
	Consider any arbitrary process $i$.
	We have that
	\begin{itemize}
		\item $V'_i$ respects ${R_i}$, since $V'_i$ certifies a replay to be valid for $\R$;
		\item $V'_i$ respects $SWO_i(\V) \cup \del{PO | ( *, i, *, * ) \cup ( w, *, *, * )} \cup B_i(\V)$ due to consistency and Lemma \ref{lemma replay preserves SWO and B_i}.
	\end{itemize}
	Consider the $\op{1}{} \op{2}{}$-path $\rho$ in $\hat{A}_i$ given by $\op{1}{} = \op{\rho, 0}{} \lessdot_{A_{i}(\V)} \op{\rho, 1}{} \lessdot_{A_{i}(\V)} \op{\rho, 2}{} \lessdot_{A_{i}(\V)} \dots \lessdot_{A_{i}(\V)} \op{\rho, k}{} = \op{2}{}$.
	By construction of $\hat{A}_i$, each edge is either a $R_i$ edge or a ${PO}$ edge or a $SWO_i(\V)$ edge or a $B_i(\V)$ edge.
	Thus $\op{1}{} = \op{\rho, 0}{} <_{V'_i} \op{\rho, 1}{} <_{V'_i} \op{\rho, 2}{} <_{V'_i} \dots <_{V'_i} \op{\rho, k}{} = \op{2}{}$ and $\edge{ \op{1}{}, \op{2}{} } \in {V'_i}$, as required.
\end{proof_of}
\\

We extend Definition \ref{definition LRO strong causal order} of strong causal order to be applicable to a set of partial orders as follows.

\begin{definition} \label{definition extended strong causal order}
	Given a set of partial orders $\pV = \set{ U_i }_{i \in P}$, two writes, $\opw{1}{} \in (\operatorname{w}, *, *, *)$ and $\opw{2}{i} \in (\operatorname{w}, i, *, *)$, are ordered $\edge{ \opw{1}{}, \opw{2}{i} } \in { SCO( U_i ) }$, if $\edge{ \opw{1}{}, \opw{2}{i} } \in  { U_i } $.
	Furthermore, $SCO( \pV ) = \bigcup_{i \in P} SCO( U_i )$.
\end{definition}

\begin{lemma} \label{lemma sufficient to reproduce DRO}
	Given a set of partial orders $\pV = \set{U_i}_{i \in P}$ such that for each process $i \in P$, $U_i$ is a partial order on $( *, i, *, * ) \cup ( w, *, *, * )$ that satisfies transitivity and respects $SCO(\pV) \cup \del{PO | ( *, i, *, * ) \cup ( w, *, *, * )}$.
	Then there exists a strongly causal consistent execution $\V = \set{V_i}_{i \in P}$ such that each $V_i \supseteq U_i$.
\end{lemma}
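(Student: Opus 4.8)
The plan is to reduce everything to ordering the writes first and then to slot the reads back in. Write $W := (\operatorname{w},*,*,*)$ for the set of all writes and set $G := SCO(\pV)$, the relation on $W$ induced by $\pV$. Since every $U_i$ respects $G$ and is acyclic, any cycle in $G$ would be a cycle in each $U_i$; hence $G$ is a partial order. The first step I would record is the key identity that, for a write $w$ owned by process $i$, the $G$-predecessors of $w$ coincide exactly with the $U_i$-predecessors of $w$ in $U_i \mid W$. One inclusion is immediate, since any $U_i$-edge into an $i$-owned write is by definition an $SCO(U_i)$-edge and hence a $G$-edge; for the converse, the final edge of any $G$-chain into $w$ must come from $SCO(U_i)$ (because $w$ is $i$-owned), and then $U_i \supseteq G$ together with transitivity of $U_i$ pushes the whole chain into $U_i$.

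Next, for each process $i$ I would build the write-restriction $V_i \mid W$ of the target view so that it is \emph{tight}: each $i$-owned write $w$ has precisely its $G$-predecessors before it and every other write after it. Concretely, let $D_i$ be $U_i \mid W$ augmented with an edge $(w, w')$ for every $i$-owned write $w$ and every write $w'$ that is neither $w$ nor a $G$-predecessor of $w$; any linear extension of $D_i$ is the desired $V_i \mid W$. The heart of the proof is that $D_i$ is acyclic, which I would argue by contradiction. A cycle must use at least one augmenting edge $(w, w')$ with $w'$ not a $G$-predecessor of $w$; following the returning path $w' \rightsquigarrow w$ in $D_i$, a backward induction shows every vertex on it is a $G$-predecessor of $w$. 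The inductive step splits on whether the incoming edge is a $U_i \mid W$-edge or an augmenting edge, in each case invoking the predecessor identity above and the fact that two writes on the same process are $PO$-comparable and therefore $G$-comparable. The conclusion $w' \in$ $G$-predecessors of $w$ contradicts the choice of the augmenting edge. I expect this acyclicity to be the main obstacle, and it is exactly where the hypothesis that each $U_i$ respects $SCO(\pV)$ is used in full.

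With the tight write-orders in place, $SCO(\V)$ can be computed directly: for an $i$-owned write $w$ its $SCO(\V)$-predecessors are its $V_i \mid W$-predecessors, which by tightness are exactly its $G$-predecessors. Summing over all writes gives $SCO(\V) = G$; and since each $V_i \mid W \supseteq U_i \mid W \supseteq G$, every view respects $SCO(\V)$. Note that different views may legitimately disagree on incomparable cross-process writes (the configuration of Figure \ref{figure causal vs strong}), and tightness is precisely what prevents those disagreements from spawning extra $SCO(\V)$-edges.

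Finally I would reinsert the reads. Because $SCO$ orders only writes, it suffices to find, for each $i$, a total order on $(*,i,*,*) \cup (\operatorname{w},*,*,*)$ extending $U_i$ whose restriction to $W$ is the $V_i \mid W$ just constructed; such an order exists provided $U_i \cup \del{V_i \mid W}$ is acyclic. I would verify this by contracting every maximal $U_i$-subpath of a putative cycle: each contracted step runs between two writes and so lies in $U_i \mid W \subseteq V_i \mid W$, forcing the entire cycle into the total order $V_i \mid W$, which is impossible. Extending the resulting acyclic relation to a total order $V_i$ gives $V_i \supseteq U_i$, respects $PO$ and $SCO(\V) = G$, and makes each read return the last value written before it. Hence $\V = \set{V_i}_{i \in P}$ is a collection of views explaining a strongly causal consistent execution with $V_i \supseteq U_i$ for all $i$, as required.
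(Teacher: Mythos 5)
Your proposal is correct, but it follows a genuinely different route from the paper's proof. The paper builds $\V$ \emph{iteratively}: it scans unordered pairs of writes $\opw{1}{1}, \opw{2}{2}$ one at a time, has process $1$ place its own write first, process $2$ place its own write first, and every third process $k$ choose whichever orientation creates no new strong causal edge (a case analysis shows such an orientation always exists); reads are then appended after unrelated writes, and the key invariant is that $SCO$ never grows at any step, so $SCO(\V) = SCO(\pV)$ only emerges implicitly at the end. You instead construct each view \emph{in one shot}: you identify the target order $G = SCO(\pV)$ up front, prove the predecessor identity (the $G$-predecessors of an $i$-owned write are exactly its $U_i$-predecessor writes), and then force "tightness" via the augmented relation $D_i$, concentrating all the difficulty into a single acyclicity argument; the verification that $SCO(\V) = G$ then falls out immediately, since tightness guarantees that nothing but $G$-predecessors can precede an $i$-owned write, so transitive effects of arbitrarily ordered third-party writes can never manufacture new $SCO$ edges --- precisely the hazard the paper must handle with its third-party case analysis. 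Both proofs rest on the same underlying intuition (each process pushes its own writes as early as possible), but your version buys a cleaner global picture --- the views are built independently of one another, and the constructed execution visibly achieves the minimum possible strong causal order $G$ --- at the price of the auxiliary predecessor identity and the bookkeeping with augmenting edges, while the paper's greedy pairwise scheme is more elementary per step but leaves the final structure implicit. Your treatment of reads (acyclicity of $U_i \cup (V_i \mid W)$ by contracting maximal $U_i$-subpaths) is also sound and slightly different from the paper's rule of simply placing unrelated writes before reads; both are adequate.
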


\begin{proof}
	We extend $\pV$ to $\V$ iteratively.
	Let $U_i^t$ be the partial order after $t$ steps.
	Initially, $U_i^0 = U_i$.
	After some finite number of steps, $U_i^t$ will be a total order and we set $V_i = U_i^t$ at that step.
	We first order all the write operations for each process $i$ and then add edges for reads appropriately.
	At each step $t$, we consider two write operations $w_1 \in (\operatorname{w}, 1, *, *)$ and $\opw{2}{2} \in (\operatorname{w}, 2, *, *)$.
	\begin{enumerate}
		\item
		If $\opw{1}{1}, \opw{2}{2}$ are not related in $U_1^{t-1}$, then we set $U_1^t = U_1^{t-1} \cup \set[0]{ \edge{\opw{1}{1}, \opw{2}{2}} }$.
		
		\item
		If $\opw{1}{1}, \opw{2}{2}$ are not related in $U_2^{t-1}$, then we set $U_2^t = U_2^{t-1} \cup \set{ \edge{\opw{2}{2}, \opw{1}{1}} }$.
		
		\item
		For every process $k \ne 1, 2$, if $\opw{1}{1}, \opw{2}{2}$ are not related in $U_k^{t-1}$, then we do the following.
		If $SCO( U_k^{t-1} ) \cup \set{ \edge{\opw{1}{1}, \opw{2}{2}} } ] = SCO( U_k^{t-1} )$, then we set $U_k^{t} = U_k^{t-1} \cup \set{ \edge{\opw{1}{1}, \opw{2}{2}} }$. Otherwise, we set $U_k^{t} = U_k^{t-1} \cup \set{ \edge{\opw{2}{2}, \opw{1}{1}} }$.
	\end{enumerate}
	After processing all pairs of write operations, we add edges for read operations as follows.
	If both operations are reads, then they are already related by $PO$.
	For each read $r \in (\operatorname{r}, i, *, *)$ and write $w \in (\operatorname{w}, *, *, *)$ such that $r, w$ are not related in $U_i^{t-1}$, set $U_i^t = U_i^{t-1} \cup \set{\edge{w, r}}$.
	At the end we set $V_i = U_i^t$ for each process $i$.

	We now show the correctness of the above procedure.
	First observe that we add two type of edges 1) for operations that are not already related, and 2) the edges implied by transitivity.
	Therefore, each $V_i$ is acyclic.
	Thus, each $V_i$ is a total order on $(*, i, *, *) \cup (\operatorname{w}, *, *, *)$ by construction.
	Now note that each $V_i \supseteq U^0_i \supseteq \del{PO | ( *, i, *, * ) \cup ( w, *, *, * )}$ by construction.
	So we show that each $V_i$ respects $SCO(\V)$.
	We proceed via induction and show that at each step $t$, each $U_i^t$ respects $SCO(\V^t)$.
	For the base case $U^0_i$ respects $SCO(\pV^0)$ by construction.
	For the inductive step, we show that $SCO(\pV^t) = SCO(\pV^{t-1})$.
	This implies the result since each $V_i^t \supseteq U_i^{t-1}$ and $U_i^{t-1}$ respects $SCO(\pV^{t-1})$ by the inductive hypothesis.

	If at step $t$ we considered two write operations, then we have $3$ cases to consider.
	\begin{enumerate}
		\item
		$\opw{1}{1} \in (\operatorname{w}, 1, *, *)$ and $\opw{2}{2} \in (\operatorname{w}, 2, *, *)$ are not related in $U_1^{t-1}$ and we set $U_1^t = U_1^{t-1} \cup \set{ \edge{\opw{1}{1}, \opw{2}{2}} }$.
		We show, via contradiction, that $SCO(U_1^t) \setminus SCO(U_1^{t-1})$ is empty and so there are no new $SCO$ edges in this case.
		Suppose $\del{w', w''} \in SCO(U_1^t) \setminus SCO(U_1^{t-1})$.
		Then $w' \le_{U_1^{t-1}} \opw{1}{1}$, $\opw{2}{2} <_{U_1^{t-1}} w''$, and $w'' \in (\operatorname{w}, 1, *, *)$.
		Therefore, $\opw{1}{1}$ and $w''$ are related by $\del{PO | ( *, i, *, * ) \cup ( w, *, *, * )}$.
		If $w'' \PO \opw{1}{1}$, then $w'' <_{U_1^{t-1}} \opw{1}{1}$ and so $\opw{2}{2} <_{U_1^{t-1}} w'' <_{U_1^{t-1}} \opw{1}{1}$, which contradicts the initial assumption that $\opw{1}{1}$ and $\opw{2}{2}$ are not related in $U_1^{t-1}$.
		Thus $\opw{1}{1} \PO w''$.
		This implies that $\opw{1}{1} <_{U_1^{t-1}} w''$ and so $w' \le_{U_1^{t-1}} \opw{1}{1} <_{U_1^{t-1}} w''$.
		Thus $\del{w', w''} \in SCO(U_1^{t-1})$, which contradicts the initial assumption that $\del{w', w''} \in SCO(U_1^t) \setminus SCO(U_1^{t-1})$.

		\item
		$\opw{1}{1} \in (\operatorname{w}, 1, *, *)$ and $\opw{2}{2} \in (\operatorname{w}, 2, *, *)$ are not related in $U_1^{t-1}$ and we set $U_2^t = U_2^{t-1} \cup \set{ \edge{\opw{2}{2}, \opw{1}{1}} }$.
		This is the same as Case 1.

		\item
		For every process $k \ne 1, 2$ such that $\opw{1}{1}, \opw{2}{2}$ are not related in $U_k^{t-1}$, we do the following.
		If $SCO( U_k^{t-1} \cup \set{ \edge{\opw{1}{1}, \opw{2}{2}} } ) = SCO( U_k^{t-1} )$, then we set $U_k^{t} = U_k^{t-1} \cup \set{ \edge{\opw{1}{1}, \opw{2}{2}} }$. Otherwise, we set $U_k^{t} = U_k^{t-1} \cup \set{ \edge{\opw{2}{2}, \opw{1}{1}} }$.
		We proceed via contradiction to show that either $SCO( U_k^{t-1} \cup \set{ \edge{\opw{1}{1}, \opw{2}{2}} } ) \setminus SCO(U_k^{t-1})$ or $SCO( U_k^{t-1} \cup \set{ \edge{\opw{2}{2}, \opw{1}{1}} } ) \setminus SCO(U_k^{t-1})$ is empty and so there are no new $SCO$ edges in this case.
		Suppose $\edge{\opw{3}{}, \opw{4}{}} \in SCO( U_k^{t-1} \cup \set{ \edge{\opw{1}{1}, \opw{2}{2}} } ) \setminus SCO(U_k^{t-1})$ and $\edge{\opw{5}{}, \opw{6}{}} \in SCO( U_k^{t-1} \cup \set{ \edge{\opw{2}{2}, \opw{1}{1}} } ) \setminus SCO(U_k^{t-1})$.
		It follows that $\opw{4}{}, \opw{6}{} \in (\operatorname{w}, k, *, *)$ and therefore, are related by $\del{PO | ( *, k, *, * ) \cup ( w, *, *, * )}$.
		There are two cases to consider.
		\begin{enumerate}[label=\roman*),topsep=0pt]
			\item
			$\opw{6}{} \le_{PO} \opw{4}{}$.
			Since $\edge{\opw{3}{}, \opw{4}{}} \in SCO( U_k^{t-1} \cup \set{ \edge{\opw{1}{1}, \opw{2}{2}} } ) \setminus SCO(U_k^{t-1})$, so $\opw{3}{} \le_{U_k^{t-1}} \opw{1}{1}$ and $\opw{2}{2} \le_{U_1^{t-1}} \opw{4}{}$.
			Since $\edge{\opw{5}{}, \opw{6}{}} \in SCO( U_k^{t-1} \cup \set{ \edge{\opw{2}{2}, \opw{1}{1}} } ) \setminus SCO(U_k^{t-1})$, so $\opw{5}{} \le_{U_k^{t-1}} \opw{2}{2}$ and $\opw{1}{1} \le_{U_1^{t-1}} \opw{6}{}$.
			Therefore $\opw{3}{} \le_{U_k^{t-1}} \opw{1}{1} \le_{U_1^{t-1}} \opw{6}{} \le_{PO} \opw{4}{}$ and thus $\edge{\opw{3}{}, \opw{4}{}} \in SCO(U_k^{t-1})$.
			This contradicts the initial assumption that $\edge{\opw{3}{}, \opw{4}{}} \in SCO( U_k^{t-1} \cup \set{ \edge{\opw{1}{1}, \opw{2}{2}} } ) \setminus SCO(U_k^{t-1})$.
			
			\item
			$\opw{4}{} \le_{PO} \opw{6}{}$.
			This is the same as Case i with the role of $\opw{4}{}$ and $\opw{6}{}$ switched.
		\end{enumerate}
	\end{enumerate}
	
	Now, if step $t$ considered read operations, then all write operations have already been ordered by each $V_i^{t-1}$ and therefore $SCO(\V^t) = SCO(\V^{t-1})$.
	This completes the proof that $SCO(\V^t) = SCO(\V^{t-1})$.
\end{proof}
\\

\begin{proof_of} {Theorem \ref{theorem DRO RnRStrongCausalConsistencyNecessity}}
	Assume for the sake of contradiction that there exists a good record $\R$ of $\V$ such that there exists a process, WLOG process $1$, and two operations $\op{1}{}, \op{2}{} \in (*, 1, *, *) \cup (*, \operatorname{w}, *, *)$ such that
	$\edge{ \op{1}{}, \op{2}{} } \in \trans{A}_1(\V) \setminus PO \cupdot SWO_1( \V ) \cupdot B_1(\V)$ and $\edge{ \op{1}{}, \op{2}{} } \not \in R_1$.
	%
	Then, we construct a set of views $\V'$, such that $DRO(V'_1) \ne DRO(V_1)$ but $\V'$ certifies a strongly causal consistent replay to be valid for $\R$, i.e. $\V'$ explains a strongly causal execution and extends the record $\R$.
	This violates the definition of a good record.
	We use Lemma \ref{lemma sufficient to reproduce DRO} and construct, for each process $i$, a partial order $U_i \supseteq R_i \cup SCO(\V) \cup \del{PO | ( *, i, *, * ) \cup ( w, *, *, * )}$ such that $\edge{ \op{2}{}, \op{1}{} } \in U_1$.
	From Lemma \ref{lemma sufficient to reproduce DRO}, it follows that there exists a strongly causal execution $\V'$ such that for each process $i$, $V_i \supseteq U_i \supseteq R_i$ (and therefore a replay of $\R$) and $\edge{ \op{2}{}, \op{1}{} } \in V_1$.
	Observe that since $\edge{ \op{1}{}, \op{2}{} } \in \trans{A}_1(\V) \setminus \del{PO \cupdot SWO_1( \V ) \cupdot B_1(\V)}$, therefore $\edge{ \op{1}{}, \op{2}{} } \in DRO(V_1)$, $\edge{ \op{2}{}, \op{1}{} } \in DRO(V'_1)$, and so $DRO(V'_1) \ne DRO(V_1)$.
	
	We construct $\pV$ from $\A(\V)$ as follows.
	We slightly abuse notation to set $C_1(\V, \op{1}{}, \op{2}{}) = \emptyset$ if $\op{2}{}$ is a read operation (recall that $C_1(\V, \op{1}{}, \op{2}{})$ is only defined when $\op{2}{}$ is a write operation in Definition \ref{definition DRO Ci}).
	Let $U_1 := \del[1]{ A_1(\V) \setminus \set{\edge{ \op{1}{}, \op{2}{} }} } \cup \set{\edge{ \op{2}{}, \op{1}{} }} \cup C_1(\V, \op{1}{}, \op{2}{})$.
	For each $i > 1$, set $U_i = A_i(\V) \cup C_1(\V, \op{1}{}, \op{2}{})$ (see Definition \ref{definition DRO Ci}).
	
	For correctness we have to show that each $U_i$
	\begin{enumerate}[topsep=0pt]
		\item is a partial order, and
		\item respects $SCO(\pV)$.
	\end{enumerate}

	We first consider the case when $\op{2}{}$ is a read operation.
	We claim that $SCO(\pV) \setminus SCO( \A( \V ) )$ is empty.
	If not, then there exist two write operation $\opw{3}{} \in ( \operatorname{w}, *, *, * )$ and $\opw{4}{1} \in ( \operatorname{w}, 1, *, * )$ such that $\opw{3}{} \le_{A_{1}(\V)} \op{2}{}$ and $\op{1}{} \le_{A_{1}(\V)} \opw{4}{1}$.
	Since $\op{2}{} \in (\operatorname{r}, 1, *, * )$, therefore either $\edge{ \op{2}{}, \opw{4}{1} } \in PO$ or $\edge{ \opw{4}{1}, \op{2}{} } \in PO$.
	In the first case $\edge{ \op{3}{}, \opw{4}{1} } \in SCO( \A( \V ) )$.
	In the second case if $\op{1}{} = \opw{4}{1}$, then $\edge{ \opw{1}{}, \op{2}{} } \in PO$.
	So $\op{1}{} <_{A_1( \V )} \opw{4}{1} <_{ A_1( \V ) } \op{2}{}$ which contradicts the fact that $\edge{ \op{1}{}, \op{2}{} } \in \trans{A}_1(\V)$.
	In either case, we have the desired contradiction.
	Therefore, each $U_i$ is a partial order that respects $SCO(\pV)$.
	
	We now consider the case when $\op{2}{}$ is a write operation.
	Since $\edge{\op{1}{}, \op{2}{}} \not \in B_i(\V)$, therefore each $U_i$ is a partial order by Definition \ref{definition DRO Bi}.
	So we show that for any process $i$, $SCO(U_i) \setminus SCO(A_i(\V)) \subseteq C_1( \V, \op{1}{}, \op{2}{} )$.
	Consider any $\edge{\opw{3}{}, \opw{4}{i}} \in SCO(U_i) \setminus SCO(A_i(\V))$.
	Then there exists a $\opw{3}{} \opw{4}{i}$-path $\rho$ in $A_i \cupdot C_1( \V, \op{1}{}, \op{2}{} )$, given by $\opw{3}{} = \op{\rho, 0}{} \le_{A_{i}(\V)} \op{\rho, 1}{} <_{C_1( \V, \op{1}{}, \op{2}{} )} \op{\rho, 2}{} \le_{A_{i}(\V)} \dots \op{\rho, k-1}{} \le_{A_{i}(\V)} \op{\rho, k}{} = \opw{4}{i}$ with $k > 2$.
	It follows by Definition \ref{definition DRO Ci} that $\edge{\opw{3}{}, \opw{4}{i}} \in C_1( \V, \op{1}{}, \op{2}{} )$.

	So we have shown that $\pV$ meets the conditions of Lemma \ref{lemma sufficient to reproduce DRO} and therefore, we can find a strongly causal consistent replay $\V'$ of $\R$ such that $DRO(V'_1) \ne DRO(V_1)$.
	This contradicts the initial assumption that $\R$ is a good record.
	%
\end{proof_of}
\end{document}